\documentclass[11pt]{article}

\usepackage[margin=1in]{geometry}
\usepackage{amsmath,amssymb,amsthm}
\usepackage{bbm}
\usepackage{newtxtext}
\usepackage[subscriptcorrection]{newtxmath}
\usepackage{booktabs}
\usepackage{graphicx}
\usepackage[round,authoryear]{natbib}
\usepackage{bibunits}
\usepackage{microtype}
\usepackage{hyperref}
\hypersetup{hidelinks}
\graphicspath{{Figures/}}
\defaultbibliographystyle{plainnat}
\defaultbibliography{copula}

\def\T{{ \mathrm{\scriptscriptstyle T} }}

\newcommand{\ind}{\perp\!\!\!\perp}

\newcommand{\ATE}{\mbox{\tiny{ATE}}}

\newtheorem{theorem}{Theorem}

\newtheorem{proposition}{Proposition}

\newtheorem{assumption}{Assumption}
\newtheorem{remark}{Remark}

\newtheorem{algo}{Algorithm}

\newenvironment{keywords}{\par\vspace{0.5em}\noindent\textbf{Keywords: }}{\par\vspace{1em}}

\title{Causal inference with ordinal outcomes: copula-based identification, estimation and sensitivity analysis}
\author{Peiyu He\\
	Department of Biostatistics and Bioinformatics, Duke University\\
	2424 Erwin Road, Durham, North Carolina 27710, U.S.A.\\
	\texttt{ph172@duke.edu}
	\and
	Fan Li\\
	Department of Statistics, Duke University\\
	214 Old Chemistry, Box 90251, Durham, North Carolina 27708, U.S.A.\\
	\texttt{fl35@duke.edu}}
\date{}

\begin{document}
	
	\maketitle
	\begin{bibunit}
		
		\begin{abstract}
			In causal inference with ordinal outcomes, several interpretable estimands are functions of the probability that the potential outcome under one treatment is larger than that under another treatment for the same unit. This probability depends on the joint distribution of both potential outcomes and is generally not identifiable. Existing work has focused on sharp bounds of this probability based on partial identification, but bounds are often too wide to be informative. We propose a copula-based method that links the identifiable marginal distributions of the potential outcomes via a parametric copula, treating the copula association parameter as a sensitivity parameter. 
			With a fixed copula parameter, the estimands become identified functionals of the observed data. 
			Working under unconfoundedness, we derive the efficient influence function in the nonparametric model and construct one-step estimators that accommodate flexible nuisance estimation. 
			The resulting procedure is rate-doubly-robust and attains the semiparametric efficiency bound under standard conditions.
			Varying the copula parameter yields a sensitivity curve with point-wise confidence bands that typically lie within the sharp bounds, providing an interpretable bridge between partial identification and point estimation.
			We further provide a comprehensive sensitivity analysis with respect to both the copula specification and the unconfoundedness assumption. 
			We develop an associated R package \texttt{ordinalCI}.
		\end{abstract}
		
		\begin{keywords}
			Copula; Semiparametric efficiency; Sensitivity analysis; Ordinal outcomes; Causal inference.
		\end{keywords}
		
		\section{Introduction}
		Many biomedical and social science studies aim to evaluate treatment effects on outcomes that are measured on ordered scales, e.g., disease severity stages and educational attainment levels. Unlike continuous outcomes, estimands defined in average differences may lack a coherent individual-level interpretation. Various alternative causal estimands have been proposed for ordinal outcomes, including the conditional median effect \citep{volfovsky2015causal}, relative treatment effect \citep{lu2020sharp}, and the probability of necessity for causes of effects \citep{zhang2025identifying}. In this paper, we focus on the arguably most common causal estimand of ordinal outcomes: the probability that a treatment is strictly beneficial (or harmful) to an individual, defined as $\psi=\text{pr}\{Y(1)> Y(0)\}$ and its variants, where $Y(1)$ and $Y(0)$ denote the potential outcomes under treatment and control, respectively \citep{li2008bayesian,huang2017inequality,lu2018treatment, lu2020sharp}. This estimand has a natural interpretation as the proportion of individuals whose outcome would improve (or worsen) under treatment. 
		
		Because the potential outcomes $Y(0)$ and $Y(1)$ are never jointly observed for the same individual, $\psi$ is not identifiable without additional assumptions. A growing literature has addressed this challenge through partial identification, deriving sharp bounds under minimal assumptions using only the marginal distributions of $Y(1)$ and $Y(0)$ \citep{huang2017inequality,lu2018treatment,lu2020sharp}.  However, bounds are often too wide to be informative and difficult to interpret; therefore, they are rarely adopted in biomedical research. Another strand of literature imposes Bayesian models on the joint distribution of $Y(1)$ and $Y(0)$ and draws posterior inference of the estimands \citep{volfovsky2015causal, chiba2018bayesian}, but the model assumptions are generally untestable and can be sensitive in the presence of poor overlap. 
		Moreover, the posterior inference usually relies on computationally intensive Markov Chain Monte Carlo algorithms, which may be undesirable in large population-based observational studies.

		In this paper, we propose a copula-based framework that addresses these limitations. 
		We employ a parametric copula to link the joint distribution of the 
		potential outcomes with their identifiable marginals, adding the 
		minimal structure required for identification, with the association 
		parameter $\rho$, or equivalently the copula-free Kendall's $\tau$, 
		serving as a single interpretable sensitivity parameter.
		Unlike nonparametric sharp bounds, 
		each fixed $\rho$ delivers a point estimate with valid uncertainty 
		quantification; varying $\rho$ produces a sensitivity curve contained 
		within the sharp bounds of \citet{lu2018treatment, lu2020sharp} and 
		passing through the true value under correct specification, with the 
		envelope across copula families recovering the bounds. 
		The framework 
		thus offers a transparent and informative bridge between partial 
		identification and point estimation.
		For estimation and inference, we adopt the semiparametric efficiency 
		theory \citep{tsiatis2006semiparametric} under unconfoundedness. 
		At each fixed $\rho$, we derive the efficient influence function for 
		$\psi$ and construct rate-doubly-robust one-step estimators that accommodate 
		flexible machine-learning nuisance estimation \citep{chernozhukov2018double}. 
		The resulting estimators attain the semiparametric efficiency bound 
		\citep{newey1990semiparametric} under standard conditions and admit 
		closed-form, computationally efficient inference.
		We further provide a comprehensive sensitivity analysis for copula specification and a Rosenbaum-type sensitivity analysis \citep{rosenbaum2002observational,yadlowsky2022bounds} for unconfoundedness assumption. 
		We develop an \textsf{R} package, \texttt{ordinalCI}, implementing the 
		methodology, available at \url{https://github.com/xiaoshiziguoer/ordinalCI}.
		
		Copula models have been used in causal inference. A prominent example is principal stratification, where a copula was used to model the distribution of the latent principal strata, defined as the joint potential values of a post-treatment variable \citep{bartolucci2011modeling, lu2025principal}. Copula-based sensitivity analysis has also been considered for settings with multiple treatments and unobserved confounding \citep{zheng2025copula}. However, none of the previous applications focus on ordinal outcomes.

		\section{Estimands and copula-based identification}

		Consider a sample of $n$ units indexed by $i = 1, \dots, n$. For each unit, let 
		$A_i \in \{0, 1\}$ denote a binary treatment indicator and $X_i$ a 
		vector of pre-treatment covariates. The outcome 
		$Y_i \in \{0, 1, \ldots, L-1\}$ is ordinal with $L$ ordered levels, 
		larger values representing better responses (or worse, depending on 
		context). 
		Under the Stable Unit Treatment Value Assumption (SUTVA), 
		each unit has two potential outcomes, $Y_i(1)$ and $Y_i(0)$, under 
		treatment and control, of which only 
		$Y_i = A_i Y_i(1) + (1 - A_i) Y_i(0)$ is observed. For simplicity, we shall suppress the subscript $i$ henceforth unless otherwise specified.

		The goal is to evaluate the causal effect of the treatment $A$ on the outcome $Y$. The standard causal estimand is the average treatment effect $\tau^{\ATE}=E\{Y(1)-Y(0)\}$. However, averages of an ordinal outcome are often challenging to interpret. Common alternative causal estimands are the probability that the treatment is strictly beneficial, or beneficial for the outcome \citep{huang2017inequality,lu2018treatment}, and the relative treatment effect \citep{lu2020sharp}, defined as:
		\begin{equation*}
			\psi = \text{pr}\{Y(1)> Y(0)\},  \quad \phi = \text{pr}\{Y(1)\geq Y(0)\}, \quad \xi = \text{pr}\{Y(1)> Y(0)\} - \text{pr}\{Y(1)< Y(0)\}, \label{eq:estimands}
		\end{equation*}
		respectively. Unlike conventional additive estimands, $\psi$, $\phi$ and $\xi$ involve the joint distribution of $\{Y(1), Y(0)\}$. 
		Let 
		$\Pi = \{\pi_{kj}\}_{0 \leq k, j \leq L-1}$ with 
		$\pi_{kj} = \text{pr}\{Y(1) = k, Y(0) = j\}$ denote the joint 
		cell probability; then the estimands can be expressed as functions of $\Pi$ as
		$\psi = \sum_{k=0}^{L-1} \sum_{j=0}^{k-1} \pi_{kj}$, 
		$\phi = \sum_{k=0}^{L-1} \sum_{j=0}^{k} \pi_{kj}$, and 
		$\xi = \phi + \psi - 1$.
		Therefore, once $\Pi$ is identified, $(\psi, \phi, \xi)$ are identified.
		However, $Y(1)$ and $Y(0)$ are 
		never jointly observed by the fundamental problem of causal inference, and only the marginal distributions 
		$\text{pr}\{Y(a) \leq k\}$, $a = 0, 1$, are identifiable under the 
		following standard assumptions.
		\begin{assumption}\label{asmp: causal}
			(i) (Unconfoundedness) $\{Y(0),Y(1)\}\ind A\mid X$; (ii) (Overlap) $0<\text{pr}(A=1\mid X) <1$.
		\end{assumption}
		
		Under Assumption~\ref{asmp: causal}, $\text{pr}\{Y(a) \leq k\}$ is 
		identifiable through either outcome regression or inverse probability 
		weighting,
		\[\text{pr}\{Y(a) \leq k\} = E\left\{\text{pr}\left(Y\leq k\mid A=a, X\right)\right\} = 
		E\left\{\frac{\mathbbm{1}\left(A=a\right)\mathbbm{1}\left(Y\leq k\right)}{\text{pr}\left(A =a\mid X\right)}\right\}, \quad a = 0,1.\]
		The marginal distributions are sufficient to identify additive estimands such as the average treatment effect, but are not sufficient to identify $(\psi, \phi,\xi)$. The literature has focused on partial identification, namely, deriving nonparametric sharp bounds of $(\psi, \phi,\xi)$ using the margins with minimal assumptions \citep{huang2017inequality,lu2018treatment,lu2020sharp,zhang2025identifying}.  However, in addition to being wide, bounds are usually challenging to interpret for biomedical researchers, who typically prefer point estimates accompanied by uncertainty quantification.

		A copula provides a natural way to model the unidentifiable joint 
		distribution from the identifiable marginals. 
		Let $F_a(k\mid X) = \text{pr}\{Y(a)\leq k\mid X\}$ be the potential outcome margins that can be identified as $\text{pr}\{Y\leq k\mid X, A=a\}$ under Assumption \ref{asmp: causal}.
		\begin{assumption}\label{asmp: copula}
			There exists a known copula $C_\rho$ such that
			\begin{equation}\label{eq: copula}
				\text{pr}\{Y(1)\le k,\;Y(0)\le j\mid X\}
				=
				C_\rho\!\left\{F_1(k\mid X),\,F_0(j\mid X)\right\}, \quad \forall k, j = 0,\ldots,L-1,
			\end{equation}
		\end{assumption}
		Here $C_\rho$ governs the conditional dependence between $Y(1)$ and 
		$Y(0)$ given $X$, with the parameter $\rho$ indexing the strength of 
		dependence and assumed constant in $X$. Neither the copula family nor 
		$\rho$ is identified from the observed data; both are varied and 
		assessed in sensitivity analyses. 
		By Sklar's theorem \citep{sklar1959fonctions}, a copula representation always exists and is unique when the margins are continuous. 
		For discrete margins, uniqueness fails: $\rho$ is not identified even if the joint and 
		marginal distributions were both available, introducing ambiguity in 
		its specification and interpretation. This non-identifiability, 
		however, does not undermine the approach, because for any fixed copula 
		specification the estimands are uniquely determined by the identifiable 
		marginals through~\eqref{eq: copula}. The following result further 
		clarifies the role of $\rho$ by connecting Assumption~\ref{asmp: copula} 
		to the latent threshold model commonly used for ordinal 
		outcomes.
		
		\begin{proposition}[Latent threshold model interpretation]\label{prop:latent-ordinal-copula}
			Suppose that, for $a = 0,1$,
			\begin{align} \label{eq:latent-model}
				Y^*(a)&=\eta_a(X)+\varepsilon_a,\quad
				Y(a)=\ell \iff \lambda_{\ell-1}<Y^*(a)\le \lambda_\ell,\quad \ell=0,\ldots,L-1,
			\end{align}
			where $
			-\infty=\lambda_{-1}<\lambda_0<\cdots<\lambda_{L-2}<\lambda_{L-1}=+\infty$
			are thresholds shared by two treatment arms, and $\eta_a(\cdot)$ is an arbitrary measurable function of $X$. If the conditional joint distribution of the latent
			residuals satisfies
			\[
			\mathrm{pr}(\varepsilon_1\le e_1,\;\varepsilon_0\le e_0\mid X)
			=
			C_\rho\!\left\{F_{\varepsilon_1\mid X}(e_1),\,F_{\varepsilon_0\mid X}(e_0)\right\},
			\]
			where $F_{\varepsilon_a\mid X}(e)=\mathrm{pr}(\varepsilon_a\le e\mid X)$, $a = 0,1$, then Assumption~\ref{asmp: copula} holds with the same copula $C_{\rho}$. 
		\end{proposition}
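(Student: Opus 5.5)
The proof is a direct computation that rewrites the ordinal events as events on the latent residuals and then evaluates the assumed residual copula at the right points. Fix $X$ and levels $k,j\in\{0,\ldots,L-1\}$.

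\emph{Step 1: translate the events.} Because the threshold rule in \eqref{eq:latent-model} assigns level $\ell$ exactly when $\lambda_{\ell-1}<Y^*(a)\le\lambda_\ell$, and the thresholds are increasing, we have $Y(a)\le k$ if and only if $Y^*(a)\le\lambda_k$. Given $X$, $\eta_a(X)$ is a constant, so this is equivalent to $\varepsilon_a\le \lambda_k-\eta_a(X)$. When $k=L-1$ the bound is $+\infty$ and the event is almost sure.

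\emph{Step 2: the joint CDF.} Using Step 1 and the hypothesis on the residuals,
\[
\mathrm{pr}\{Y(1)\le k,\;Y(0)\le j\mid X\}=\mathrm{pr}\{\varepsilon_1\le \lambda_k-\eta_1(X),\;\varepsilon_0\le \lambda_j-\eta_0(X)\mid X\}
\]
\[
=C_\rho\!\left[F_{\varepsilon_1\mid X}\{\lambda_k-\eta_1(X)\},\,F_{\varepsilon_0\mid X}\{\lambda_j-\eta_0(X)\}\right].
\]
The only technical point is that the copula hypothesis is a statement conditional on $X$, while $\eta_a(X)$ appears inside the evaluation points. Since $\eta_a$ is measurable, the arguments $\lambda_k-\eta_a(X)$ are $X$-measurable. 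The hypothesis holds for all real $(e_1,e_0)$ for each $X$ (a regular conditional distribution), so we may substitute these $X$-dependent values.

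\emph{Step 3: identify the marginals.} Applying Step 1 to one arm alone gives
\[
F_a(k\mid X)=\mathrm{pr}\{\varepsilon_a\le\lambda_k-\eta_a(X)\mid X\}=F_{\varepsilon_a\mid X}\{\lambda_k-\eta_a(X)\}.
\]
Substituting into Step 2 yields exactly \eqref{eq: copula}, with the same copula $C_\rho$.

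\emph{Boundary case.} For $k=L-1$ the residual CDF is evaluated at $+\infty$ and equals $1$. The standard copula property $C(1,v)=v$ then gives the correct marginal, so the identity also holds at the boundary.

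No step is a real obstacle; the measurability remark in Step 2 is the only place that needs care. The argument never uses the continuity of the residuals, so the same copula transfers to the discrete outcomes even though it is not unique for discrete margins.
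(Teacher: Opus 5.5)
Your proposal is correct and follows essentially the same route as the paper. You rewrite $\{Y(a)\le k\}$ as $\{\varepsilon_a\le \lambda_k-\eta_a(x)\}$, identify $F_a(k\mid x)=F_{\varepsilon_a\mid X}\{\lambda_k-\eta_a(x)\}$, evaluate the residual copula at those points, and treat $k$ or $j$ equal to $L-1$ through $\lambda_{L-1}=+\infty$. Your use of $C(1,v)=v$ at the boundary and your remark on substituting $X$-measurable evaluation points just spell out what the paper calls easy to check and handles by fixing $x$.
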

		The proof is given in the Supplementary Material Section \ref{sec: supp-proofs-others}.
		Proposition~\ref{prop:latent-ordinal-copula} provides a structural interpretation for
		Assumption~\ref{asmp: copula}: under the common latent threshold model with a
		flexible regression structure, a copula $C_\rho$ on the regression residuals of the
		latent continuous potential outcomes implies the same copula specification on the ordinal potential outcomes. 
		The converse does not hold in general, because discretization is a
		many-to-one map. 
		Nevertheless, the implication provides a natural and interpretable
		modelling rationale to arrive at Assumption~\ref{asmp: copula}, with $\rho$ encoding
		the conditional dependence of $(\varepsilon_1,\varepsilon_0)$
		given~$X$. 
		The component $\eta_a(X)$ is arbitrary, and the
		model~\eqref{eq:latent-model} naturally accommodates common ordinal regression models
		such as the proportional odds model by setting $\eta_a(X) = \beta^{T}X + \delta a$.

		We henceforth write the fixed copula function as $C(\cdot, \cdot)$, suppressing the subscript $\rho$ for simplicity.
		For $k, j  = 0, \ldots, L-1$, the joint conditional cell probabilities $\pi_{kj}(x)= \text{pr}\{Y(1) = k,Y(0)=j\mid X=x\}$ can be identified by rectangle increment on the ordinal grid using the copula: 
		$\pi_{kj}(x) = C\{F_1(k\mid x), F_0(j\mid x)\} - C\{F_1(k-1\mid x), F_0(j\mid x)\} - C\{F_1(k\mid x), F_0(j-1\mid x)\} + C\{F_1(k-1\mid x), F_0(j-1\mid x)\}.$
		Then $\psi$ is identified by 
		\begin{equation}\label{eq: idenpsi}
			\psi =  E\left\{m_{\psi}(X)\right\},\quad m_{\psi}(x) = \text{pr}\{Y(1) > Y(0) \mid X = x\} = \sum_{k=0}^{L-1} \sum_{j=0}^{k-1} \pi_{kj}(x),
		\end{equation}
		$\phi$ is identified analogously with $m_{\phi}(x) = \sum_{k=0}^{L-1} \sum_{j=0}^{k} \pi_{kj}(x)$, and $\xi$ is identified by $\phi+\psi-1$.
		
		\begin{remark}\label{remark: constant-rho-unc}
			Assumption~\ref{asmp: copula} implicitly assumes $\rho$ to be constant in $X$.
			This restriction parallels familiar homogeneous structures: under Gaussian errors, it
			reduces to a fixed correlation coefficient in the error covariance matrix.  Our framework
			extends in principle to covariate-dependent $\rho = \rho(X)$, but specifying its
			functional form is difficult and sensitivity analysis becomes unwieldy.
			Simulation evidence 
			in Section~\ref{sec: supp-addsimu} of the Supplementary Material shows 
			that the constant-$\rho$ approximation incurs little bias when the truth 
			is heterogeneous, so we maintain a constant $\rho$ throughout and 
			discuss its sensitivity analysis in Section~\ref{sec: sensitivity}. 
			An alternative strategy that avoids the constant-$\rho$ restriction is to 
			impose the copula directly on the unconditional marginals,
			$\mathrm{pr}\{Y(1)\le k,\;Y(0)\le j\} = C_\rho\!\left\{F_1(k),\,F_0(j)\right\},$
			where $F_a(k) = \mathrm{pr}\{Y(a)\leq k\} =
			E\!\left(\mathrm{pr}\{Y\leq k\mid X, A=a\}\right)$ under
			Assumption~\ref{asmp: causal}.
			Here, $\rho$ captures the unconditional dependence
			between $Y(1)$ and $Y(0)$, yielding a less
			structured but more parsimonious model.
			Full results on estimation, inference, and
			sensitivity analysis under this unconditional copula model are deferred to
			Section~\ref{sec: supp-unc} of the Supplementary Material.
			
		\end{remark}
		
		\begin{remark}
			Another popular estimand for ordinal outcomes is the win ratio \citep{pocock2012win,mao2018causal} -- a function of the probability $r=\text{pr}\{Y_i(1)>Y_j(0)\}$, where $i$ and $j$ are two independent units sampled randomly. The probability $r$ reduces to $\psi$ if $Y_i(1)\perp Y_i(0)\mid X_i$, which is a special case of the copula model in Assumption \ref{asmp: copula}, for example, Gaussian copula with $\rho = 0$.
		\end{remark}

		\section{Semiparametric estimation and inference}\label{sec: est}
		We work with the unrestricted nonparametric model for the observed data $ O = (Y,A,X)$,
		\[
		\mathcal{M}_{\mathrm{np}}
		=
		\left\{
		\text{pr}(Y,A,X): Y \in \{0,\ldots, L-1\}, 0<\text{pr}(A=1\mid X)<1
		\right\},
		\]
		which places no restrictions on the observed-data law beyond positivity of treatment assignment. 
		The nuisance components are the marginal distribution of $X$, the propensity score $e(x)=\text{pr}(A=1\mid X=x)$, and the conditional distribution of $Y$ given $(A,X)$, encoded by the collection $\{F_a(k\mid x):k=0,\ldots,L-1, a=0,1\}$.
		For brevity, we present results only for $\psi = E\{m_\psi(X)\}$, with $m_\psi(X)$ the copula-induced functional defined in \eqref{eq: idenpsi}; analogous 
		results for $\phi$ and $\xi$ are deferred to Section~\ref{sec: supp-otherestimands} of the Supplementary Material.
		
		Assume that the copula function $C(u,v)$ is differentiable in each argument in $(0,1)^2$, with the partial derivatives denoted by
		$\dot{C}_1(u,v)=\partial C(u,v)/\partial u$ and $\dot{C}_0(u,v)=\partial C(u,v)/\partial v$.
		Define
		\begin{equation}\label{eq: deltas}
			\begin{aligned}
				\Delta_{1k}^{\psi}(x)
				&= \frac{\partial m_{\psi}(x)}{\partial F_1(k\mid x)} = \dot{C}_1\left\{F_1(k\mid x),F_{0}(k-1\mid x)\right\}-\dot{C}_1\left\{(F_1(k\mid x),F_0(k\mid x)\right\},\\
				\Delta_{0k}^{\psi}(x)
				&= \frac{\partial m_{\psi}(x)}{\partial F_0(k\mid x)} = \dot{C}_0\left\{F_1(k+1\mid x),F_{0}(k\mid x)\right\}-\dot{C}_0\left\{F_1(k\mid x),F_0(k\mid x)\right\},
			\end{aligned}
		\end{equation}
		which collect the partial derivatives of $m_\psi$ with respect to the 
		conditional margins.
		The following proposition characterizes the efficient influence function of $\psi$. 
		
		\begin{proposition}[Efficient influence function]\label{prop: eif}
			Under Assumptions~\ref{asmp: causal}--\ref{asmp: copula}, the efficient influence function of $\psi$
			in the nonparametric model $\mathcal{M}_{\mathrm{np}}$ is
			\begin{equation}\label{eq:eif}
				IF_\psi(O)
				=
				\{m_\psi(X)-\psi\}
				+
				\sum_{a=0}^{1}\frac{\mathbbm{1}(A=a)}{e(X)^{a}\left\{1-e(X)\right\}^{1-a}} \sum_{k=0}^{L-2} \Delta_{ak}^{\psi}(X)\left\{\mathbbm{1}(Y\le k)-F_a(k\mid X)\right\}.
			\end{equation}
			The semiparametric efficiency bound for $\psi$ in $\mathcal{M}_{\rm np}$ equals $E\{IF_\psi(O)^2\}$.
		\end{proposition}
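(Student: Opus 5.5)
The approach is the standard pathwise-derivative (Gateaux) calculation in the nonparametric model. Under Assumptions~\ref{asmp: causal}--\ref{asmp: copula}, \eqref{eq: idenpsi} expresses $\psi$ as the observed-data functional
\[
\psi(P)=\int m_\psi\{F_1(\cdot\mid x),F_0(\cdot\mid x)\}\,dP_X(x), \qquad F_a(k\mid x)=\text{pr}(Y\le k\mid A=a,X=x).
\]
Here $m_\psi$ is a fixed smooth function of the $2(L-1)$ nontrivial conditional margins, since $F_a(L-1\mid x)=1$ and $F_a(-1\mid x)=0$.

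I will take a regular parametric submodel $P_t$ with score $S(O)=S_X(X)+S_A(A\mid X)+S_Y(Y\mid A,X)$ and compute $\partial_t\psi(P_t)|_{t=0}$. I will then exhibit a mean-zero function $IF_\psi$ with $\partial_t\psi(P_t)=E\{IF_\psi(O)S(O)\}$ for every such score. Because $\mathcal{M}_{\rm np}$ is locally saturated, with tangent space all of $L_2^0(P)$, the gradient is unique. It is therefore the efficient influence function, and the bound is $E\{IF_\psi^2\}$.

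The derivative splits into two terms by the product rule. The first comes from perturbing $P_X$ and gives $E[\{m_\psi(X)-\psi\}S_X(X)]$. The second comes from perturbing the conditional law of $Y$ given $(A,X)$, and by the chain rule equals
\[
E\Big[\sum_{a}\sum_{k=0}^{L-2}\frac{\partial m_\psi(X)}{\partial F_a(k\mid X)}\,\partial_t F_{a,t}(k\mid X)\Big].
\]
For the inner derivative I use the familiar identity
\[
\partial_t F_{a,t}(k\mid x)=E\big[\{\mathbbm 1(Y\le k)-F_a(k\mid x)\}S_Y(Y\mid A,X)\mid A=a,X=x\big].
\]
The conditioning on $A=a$ is then removed by inserting the weight $\mathbbm 1(A=a)/\text{pr}(A=a\mid X)$, which produces the inverse-propensity factor $e(X)^a\{1-e(X)\}^{1-a}$ in \eqref{eq:eif}. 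Terms involving $S_A$ vanish because $\psi$ does not depend on $e$, and the augmentation terms are orthogonal to $S_A$ and $S_X$ since they have conditional mean zero given $(A,X)$. Summing the two pieces gives exactly \eqref{eq:eif}, provided $\partial m_\psi/\partial F_a(k)$ equals $\Delta^\psi_{ak}$ in \eqref{eq: deltas}.

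The main work, though it is routine, is verifying the formula for the partial derivatives $\Delta^\psi_{ak}$. I would first telescope the rectangle increments in $m_\psi=\sum_k\sum_{j<k}\pi_{kj}$. Summing over $j=0,\ldots,k-1$ gives $\sum_{j<k}\pi_{kj}=C\{F_1(k),F_0(k-1)\}-C\{F_1(k-1),F_0(k-1)\}$, so
\[
m_\psi=\sum_{k}\big[C\{F_1(k),F_0(k-1)\}-C\{F_1(k-1),F_0(k-1)\}\big].
\]
Differentiating in $F_1(k)$ picks up $\dot C_1\{F_1(k),F_0(k-1)\}$ from index $k$ and $-\dot C_1\{F_1(k),F_0(k)\}$ from index $k+1$. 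Differentiating in $F_0(k)$ picks up $\dot C_0\{F_1(k+1),F_0(k)\}-\dot C_0\{F_1(k),F_0(k)\}$ from index $k+1$. Both match \eqref{eq: deltas}.

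Two regularity points need care. First, the boundary terms with $F_a(-1)=0$ and $F_a(L-1)=1$ are constants, so only $k\le L-2$ enters. Second, differentiability of $C$ on $(0,1)^2$ requires the margins to lie strictly inside $(0,1)$; I would assume this, or handle it with a boundedness condition on the partial derivatives of $C$ so that the dominated-convergence interchange of derivative and expectation is justified.
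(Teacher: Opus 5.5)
Your proposal is correct and follows essentially the same route as the paper: both split the score into $S_X+S_{A\mid X}+S_{Y\mid A,X}$, and both obtain $E[\{m_\psi(X)-\psi\}S_X]$ from the covariate perturbation and zero from the propensity perturbation. For the outcome perturbation, both apply the chain rule with $\partial_\varepsilon F_{a,\varepsilon}(k\mid x)=E[\{\mathbbm{1}(Y\le k)-F_a(k\mid X)\}S_{Y\mid A,X}\mid A=a,X=x]$ followed by inverse-propensity weighting, and both conclude efficiency because the candidate lies in the saturated tangent space. Your explicit telescoping check that $\partial m_\psi/\partial F_a(k\mid x)=\Delta^\psi_{ak}(x)$ is a useful addition, since the paper simply asserts those derivative formulas.
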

		The proof is given in Section \ref{sec: supp-proofmain} of the Supplementary Material.
		
		Given flexible estimators $\hat e(x)$ and $\hat F_a(k \mid x)$, let 
		$\hat m_\psi(x)$, $\hat\Delta_{1k}^{\psi}(x)$, and 
		$\hat\Delta_{0k}^{\psi}(x)$ denote the corresponding plug-in estimators 
		from \eqref{eq: idenpsi} and \eqref{eq: deltas}. The semiparametric 
		one-step estimator is
		\begin{equation}\label{eq:onestep}
			\hat\psi
			=
			\hat E\left(
			\hat m_\psi(X)
			+
			\sum_{a=0}^{1}\frac{\mathbbm{1}(A=a)}{\hat{e}(X)^{a}\left\{1-\hat{e}(X)\right\}^{1-a}} \sum_{k=0}^{L-2} \hat{\Delta}^{\psi}_{ak}(X)\left\{\mathbbm{1}(Y\le k)-\hat{F}_a(k\mid X)\right\}\right),
		\end{equation}
		where $\hat{E}$ denotes the empirical mean operator.
		
		For inference, we estimate the 
		asymptotic variance by $\hat E\{\hat{IF}_\psi^2(O)\}$, where 
		$\hat{IF}_\psi$ is the plug-in version of \eqref{eq:eif}, and construct 
		confidence intervals accordingly. 
		Denote $\|g\|_2=(E\{g(X)^2\})^{1/2}$ for a function $g$. Define
		$
		\|\hat F_a-F_a\|_2^2
		=
		\sum_{k=0}^{L-2} \|\hat F_a(k\mid X)-F_a(k\mid X)\|_2^2$ for $a = 0,1$.
		The following theorem summarizes the asymptotic properties of the one-step estimator.
		
		\begin{theorem}[Rate-doubly-robust expansion]\label{thm:rate}
			Under Assumptions \ref{asmp: causal}--\ref{asmp: copula}, suppose (i) $c\le e(X), \hat e(X),F_a(k\mid X), \hat F_a(k\mid X)\le 1-c$ for some constant $c>0$ and all $a,k$; (ii) the copula $C$ is twice continuously differentiable on $[c, 1-c]^2$ with bounded second partial derivatives; (iii) the nuisance estimator satisfying $\|\hat e-e\|_2=o_p(1)$ and
			$\|\hat F_a-F_a\|_2=o_p(1)$ for $a = 0,1$; (iv)
			$e(x),F_a(k\mid x),\Delta_{ak}^{\psi}(x)$ and their estimators are in a Donsker class.
			Let $\hat\psi$ be the one-step estimator defined in \eqref{eq:onestep}, then
			\[
			\hat\psi-\psi
			=
			\hat{E}\{\,IF_\psi(O)\}
			+
			O_p(\mathrm{Rem})+o_p(n^{-1/2}),
			\]
			where $
			\mathrm{Rem}
			=
			\|\hat F_1-F_1\|_2^2+\|\hat F_0-F_0\|_2^2
			+
			\|\hat e-e\|_2\{\|\hat F_1-F_1\|_2+\|\hat F_0-F_0\|_2\}.$
			
			If $\|\hat e-e\|_2=o_p(n^{-1/4})$ and $\|\hat F_a-F_a\|_2=o_p(n^{-1/4})$ for $a =0,1$, then
			$\mathrm{Rem}=o_p(n^{-1/2})$ and
			\[
			n^{1/2}\left(\hat\psi-\psi\right)\ \Rightarrow\ \mathcal{N}\left(0,\ E\{IF_\psi\left(O\right)^2\}\right).
			\]
		\end{theorem}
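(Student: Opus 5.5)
We use the standard one-step decomposition. Write $P$ for the true law, $\mathbb{P}_n=\hat E$ for the empirical measure, and $\hat P$ for the law induced by the nuisance estimates $(\hat e,\hat F_a)$ together with the empirical distribution of $X$. Let $\varphi(O;\eta)$ denote the uncentered influence-function integrand inside $\hat E(\cdot)$ in \eqref{eq:onestep}, so that $\hat\psi=\mathbb{P}_n\varphi(O;\hat\eta)$, $\psi=P\varphi(O;\eta)$ and $IF_\psi=\varphi(O;\eta)-\psi$. Then
\[
\hat\psi-\psi=(\mathbb{P}_n-P)\varphi(O;\eta)+(\mathbb{P}_n-P)\{\varphi(O;\hat\eta)-\varphi(O;\eta)\}+\{P\varphi(O;\hat\eta)-\psi\}.
\]
The first term equals $\hat E\{IF_\psi(O)\}$. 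We treat the second (empirical-process) term and the third (second-order bias) term separately.

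\textbf{Empirical-process term.} Condition (iv) makes $\varphi(\cdot;\hat\eta)$ range over a Donsker class, since it is built by sums, products and quotients from Donsker classes and the denominators are bounded away from zero by (i). By the standard asymptotic-equicontinuity lemma (van der Vaart 1998, Lemma 19.24), this term is $o_p(n^{-1/2})$ provided $\|\varphi(\cdot;\hat\eta)-\varphi(\cdot;\eta)\|_{L_2(P)}=o_p(1)$. That $L_2$ consistency follows from (i)--(iii). Under (i)--(ii), $\dot C_1$ and $\dot C_0$ are Lipschitz on $[c,1-c]^2$, so $|\hat\Delta_{ak}-\Delta_{ak}|\lesssim\sum_{j}|\hat F_{a'}(j\mid X)-F_{a'}(j\mid X)|$ summed over both arms. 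The same Lipschitz property holds for $\hat m_\psi$, because $C$ is Lipschitz, and $1/\hat e$ is Lipschitz in $\hat e$. Note also that $\mathbbm{1}(Y\le L-1)-F_a(L-1\mid X)=0$, so omitting $k=L-1$ costs nothing. If the nuisances are instead cross-fitted, the Donsker condition can be dropped, but we follow (iv) as stated.

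\textbf{Second-order bias term.} This is the main step. We condition on $X$ and use $E\{\mathbbm{1}(A=a)\mathbbm{1}(Y\le k)\mid X\}=e_a(X)F_a(k\mid X)$, where $e_1=e$ and $e_0=1-e$. This gives
\[
P\varphi(\hat\eta)-\psi=E\Big[\hat m_\psi(X)-m_\psi(X)+\sum_{a}\frac{e_a(X)}{\hat e_a(X)}\sum_{k}\hat\Delta_{ak}(X)\{F_a(k\mid X)-\hat F_a(k\mid X)\}\Big].
\]
Here $\hat m_\psi$ is evaluated with the true law of $X$; replacing it by the empirical law contributes an $(\mathbb P_n-P)$ term that is already absorbed in the previous step. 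We then split $e_a/\hat e_a=1+(e_a-\hat e_a)/\hat e_a$ and write $\hat\Delta_{ak}=\Delta_{ak}+(\hat\Delta_{ak}-\Delta_{ak})$.

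The key identity is that $m_\psi(x)$ is a smooth function $g(F_1(\cdot\mid x),F_0(\cdot\mid x))$ of the $2(L-1)$ margin values with gradient $(\Delta_{1k},\Delta_{0k})$. This follows by telescoping the rectangle sums, which reduces $m_\psi$ to a sum of terms $C\{F_1(k),F_0(k-1)\}-C\{F_1(k),F_0(k)\}$ plus margin terms; this is how \eqref{eq: deltas} arises. By Taylor's theorem with the bounded second derivatives from (ii), $\hat m_\psi-m_\psi-\sum_{a,k}\Delta_{ak}(\hat F_a-F_a)=O(\sum_{a,k}|\hat F_a-F_a|^2)$ pointwise. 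When $\hat\Delta$ is evaluated at $\hat F$, the first-order terms cancel exactly against the $\Delta_{ak}$ part of the correction. The leftover piece $(\hat\Delta-\Delta)(F-\hat F)$ is bounded, via the Lipschitz property of $\hat\Delta$ and Cauchy--Schwarz, by $\|\hat F_1-F_1\|_2^2+\|\hat F_0-F_0\|_2^2$. The propensity cross term is bounded, using $\hat e\ge c$, bounded $\hat\Delta$ and Cauchy--Schwarz, by $\|\hat e-e\|_2(\|\hat F_1-F_1\|_2+\|\hat F_0-F_0\|_2)$. Together these give $O_p(\mathrm{Rem})$.

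The delicate point we expect is verifying that the gradient of the telescoped $m_\psi$ coincides exactly with \eqref{eq: deltas}, including the boundary indices $k=0$ and $k=L-2$ where $F_a(-1)=0$ and $F_a(L-1)=1$. Once that is confirmed, the final claim follows directly. Under $o_p(n^{-1/4})$ rates each product in $\mathrm{Rem}$ is $o_p(n^{-1/2})$, so $n^{1/2}(\hat\psi-\psi)=n^{1/2}\hat E\{IF_\psi\}+o_p(1)$. The central limit theorem then applies, since $IF_\psi$ is bounded under (i) and hence square-integrable, and Slutsky's lemma gives $\mathcal N(0,E\{IF_\psi^2\})$. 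By Proposition~\ref{prop: eif} this variance is the efficiency bound.
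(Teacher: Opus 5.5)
Your proposal is correct and follows essentially the same route as the paper's proof. Both use the same three-term decomposition, dispatch the empirical-process term with the Donsker condition plus $L_2$-consistency, and handle the bias term by conditioning on $X$, Taylor-expanding $m_\psi$ in the margins, and splitting $\Delta_{ak}-\frac{e_a}{\hat e_a}\hat\Delta_{ak}=(\Delta_{ak}-\hat\Delta_{ak})+(1-\frac{e_a}{\hat e_a})\hat\Delta_{ak}$ before applying Cauchy--Schwarz; your explicit check that the gradient of the telescoped $m_\psi$ equals \eqref{eq: deltas} at the boundary indices is a detail the paper takes for granted, and your aside about replacing the true law of $X$ by the empirical one is unnecessary, since $P\varphi(O;\hat\eta)$ already integrates against the true law.
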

		
		The proof is in Section~\ref{sec: supp-proofmain} of the Supplementary 
		Material. The theorem shows that $\hat\psi$ is asymptotically linear 
		with the nonparametric efficient influence function provided 
		$\mathrm{Rem} = o_p(n^{-1/2})$; the bias is second order, involving 
		products or squares of the nuisance estimation errors. Beyond correctly 
		specified parametric working models, this supports flexible, 
		data-adaptive nuisance estimation while preserving 
		$n^{1/2}$-consistency and efficiency, as long as each nuisance 
		estimator converges at rate $o_p(n^{-1/4})$ 
		\citep{robins2008higher, chernozhukov2018double, kennedy2024semiparametric}.
		Theorem~\ref{thm:rate} imposes the Donsker condition \citep{vaart1997weak} on the
		complexity of the nuisance models, which can be relaxed by employing cross-fitting
		\citep{chernozhukov2018double}; see Section~\ref{sec: supp-cf} of the Supplementary
		Material for implementing details.

		\section{Sensitivity analysis}\label{sec: sensitivity}
		
		\subsection{Sensitivity to the specification of the copula model} 
		The estimation and inference procedures in Section~\ref{sec: est} are implicitly indexed by the copula family $C$ with a correlation parameter $\rho$. The choice of $C$ particularly depends on the tail dependence structure, often informed by domain knowledge. For example, if tail dependence is absent and the dependence is more uniform across the distribution, use the Gaussian copula. If large values of the potential outcomes tend to co-occur, e.g., patients who respond well under one treatment also respond well under another, use an upper-tail-dependent copula like Gumbel. If small values tend to co-occur, use a lower-tail-dependent copula like Clayton.  When subject-matter knowledge is not available, one can repeat the analysis with different copula families to assess the sensitivity.   
		
		Given a copula family, the sensitivity analysis with respect to $\rho$ proceeds by evaluating the estimator over a grid of values, producing a sensitivity curve $\rho \mapsto \hat\psi(\rho)$ with pointwise confidence bands.
		For a correctly specified family, this curve is guaranteed to pass through the true causal parameters $\psi$ (or $\phi, \xi$) at the data-generating value of $\rho$ and to be contained within the nonparametric sharp bounds of \citet{lu2018treatment,lu2020sharp}. 
		The envelope of sensitivity curves across all copula families recovers the sharp bounds, whereas any single family's curve is typically much narrower and more informative for causal effects.
		A flat curve indicates robustness to the assumed dependence strength, whereas a steep curve signals sensitivity.
		
		The range and interpretation of $\rho$ depend on the copula family;  instead, one can reparameterize the dependence using Kendall's $\tau$. For most one-parameter copula families, $\rho$ admits a one-to-one mapping to $\tau$, which lies in $(-1,1)$ and depends only on the copula, providing a scale-free measure of concordance that is directly comparable across copula classes. 
		Under the latent threshold model of Proposition~\ref{prop:latent-ordinal-copula}, the copula characterizes the dependence between the latent continuous potential outcomes $\{Y^*(1),Y^*(0)\}$ given covariates $X$, and $\tau$ indexes the corresponding latent rank dependence. 
		We therefore recommend reporting sensitivity results on the $\tau$ scale, aligning the sensitivity curve through the copula-specific $\tau$--$\rho$ relationship. 
		In practice, the range of $\tau$ can be informed by domain knowledge.
		
		Finally, we consider the implicit constant-$\rho$ condition in
		Assumption~\ref{asmp: copula}.  When the true dependence structure varies with
		covariates, the constant-$\rho$ model can be viewed as a working approximation, and
		the sensitivity curve reflects an average dependence level.  When appropriate, one can
		further parameterise $\rho(X) = \rho(X;\theta)$ and treat $\theta$ as an additional
		sensitivity parameter capturing the degree of heterogeneity.  The unconditional copula
		model of Remark~\ref{remark: constant-rho-unc} can also serve as a robustness check,
		as it avoids the constant-$\rho$ restriction altogether.  Simulation evidence in
		Section~\ref{sec: supp-addsimu} of the Supplementary Material confirms that the
		constant-$\rho$ approximation performs well under moderate misspecification, and the estimated sensitivity curves under the conditional and unconditional copula models
		are closely aligned across various data generating processes.
		
		\subsection{Sensitivity to the unconfoundedness assumption}\label{subs: unconfoundedness SA}
		To assess sensitivity to unconfoundedness in observational studies, we adopt a Rosenbaum-type bounds approach \citep{rosenbaum2002observational} under the following assumption. 
		\begin{assumption}\label{ass:li}
			(Latent ignorability). There exists an unmeasured variable $U$ such that $\{Y(0),Y(1)\}\not\!\ind A\mid X$ but $\{Y(0),Y(1)\}\ind A\mid X,U$; $U$ has a bounded influence on the odds of treatment assignment so that 
			$\Gamma^{-1}\leq \{\text{pr}(A=1\mid X, U=u)\text{pr}(A=0\mid X, U=\tilde{u})\}/\{\text{pr}(A=0\mid X, U=u)\text{pr}(A=1\mid X, U=\tilde{u})\} \leq \Gamma$ for some $\Gamma \geq 1$ and all $u, \tilde{u}$. 
		\end{assumption}
		The parameter $\Gamma \geq 1$ governs the strength of hidden 
		confounding: larger $\Gamma$ allows greater violation of 
		unconfoundedness, and $\Gamma = 1$ recovers 
		Assumption~\ref{asmp: causal}(i). Under Assumption~\ref{ass:li}, the 
		conditional margins $F_a(k \mid x) = \mathrm{pr}\{Y(a) \leq k \mid X = x\}$ 
		are no longer identifiable. Following \cite{yadlowsky2022bounds}, we 
		derive sharp bounds on $F_a(k \mid x)$ and propagate them through the 
		copula map to bound the causal parameters. For each $a = 0, 1$ and 
		$k = 0, \ldots, L-2$, let 
		$p_{a,k}(x) = \mathrm{pr}(Y \leq k \mid A = a, X = x)$ denote the 
		observed conditional cumulative distribution function.
		\begin{proposition}\label{prop:binarybounds}
			Under Assumption~\ref{ass:li}, for $\forall a, k$, and $x$, we have sharp bounds 
			$F_1(k\mid x)\in [F^-_{1,\Gamma}(k\mid x), F^+_{1,\Gamma}(k\mid x)]$, and $F_0(k\mid x)\in [F^-_{0,\Gamma}(k\mid x), F^+_{0,\Gamma}(k\mid x)],$ where $F^*_{1,\Gamma}(k\mid x)=
			e(x)p_{1,k}(x)+\{1-e(x)\}r^*_\Gamma\{p_{1,k}(x)\}$ and $F^*_{0,\Gamma}(k\mid x) =
			\{1-e(x)\}p_{0,k}(x)+e(x)r^*_\Gamma\{p_{0,k}(x)\}$, 
			for $*=+,-$, where $r^-_\Gamma(p)=p\{p+\Gamma(1-p)\}^{-1}$ and
			$r^+_\Gamma(p)=\Gamma p(\Gamma p+1-p)^{-1}.$  
		\end{proposition}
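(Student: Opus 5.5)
We prove the bound for $a=1$; the case $a=0$ is symmetric, with the roles of the arms exchanged. Fix $x$ and $k$. First decompose by arm:
\[
F_1(k\mid x)=e(x)\,\mathrm{pr}\{Y(1)\le k\mid A=1,x\}+\{1-e(x)\}\,\mathrm{pr}\{Y(1)\le k\mid A=0,x\}.
\]
By consistency, the first conditional probability equals $p_{1,k}(x)$, so it is identified. The only unidentified piece is the counterfactual quantity $q:=\mathrm{pr}\{Y(1)\le k\mid A=0,x\}$. It therefore suffices to show that, under Assumption~\ref{ass:li}, the set of attainable values of $q$ is exactly the interval $[r^-_\Gamma(p),r^+_\Gamma(p)]$ with $p=p_{1,k}(x)$. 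Since $F_1(k\mid x)$ is affine and increasing in $q$, the stated endpoints then follow.

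To bound $q$, we reduce to the binary outcome $B=\mathbbm{1}\{Y(1)\le k\}$ and use Bayes' rule through $U$. Write $\pi(u)=\mathrm{pr}(A=1\mid x,u)$. Latent ignorability gives
\[
\mathrm{pr}(A=0\mid B=b,x)=E\{1-\pi(U)\mid B=b,x\},
\]
together with the analogous expression for $A=1$. Hence
\[
\frac{q}{1-q}=\frac{p}{1-p}\cdot\frac{\mathrm{pr}(A=0\mid B=1,x)/\mathrm{pr}(A=1\mid B=1,x)}{\mathrm{pr}(A=0\mid B=0,x)/\mathrm{pr}(A=1\mid B=0,x)}.
\]
The odds ratio condition in Assumption~\ref{ass:li} says that the odds $\pi(u)/\{1-\pi(u)\}$ vary within a factor $\Gamma$ across $u$. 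Each conditional odds $\mathrm{pr}(A=1\mid B=b,x)/\mathrm{pr}(A=0\mid B=b,x)$ is a ratio of mixtures of $\pi(U)$ and $1-\pi(U)$, so it lies within the range of $\pi(u)/\{1-\pi(u)\}$. This is the mediant inequality: $\sum w_i a_i/\sum w_i b_i$ lies between $\min a_i/b_i$ and $\max a_i/b_i$. Consequently the ratio of the two conditional odds lies in $[\Gamma^{-1},\Gamma]$, and so $q/(1-q)\in[\Gamma^{-1}p/(1-p),\,\Gamma p/(1-p)]$. Converting from odds back to probabilities gives exactly $r^-_\Gamma(p)=p/\{p+\Gamma(1-p)\}$ and $r^+_\Gamma(p)=\Gamma p/(\Gamma p+1-p)$.

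The main obstacle is \textbf{sharpness}. We need to show that every value of $q$ in the interval, in particular each endpoint, is attained by some joint law of $(U,Y(0),Y(1),A)$ given $x$. This law must satisfy Assumption~\ref{ass:li}, reproduce the observed law of $(A,Y)$ given $x$, and also reproduce $e(x)$. The plan is to take $U=B$, that is, a binary $U$ equal to the indicator itself. Set $\pi(1)$ and $\pi(0)$ so that their odds ratio is exactly $\Gamma$ (or $\Gamma^{-1}$ for the other endpoint), and choose the marginal $\mathrm{pr}(U=1\mid x)=e(x)p+\{1-e(x)\}q$. One then checks the following:
\begin{itemize}
\item Bayes' rule returns $\mathrm{pr}(B=1\mid A=1,x)=p$ together with the prescribed $q$, and the overall propensity $e(x)$. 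This is two equations in two unknowns $\pi(0),\pi(1)$, solvable because the target odds ratio is consistent by construction.
\item Conditional on $U$, $B$ is degenerate, so $\{Y(0),Y(1)\}\ind A\mid X,U$ holds trivially.
\item To complete the construction, generate the remaining potential outcomes from their observed conditionals given $(A,B)$: draw $Y(1)$ within $\{\le k\}$ or $\{>k\}$ from the observed conditional, and draw $Y(0)$ compatibly given $(X,U)$, matching the observed law in the control arm. This needs care but is routine.
\end{itemize}
Finally, interior values of $q$ are obtained by taking odds ratios in $(\Gamma^{-1},\Gamma)$.
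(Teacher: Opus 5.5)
Your proof is correct, but it takes a different route from the paper. The paper does not argue from Assumption~\ref{ass:li} directly. It imports from Yadlowsky et al.\ (2022) the characterization of the worst-case counterfactual probability $r_{a,k}(x)$ as the minimizer of an asymmetrically weighted squared loss: $\min_{\theta}\frac{1}{2}\{p(1-\theta)^2+\Gamma(1-p)\theta^2\}$ for the lower endpoint, and the same loss with $\Gamma$ moved to the first term for the upper endpoint. It solves the first-order conditions to obtain $r^-_\Gamma(p)$ and $r^+_\Gamma(p)$, and substitutes them into the mixture decomposition~\eqref{eq:Fa_decomp}; validity and sharpness are inherited from that reference.

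You work from first principles instead. Bayes' rule writes $q/(1-q)$ as $p/(1-p)$ times a ratio of two conditional treatment odds, each a weighted average of $\pi(U)/\{1-\pi(U)\}$, so the odds-ratio constraint confines that ratio to $[\Gamma^{-1},\Gamma]$. You then exhibit a binary confounder that attains the endpoints. This is the binary-outcome case of the Yadlowsky et al.\ reweighting argument, made self-contained. It gives an explicit sharpness proof that the paper only asserts by citation, whereas the paper's loss formulation is shorter and extends directly to non-binary outcomes and to estimating-equation methods.

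Your construction is also easier than you suggest. Once $(e,p,q)$ are fixed, the joint law of $(A,B)$ given $x$ is fully determined, so $\pi(b)=\mathrm{pr}(A=1\mid B=b,x)$ is forced rather than solved for. Its odds ratio $\{p/(1-p)\}/\{q/(1-q)\}$ then equals $\Gamma^{-1}$ at $q=r^+_\Gamma(p)$ and $\Gamma$ at $q=r^-_\Gamma(p)$ automatically. For the remaining outcomes, draw $Y(1)$ given $B$ from the observed treated-arm law restricted to $\{Y\le k\}$ or $\{Y>k\}$, and draw $Y(0)$ independently of $(U,A,Y(1))$ given $X$ from the observed control-arm law. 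This yields $\{Y(0),Y(1)\}\ind A\mid X,U$ and reproduces both observed arms.

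Finally, because your $U$ depends on $k$, you obtain sharpness one threshold at a time, which is exactly what the proposition claims. The endpoints for different $k$ are generally not attainable simultaneously. This is consistent with the paper calling the plugged-in bounds of Proposition~\ref{prop:endpointbounds} conservative.
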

		Plugging these margin bounds into the conditional copula map yields 
		endpoint functionals $m^\pm_\psi(x; \Gamma, \rho)$ and the corresponding 
		parameters $\psi^\pm_\Gamma(\rho) = E\{m^\pm_\psi(X; \Gamma, \rho)\}$. 
		Efficient 
		one-step estimators $\hat\psi^\pm_\Gamma(\rho)$ for the endpoints 
		follow by paralleling the derivation of~\eqref{eq:onestep}; the 
		resulting interval 
		$[\hat\psi^-_\Gamma(\rho), \hat\psi^+_\Gamma(\rho)]$ brackets the 
		asymptotic confounding bias and collapses to the original one-step estimate $\hat\psi(\rho)$ at 
		$\Gamma = 1$. 
		Full derivations and 
		proofs are given in Sections~\ref{sec:supp_sa_unconf} 
		and~\ref{sec: supp-proofs-others} of the Supplementary Material.
		At each fixed $\rho$, we 
		summarize the analysis by reporting the largest $\Gamma$ at which the 
		interval still excludes a substantively meaningful null value (for 
		example, $\xi = 0$ for the relative treatment effect): this threshold 
		quantifies the magnitude of hidden confounding required to overturn 
		the conclusion, with larger values indicating greater robustness. 
		Varying $\rho$ in addition produces a two-dimensional sensitivity 
		assessment over $(\rho, \Gamma)$, with the aggregation convention 
		adopted in our application described in Section~\ref{sec:application}.

		\section{Simulations}\label{sec: simu}
		We assess the finite-sample performance of the proposed estimators through simulations. We generate covariates $X = (X_1, X_2, X_3)^\T$ as $X_j \overset{\rm iid}{\sim} \mathrm{Unif}(-1, 1)$, and treatment as $A \mid X \sim \mathrm{Bernoulli}\{e(X)\}$ with $\mathrm{logit}\{e(X)\} = \beta_1^{\T}X$ and $\beta_1 = (0.5,-0.2, 0.2,-0.2)^{\T}$.
		We generate potential outcomes with $L=5$ levels by thresholding a latent continuous model
		$Y^* = \eta_a(X) + \varepsilon_a$, where $\eta_a(X) = \beta_2^{\T}X + \delta a$ with $\beta_2 = (0.6, 0.15,0.15,0.15), \delta = 0.4$, and thresholds $ \lambda_k = \text{logit}\{(k+1)/5\}, k=0,1,2,3$.
		The errors are $\varepsilon_a = \text{logit}(U_a)$ with $(U_1,U_0)$ following the Gumbel copula model with parameter $\rho = 2$, corresponding to Kendall's $\tau = 0.5$: $\text{pr}(U_1 \leq u_1, U_0 \leq u_0) = \exp ( - [ \{-\log(u_1)\}^2 + \{-\log(u_0)\}^2 ]^{1/2} ) $, so the marginal distributions satisfy $\mathrm{logit}\{F_a(k \mid X)\} = \lambda_k - \eta_a(X)$, a proportional odds model.
		
		We consider four estimators of $\psi$: (i) $\hat\psi_{\rm Par}$: the one-step estimator with correctly specified parametric nuisance models and the Gumbel copula; (ii) $\hat\psi_{\rm ML}$: a ten-fold cross-fitted estimator with random forest nuisance models via the \textsf{R} package \texttt{grf} and the correctly specified Gumbel copula; (iii) $\hat\psi_{\rm G}$: the one-step estimator with correctly-specified parametric nuisance models and a misspecified Gaussian copula whose parameter $\rho = 0.7071$  is calibrated to match the true Kendall's $\tau$; (iv) $\hat\psi_{\rm Gb3}$: the one-step estimator with correctly-specified parametric nuisance models and a misspecified Gumbel parameter $\rho = 3$, corresponding to stronger dependence than the truth.
		Sample sizes are $n = 200$ and $1000$ with $500$ replications. The true values are computed by Monte Carlo integration with $5 \times 10^5$ draws.
		\begin{table}[htbp]
			\centering
			\caption{Simulation results for $\psi$ under a fixed copula.  Bias, SD, and RMSE are
				multiplied by~$10^3$; Cov~(\%) is the empirical coverage of 95\% confidence intervals;
				SBC~(\%) is the frequency that the 95\% confidence interval lies within the sharp bounds.}
			\label{tab:sim_main}
			\setlength{\tabcolsep}{1.5pt}
			\begin{tabular}{lrrrrrrrrrr}
				\toprule
				& \multicolumn{5}{c}{$n=200$} & \multicolumn{5}{c}{$n=1000$} \\
				\cmidrule(lr){2-6}\cmidrule(lr){7-11}
				Method & Bias & SD & RMSE & Cov (\%) & SBC (\%) & Bias & SD & RMSE & Cov (\%) & SBC (\%) \\
				\midrule
				Par      &  6.9 & 68.8 & 69.2 & 92.8 &  98.6 &  -0.0 & 30.0 & 30.0 & 95.6 & 100.0 \\
				ML       &  0.5 & 74.6 & 74.6 & 88.4 &  98.6 &  -3.9 & 31.6 & 31.9 & 94.4 & 100.0 \\
				PG       &  7.7 & 68.9 & 69.3 & 93.0 &  98.6 &   1.4 & 29.9 & 29.9 & 96.0 & 100.0 \\
				PGb$_3$  & -15.0 & 84.7 & 86.0 & 91.6 & 89.2 & -26.1 & 38.8 & 46.8 & 88.4 & 100.0 \\
				\bottomrule
			\end{tabular}
		\end{table}
		
		Table~\ref{tab:sim_main} summarises the results.  The correctly specified estimator
		$\hat\psi_{\mathrm{Par}}$ exhibits negligible bias with coverage
		close to the nominal 95\% level when $n=2000$.
		The cross-fitted estimator $\hat\psi_{\mathrm{ML}}$ also has little bias but shows under-coverage at
		$n = 200$, reflecting finite-sample variability in the nonparametric nuisance
		estimates.  
		For copula sensitivity, the misspecified Gaussian copula
		$\hat\psi_{\mathrm{G}}$ incurs only mild bias when Kendall's~$\tau$ is correctly
		matched, whereas misspecifying the dependence strength in
		$\hat\psi_{\mathrm{Gb3}}$ leads to substantial bias and degraded coverage,
		underscoring the importance of calibrating the dependence level through~$\tau$
		rather than matching the copula family.  The SBC column confirms that confidence
		intervals fall within the sharp bounds at near-unit frequency, consistent with the
		theoretical nesting relationship.  Additional scenarios assessing robustness to
		nuisance-model misspecification and varying overlap are reported in
		Section~\ref{sec: supp-addsimu} of the Supplementary Material; these simulations show similar patterns.   
		
		\begin{figure}[htbp]
			\centering
			\includegraphics[width=.9\textwidth]{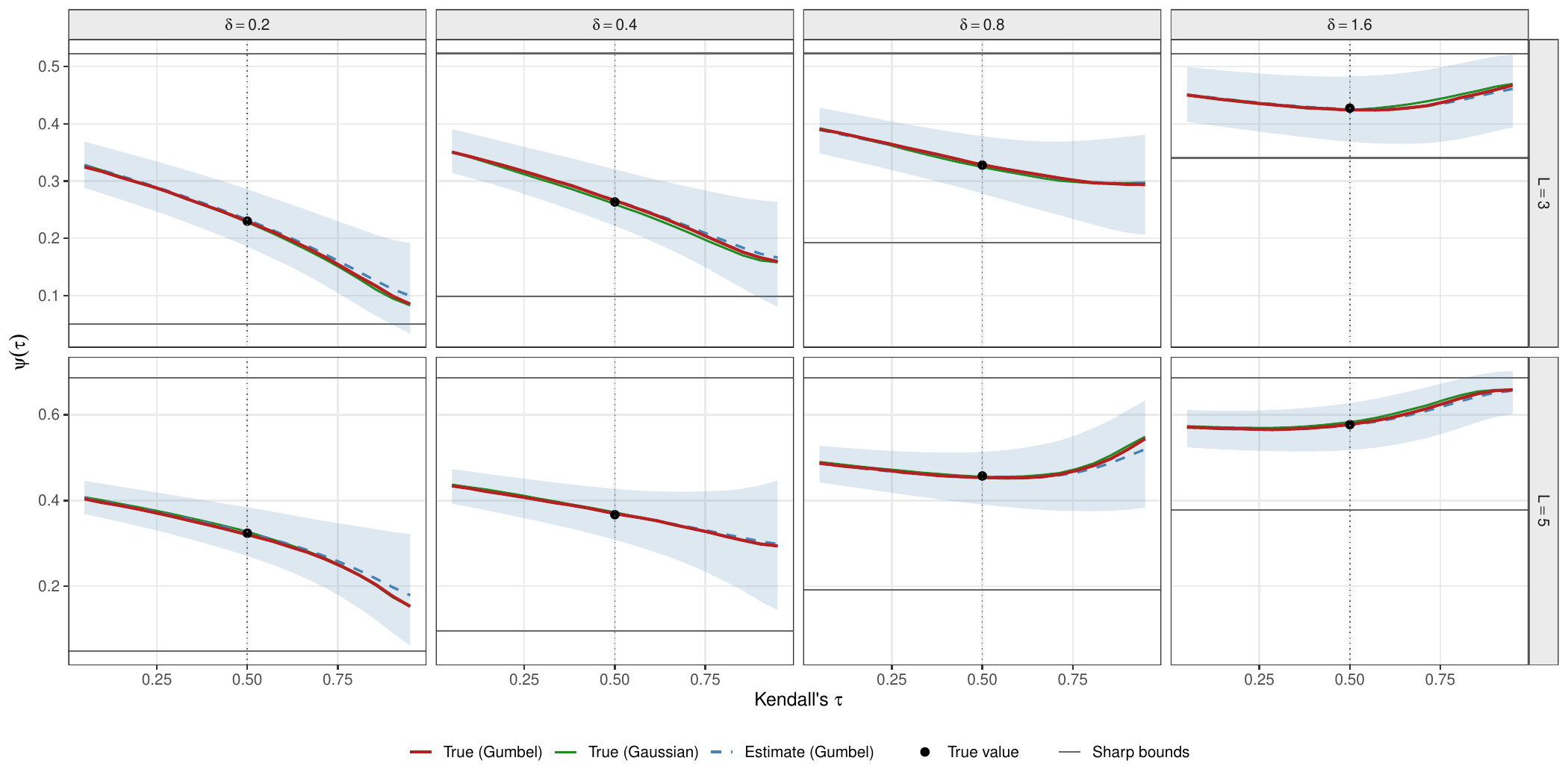}
			\caption{Sensitivity curves $\psi(\tau)$ with rows indexed by $L\in\{3,5\}$ and
				columns by $\delta\in\{0.2,0.4,0.8,1.6\}$.  Solid red and green curves are population truth under Gumbel and Gaussian copulas; dashed blue curve with shaded band is the one-step estimator and Monte Carlo 95\% range over 500 replications at $n=1{,}000$; solid grey
				lines are sharp bounds; black dot marks the truth at Kendall's $\tau=0.5$.}
			\label{fig:sensitivity-panel}
		\end{figure}
		Figure~\ref{fig:sensitivity-panel} displays the sensitivity curves under varying the copula dependence parameter.
		The informativeness of the copula structure depends on the problem
		configuration: finer ordinal scales and weaker treatment effects both widen the sharp
		bounds and steepen the sensitivity curves, as each increases the degrees of freedom in the joint distribution not pinned down by the marginals.  
		Across all panels, the
		one-step estimator $\hat{\psi}_{\mathrm{Par}}$ closely tracks the true Gumbel curve.
		The Gaussian curve closely approximates the Gumbel curve after matching $\tau$,
		although the discrepancy grows at high $\tau$ for large~$L$, suggesting caution in
		copula selection under strong dependence.
		Throughout, the copula-based confidence bands
		are substantially narrower than the sharp bounds, illustrating the inferential gain of the copula-based approach over partial identification alone.
		
		\section{Application: only-child status and psychological health}
		\label{sec:application}
		We re-analyse the publicly available data from the China Family Panel Studies survey studied by
		\cite{zeng2020being}, which evaluated whether being an only child affects self-reported
		psychological health in China. The treatment is the binary only-child status.  Three ordinal
		outcomes are considered: confidence, anxiety, and desperation, each recorded on a
		five-point Likert scale with larger values indicating better psychological condition.
		The study sample consists of individuals born after 1979, aged 16 to 31, stratified
		into four subgroups by residence and sex: rural females ($n=1747$), rural males
		($n=1708$), urban females ($n=407$), and urban males ($n=416$).
		
		Following \citet{zeng2020being}, we apply the proposed approach to each 
		subgroup separately, using their nuisance specifications but without 
		the instrumental variable. The propensity score is modelled by logistic 
		regression on covariates including maternal and paternal education, 
		age, ethnicity, household income, maternal and paternal ages at birth, 
		number of siblings, parental divorce, and parental remarriage. The 
		outcome models are proportional odds regressions on the same covariates 
		and the treatment indicator~$A$. We use Gaussian and Gumbel copulas to 
		capture distinct tail dependence structures and provide a mutual 
		robustness check.
		We focus on the relative treatment effect~$\xi$, which measures whether 
		only-child status is, on balance, beneficial or harmful. Because larger 
		outcome values correspond to better psychological condition, a 
		negative~$\xi$ indicates that only-child status is harmful.
		
		\begin{figure}[tbp]
			\centering
			\includegraphics[width=.9\textwidth]{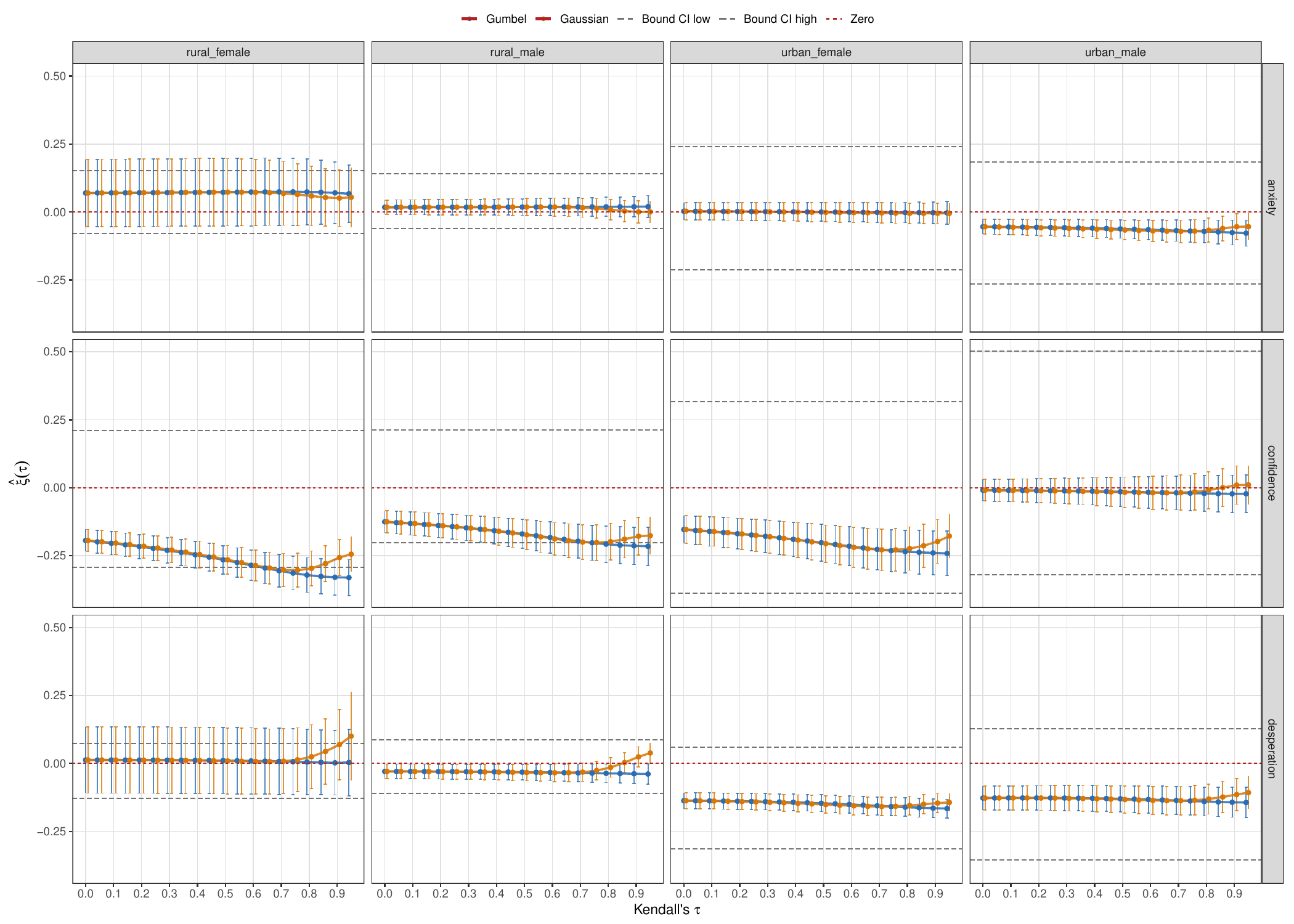}
			\caption{Subgroup sensitivity curves $\hat\xi(\tau)$ under the 
				Gaussian and Gumbel copulas, with pointwise 95\% confidence intervals. 
				The horizontal dashed red line marks $\xi=0$; the dashed grey lines mark the 95\% bootstrap confidence interval based on the sharp bounds 
				of \citet{lu2020sharp}. A curve lying below zero indicates that only-child status is harmful for the corresponding subgroup-outcome 
				combination.}
			\label{fig:application-xi}
		\end{figure}
		
		Figure~\ref{fig:application-xi} displays the subgroup-specific 
		sensitivity curves for~$\xi$ under both copula families, revealing an overall negative effect with heterogeneity across subgroups and 
		outcomes. For anxiety, both the estimated $\hat\xi(\tau)$ curve and its pointwise confidence intervals for urban males lie below zero 
		across the entire displayed range of~$\tau$ under both copula models; the adverse conclusion is therefore robust to both the choice of 
		copula family and the strength of within-pair dependence. No other subgroup exhibits a significant negative effect on anxiety. This pattern is substantively plausible: only-child urban males in China often face heightened parental and societal expectations, which may exacerbate anxiety. For confidence, significant negative effects emerge in the rural female, rural male, and urban female groups, with 
		somewhat larger effects for females than for males and for rural residents than for urban residents, consistent with sex- and region-based disparities in social expectations. For desperation, significant negative effects appear in both urban subgroups; among rural males, the Gumbel curve stays below zero across the grid while the Gaussian curve is more sensitive at large~$\tau$, indicating some copula-family sensitivity. This urban concentration may reflect greater pressure and weaker peer support among urban-only children, while rural family and community ties offer a partial buffer. 
		Our findings are mostly directionally consistent with the original 
		analysis of \citet{zeng2020being}: both indicate adverse effects of 
		only-child status, particularly in urban subgroups. Relative to their 
		analysis, however, we detect a weaker signal for anxiety and a 
		stronger signal for confidence in rural subgroups.
		For comparison, the bootstrapped confidence intervals for the covariate-assisted sharp bounds of \cite{lu2020sharp}, shown as dashed grey lines in each panel, are substantially wider and fail to detect significant effects across all subgroups and outcomes.
		
		We further apply the sensitivity analysis of
		Section~\ref{subs: unconfoundedness SA} to assess robustness to violations of
		unconfoundedness.  For each subgroup, outcome, copula family, and $\tau$, we compute
		the largest $\Gamma \ge 1$ such that the estimated bound for $\xi$ does not cross zero,
		and take the minimum across the $\tau$ grid as the curve-level robustness parameter.
		The consensus value is defined as the smaller of the Gaussian and Gumbel thresholds
		when both copula families yield a uniformly negative baseline curve, anchoring the
		result to the conditional copula curves in Figure~\ref{fig:application-xi} via the
		nesting property.  The consensus $\Gamma$ values are $1.53$ for rural female
		confidence, $1.34$ for rural male confidence, $1.51$ for urban female confidence,
		$1.29$ for urban male anxiety, $1.83$ for urban female desperation, and $1.63$ for
		urban male desperation.  The urban subgroup effects on desperation exhibit the greatest
		robustness, while the confidence effect in rural males is the most sensitive to hidden
		confounding.  Overall, these values indicate that the detected negative effects would
		persist under moderate departures from unconfoundedness.
		As an additional robustness check, we conduct the unconditional copula analysis and detect the same pattern of causal effect directions; details are in Section~\ref{sec:supp-addapp} of the
		Supplementary Material.

		\putbib
	\end{bibunit}
	
		\clearpage
		\makeatletter
		\def\@extra@b@citeb{-supp}
		\def\@extra@binfo{-supp}
		\makeatother
		
		\begin{bibunit}
			\appendix
			
			\begin{center}
				{\LARGE\bfseries Supplementary Material for `` Causal inference with ordinal outcomes: copula-based identification, estimation and sensitivity analysis"\par}
				\vspace{0.8em}
				{\large Peiyu He, Fan Li\par}
			\end{center}
			
			\addcontentsline{toc}{section}{Supplementary Material}
			\setcounter{section}{0}
			\renewcommand{\thesection}{\Alph{section}}
			\renewcommand{\theHsection}{supp.section.\arabic{section}}
			\setcounter{equation}{0}
			\renewcommand{\theequation}{S\arabic{equation}}
			\renewcommand{\theHequation}{supp.equation.\arabic{equation}}
			\setcounter{figure}{0}
			\renewcommand{\thefigure}{S\arabic{figure}}
			\renewcommand{\theHfigure}{supp.figure.\arabic{figure}}
			\setcounter{table}{0}
			\renewcommand{\thetable}{S\arabic{table}}
			\renewcommand{\theHtable}{supp.table.\arabic{table}}
			\providecommand{\theHtheorem}{\thetheorem}
			\providecommand{\theHlemma}{\thelemma}
			\providecommand{\theHcorollary}{\thecorollary}
			\providecommand{\theHproposition}{\theproposition}
			\providecommand{\theHdefinition}{\thedefinition}
			\providecommand{\theHassumption}{\theassumption}
			\providecommand{\theHremark}{\theremark}
			\providecommand{\theHstep}{\thestep}
			\providecommand{\theHcondition}{\thecondition}
			\providecommand{\theHproperty}{\theproperty}
			\providecommand{\theHrestrictions}{\therestrictions}
			\providecommand{\theHexample}{\theexample}
			\providecommand{\theHalgo}{\thealgo}
			\setcounter{theorem}{0}
			\setcounter{lemma}{0}
			\setcounter{corollary}{0}
			\setcounter{proposition}{0}
			\setcounter{definition}{0}
			\setcounter{assumption}{0}
			\setcounter{remark}{0}
			\setcounter{step}{0}
			\setcounter{condition}{0}
			\setcounter{property}{0}
			\setcounter{restrictions}{0}
			\setcounter{example}{0}
			\setcounter{algo}{0}
			\renewcommand{\thetheorem}{S\arabic{theorem}}
			\renewcommand{\thelemma}{S\arabic{lemma}}
			\renewcommand{\thecorollary}{S\arabic{corollary}}
			\renewcommand{\theproposition}{S\arabic{proposition}}
			\renewcommand{\thedefinition}{S\arabic{definition}}
			\renewcommand{\theassumption}{S\arabic{assumption}}
			\renewcommand{\theremark}{S\arabic{remark}}
			\renewcommand{\thestep}{S\arabic{step}}
			\renewcommand{\thecondition}{S\arabic{condition}}
			\renewcommand{\theproperty}{S\arabic{property}}
			\renewcommand{\therestrictions}{S\arabic{restrictions}}
			\renewcommand{\theexample}{S\arabic{example}}
			\renewcommand{\thealgo}{S\arabic{algo}}
			\renewcommand{\theHtheorem}{supp.theorem.\arabic{theorem}}
			\renewcommand{\theHlemma}{supp.lemma.\arabic{lemma}}
			\renewcommand{\theHcorollary}{supp.corollary.\arabic{corollary}}
			\renewcommand{\theHproposition}{supp.proposition.\arabic{proposition}}
			\renewcommand{\theHdefinition}{supp.definition.\arabic{definition}}
			\renewcommand{\theHassumption}{supp.assumption.\arabic{assumption}}
			\renewcommand{\theHremark}{supp.remark.\arabic{remark}}
			\renewcommand{\theHstep}{supp.step.\arabic{step}}
			\renewcommand{\theHcondition}{supp.condition.\arabic{condition}}
			\renewcommand{\theHproperty}{supp.property.\arabic{property}}
			\renewcommand{\theHrestrictions}{supp.restrictions.\arabic{restrictions}}
			\renewcommand{\theHexample}{supp.example.\arabic{example}}
			\renewcommand{\theHalgo}{supp.algo.\arabic{algo}}
			
			\section{Cross-fitting procedure}\label{sec: supp-cf}
			Following \cite{chernozhukov2018double}, we describe a cross-fitting variant of the one-step estimator that accommodates complex machine learning estimators. The procedure is summarized in Algorithm~\ref{alg:crossfit}.
			
			\begin{algo}\label{alg:crossfit}
				Cross-fitted one-step estimator for $\psi$.
				\begin{tabbing}
					\quad \= \quad \= \quad \= \kill
					\> \textit{Input:} Data $\{O_i=(Y_i,A_i,X_i):i=1,\ldots,n\}$, copula $C$, number of folds $K$.\\
					\> Randomly partition $\{1,\ldots,n\}$ into folds $\mathcal I_1,\ldots,\mathcal I_K$.\\
					\> For $s=1,\ldots,K$,\\
					\> \> Let $\mathcal I_s^c=\{1,\ldots,n\}\setminus\mathcal I_s$.\\
					\> \> Using observations in $\mathcal I_s^c$, estimate
					$\hat e^{(-s)}(x)$ and $\hat F_a^{(-s)}(k\mid x)$,
					$a=0,1$, $k=0,\ldots,L-2$.\\
					\> \> For each $i\in\mathcal I_s$, compute
					$\hat m_\psi^{(-s)}(X_i)$ and
					$\hat\Delta_{ak}^{\psi,(-s)}(X_i)$ from
					\eqref{eq: idenpsi} and \eqref{eq: deltas}.\\
					\> \> For each $i\in\mathcal I_s$, set\\
					\> \> \>
					$\displaystyle
					\hat\Psi^{(-s)}(O_i)
					=
					\hat m_\psi^{(-s)}(X_i)
					+
					\sum_{a=0}^{1}
					\frac{\mathbbm{1}(A_i=a)}
					{\hat e^{(-s)}(X_i)^a\{1-\hat e^{(-s)}(X_i)\}^{1-a}}
					\sum_{k=0}^{L-2}
					\hat\Delta_{ak}^{\psi,(-s)}(X_i)
					\{\mathbbm{1}(Y_i\le k)-\hat F_a^{(-s)}(k\mid X_i)\}.
					$\\
					\> Compute\\
					\> \> 
					$\displaystyle
					\hat\psi_{\rm cf}
					=
					\frac{1}{n}
					\sum_{s=1}^{K}\sum_{i\in\mathcal I_s}
					\hat\Psi^{(-s)}(O_i).
					$\\
					\> Estimate the asymptotic variance by\\
					\> \> 
					$\displaystyle
					\hat\sigma_{\rm cf}^{2}
					=
					\frac{1}{n}
					\sum_{s=1}^{K}\sum_{i\in\mathcal I_s}
					\{\hat\Psi^{(-s)}(O_i)-\hat\psi_{\rm cf}\}^{2}.
					$\\
					\> \textit{Output}: Point estimate $\hat\psi_{\rm cf}$ and the confidence interval\\[2pt]
					\> \> 
					$\displaystyle
					\hat\psi_{\rm cf}
					\pm
					z_{1-\alpha/2}\hat\sigma_{\rm cf}/\sqrt n.
					$
				\end{tabbing}
			\end{algo}
			
			Other estimands are similar and omitted.

			\section{Estimation and inference for \texorpdfstring{$\phi$ and $\xi$}{phi and xi}}\label{sec: supp-otherestimands}
			
			This section provides analogous copula-based results for the probability of beneficial effect $\phi = \text{pr}\{Y(1)\ge Y(0)\}$ and the relative treatment effect $\xi = \text{pr}\{Y(1)>Y(0)\} - \text{pr}\{Y(1)<Y(0)\}$.
			Under Assumptions \ref{asmp: causal}--\ref{asmp: copula}, these estimands are identified by $
			\phi = E\{m_\phi(X)\}, \xi = E\{m_\xi(X)\}$, where
			\[\begin{aligned}
				m_\phi(x) &= \text{pr}\{Y(1)\ge Y(0)\mid X=x\}
				= \sum_{k=0}^{L-1}\sum_{j=0}^{k}\pi_{kj}(x),\\
				m_\xi(x) &= \text{pr}\{Y(1)>Y(0)\mid X=x\} - \text{pr}\{Y(1)<Y(0)\mid X=x\} \\
				&= \sum_{k=0}^{L-1}\left\{\sum_{j=0}^{k}\pi_{kj}(x)+ \sum_{j=0}^{k-1}\pi_{kj}(x)\right\}.
			\end{aligned}\]
			Here we use $\text{pr}\{Y(1)<Y(0)\}=1-\phi$ and $
			\xi = \phi + \psi - 1, m_\xi(x)=m_\phi(x)+m_\psi(x)-1.$
			
			For $k=0,\ldots,L-2$, define
			\[\begin{aligned}
				\Delta^\phi_{1k}(x)
				&=
				\frac{\partial m_\phi(x)}{\partial F_1(k\mid x)}
				=
				C_1\{F_1(k\mid x),F_0(k\mid x)\}
				-
				C_1\{F_1(k\mid x),F_0(k+1\mid x)\}\\
				\Delta^\phi_{0k}(x)
				&=
				\frac{\partial m_\phi(x)}{\partial F_0(k\mid x)}
				=
				C_0\{F_1(k\mid x),F_0(k\mid x)\}
				-
				C_0\{F_1(k-1\mid x),F_0(k\mid x)\}.
			\end{aligned}\]
			Because $m_\xi(x)=m_\phi(x)+m_\psi(x)-1$, the corresponding derivatives for $\xi$ are
			\[
			\begin{aligned}
				\Delta^\xi_{1k}(x)
				&=
				\Delta^\phi_{1k}(x)+\Delta^{\psi}_{1k}(x)
				=
				\dot{C}_1\{F_1(k\mid x),F_0(k-1\mid x)\}
				-
				\dot{C}_1\{F_1(k\mid x),F_0(k+1\mid x)\},\\
				\Delta^\xi_{0k}(x)
				&=
				\Delta^\phi_{0k}(x)+\Delta^{\psi}_{0k}(x)
				=
				\dot{C}_0\{F_1(k+1\mid x),F_0(k\mid x)\}
				-
				\dot{C}_0\{F_1(k-1\mid x),F_0(k\mid x)\}.
			\end{aligned}
			\]
			The efficient influence functions for $\phi$ and $\xi$ are summarized as follows.
			\begin{proposition}
				Under Assumptions \ref{asmp: causal}--\ref{asmp: copula}, the efficient influence functions for $\phi$ and $\xi$ in the nonparametric model $\mathcal{M}_{\rm np}$ are
				\[
				\begin{aligned}
					IF_\phi(O)
					&=
					\{m_\phi(X)-\phi\}
					+
					\sum_{a=0}^1
					\frac{\mathbbm{1}(A=a)}{e(X)^a\{1-e(X)\}^{1-a}}
					\sum_{k=0}^{L-2}
					\Delta^\phi_{ak}(X)\,
					\{\mathbbm{1}(Y\le k)-F_a(k\mid X)\},\\
					IF_\xi(O)
					&=
					\{m_\xi(X)-\xi\}
					+
					\sum_{a=0}^1
					\frac{\mathbbm{1}(A=a)}{e(X)^a\{1-e(X)\}^{1-a}}
					\sum_{k=0}^{L-2}
					\Delta^\xi_{ak}(X)\,
					\{\mathbbm{1}(Y\le k)-F_a(k\mid X)\}.
				\end{aligned}
				\]
				The corresponding semiparametric efficiency bounds are $
				E\{IF_\phi(O)^2\} $ and $E\{IF_\xi(O)^2\}$.
			\end{proposition}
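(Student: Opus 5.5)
The plan is to treat $\phi$ and $\xi$ exactly as $\psi$ was treated in Proposition~\ref{prop: eif}. Each is a smooth functional $E\{m(X)\}$, where $m(x)$ is a finite-dimensional function of the conditional margins $\{F_a(k\mid x): a=0,1,\ k=0,\ldots,L-2\}$ (note $F_a(L-1\mid x)=1$ and $F_a(-1\mid x)=0$). The efficient influence function then follows from a pathwise derivative computed by the chain rule.

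\textbf{Step 1: reduce $m_\phi$ to single copula evaluations.} Telescoping the rectangle increments over $j\le k$ gives
\[
m_\phi(x)=\sum_{k=0}^{L-1}\big[C\{F_1(k\mid x),F_0(k\mid x)\}-C\{F_1(k-1\mid x),F_0(k\mid x)\}\big].
\]
Differentiating with respect to $F_1(k\mid x)$ picks up the term $C\{F_1(k),F_0(k)\}$ with sign $+$ and the term $C\{F_1(k),F_0(k+1)\}$ with sign $-$, which yields $\Delta^\phi_{1k}$. Similarly, the derivative with respect to $F_0(k\mid x)$ yields $\Delta^\phi_{0k}$, the difference of $\dot C_0$ evaluated at $F_1(k)$ and at $F_1(k-1)$. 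The boundary cases $k=0$ and $k=L-2$ are covered by the conventions $F_a(-1)=0$ and $F_a(L-1)=1$, together with $C(u,1)=u$ and $C(1,v)=v$.

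\textbf{Step 2: compute the pathwise derivative.} Take a regular parametric submodel $p_t$ with score $S=S_X+S_{A\mid X}+S_{Y\mid A,X}$. Differentiating $\phi_t=E_t\{m_{\phi,t}(X)\}$ gives
\[
E[\{m_\phi(X)-\phi\}S_X]+E\Big[\sum_{a,k}\Delta^\phi_{ak}(X)\,\partial_t F_{a,t}(k\mid X)\Big].
\]
Next, write $\partial_t F_{a,t}(k\mid x)=E[\{\mathbbm{1}(Y\le k)-F_a(k\mid x)\}S_{Y\mid A,X}\mid A=a,X=x]$. Multiplying and dividing by $\text{pr}(A=a\mid X)$ converts the second term into
\[
E\Big[\frac{\mathbbm{1}(A=a)}{e^a(1-e)^{1-a}}\,\Delta^\phi_{ak}(X)\{\mathbbm{1}(Y\le k)-F_a(k\mid X)\}\,S\Big].
\]
This works because the augmentation term has conditional mean zero given $(A,X)$, so it is orthogonal to $S_X$ and $S_{A\mid X}$. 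Because the model is nonparametric, the tangent space is all of $L_2^0$. The unique gradient is therefore the efficient influence function, and the efficiency bound is $E\{IF_\phi(O)^2\}$.

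\textbf{Step 3: treat $\xi$ by linearity.} Since $\xi=\phi+\psi-1$, pathwise derivatives add, so $IF_\xi=IF_\phi+IF_\psi$, with $\Delta^\xi_{ak}=\Delta^\phi_{ak}+\Delta^\psi_{ak}$. To confirm the stated closed forms, check that the intermediate $\dot C_1\{F_1(k),F_0(k)\}$ terms in $\Delta^\phi_{1k}$ and $\Delta^\psi_{1k}$ cancel, and that the $\dot C_0\{F_1(k),F_0(k)\}$ terms cancel in the $a=0$ case.

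\textbf{Main obstacle.} The main difficulty is bookkeeping in Step 1: getting the index shifts and boundary terms right so that each $F_a(k)$ appears in exactly two copula evaluations with opposite signs. The regularity needed is differentiability of $C$ together with margins bounded away from $0$ and $1$. Under these conditions the differentiation under the expectation is justified by dominated convergence, since the $\dot C$ terms are bounded.
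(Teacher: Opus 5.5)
Your proposal is correct and follows essentially the same route as the paper. The paper also obtains $IF_\phi$ by repeating the pathwise-differentiation argument of Proposition~\ref{prop: eif}: chain rule through the conditional margins, rewriting $\partial_t F_{a,t}(k\mid x)$ as an inverse-probability-weighted residual, and saturation of the nonparametric tangent space. It obtains $IF_\xi$ from the linear identity $\xi=\phi+\psi-1$ exactly as in your Step 3. Your telescoped form of $m_\phi$ and the cancellation check for $\Delta^\xi_{ak}$ are correct, and they supply index bookkeeping that the paper leaves implicit.
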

			
			Let $\hat e(x)$ and $\hat F_a(k\mid x)$ be flexible estimators of $e(x)$ and $F_a(k\mid x)$, and let
			$\hat m_\phi(x)$, $\hat m_\xi(x)$, $\hat\Delta^\phi_{ak}(x)$, and $\hat\Delta^\xi_{ak}(x)$
			be the corresponding plug-in estimators. The efficient one-step estimators are
			\[
			\begin{aligned}
				\hat\phi
				&=
				\hat E\left(
				\hat m_\phi(X)
				+
				\sum_{a=0}^1
				\frac{\mathbbm{1}(A=a)}{\hat e(X)^a\{1-\hat e(X)\}^{1-a}}
				\sum_{k=0}^{L-2}
				\hat\Delta^\phi_{ak}(X)\,
				\{\mathbbm{1}(Y\le k)-\hat F_a(k\mid X)\}
				\right),\\
				\hat\xi
				&=
				\hat E\left(
				\hat m_\xi(X)
				+
				\sum_{a=0}^1
				\frac{\mathbbm{1}(A=a)}{\hat e(X)^a\{1-\hat e(X)\}^{1-a}}
				\sum_{k=0}^{L-2}
				\hat\Delta^\xi_{ak}(X)\,
				\{\mathbbm{1}(Y\le k)-\hat F_a(k\mid X)\}
				\right).
			\end{aligned}
			\]
			
			If the same nuisance estimators are used throughout, then $
			\hat\xi = \hat\phi + \hat\psi - 1$.
			
			\begin{theorem}
				Assume the conditions of Theorem \ref{thm:rate}. 
				Then
				\[
				\begin{aligned}
					\hat\phi - \phi
					&=
					\hat E\{IF_\phi(O)\}
					+
					O_p(\mathrm{Rem})
					+
					o_p(n^{-1/2}),\\
					\hat\xi - \xi
					&=
					\hat E\{IF_\xi(O)\}
					+
					O_p(\mathrm{Rem})
					+
					o_p(n^{-1/2}),
				\end{aligned}
				\]
				where $
				\mathrm{Rem}
				=
				\|\hat F_1-F_1\|_2^2
				+
				\|\hat F_0-F_0\|_2^2
				+
				\|\hat e-e\|_2
				\left(
				\|\hat F_1-F_1\|_2
				+
				\|\hat F_0-F_0\|_2
				\right).$
				
				If $
				\|\hat e-e\|_2=o_p(n^{-1/4}),
				\|\hat F_a-F_a\|_2=o_p(n^{-1/4}), a=0,1,$
				then $\mathrm{Rem}=o_p(n^{-1/2})$ and
				\[
				\begin{aligned}
					n^{1/2}(\hat\phi-\phi)
					&\Rightarrow
					N\left(0,E\{IF_\phi(O)^2\}\right),\\
					n^{1/2}(\hat\xi-\xi)
					&\Rightarrow
					N\left(0,E\{IF_\xi(O)^2\}\right).
				\end{aligned}
				\]
				Therefore, $\hat\phi$ and $\hat\xi$ achieve the semiparametric efficiency bound under the same
				rate conditions as $\hat\psi$.
			\end{theorem}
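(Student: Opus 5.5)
The plan is to follow the proof of Theorem~\ref{thm:rate} for $\psi$, with $(m_\psi,\Delta^\psi_{ak})$ replaced by $(m_\phi,\Delta^\phi_{ak})$. The claim for $\xi$ then follows by linearity, with no new work. Since $m_\xi=m_\phi+m_\psi-1$ and $\Delta^\xi_{ak}=\Delta^\phi_{ak}+\Delta^\psi_{ak}$, we have, with common nuisance estimators, $\hat\xi-\xi=(\hat\phi-\phi)+(\hat\psi-\psi)$ and $IF_\xi=IF_\phi+IF_\psi$. Adding the $\phi$ expansion to that of Theorem~\ref{thm:rate} gives the $\xi$ expansion with remainder $O_p(\mathrm{Rem})+o_p(n^{-1/2})$. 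The central limit theorem for $\xi$ follows the same way. So the substantive task is $\phi$ alone.

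For $\phi$, write $\hat\Psi_\phi(O)$ for the estimated uncentred influence function, so that $\hat\phi=\hat E\{\hat\Psi_\phi(O)\}$. Decompose
\[
\hat\phi-\phi=\hat E\{IF_\phi(O)\}+(\hat E-E)\{\hat\Psi_\phi-\Psi_\phi\}+E\{\hat\Psi_\phi(O)\}-\phi .
\]
\textbf{Empirical-process term.} This term is $o_p(n^{-1/2})$ by the Donsker condition (iv) together with $L_2$ consistency of $\hat\Psi_\phi$. That consistency holds for three reasons. By (i), $\hat e$ is bounded away from $0$ and $1$. By (ii), $\dot C_1,\dot C_0$ are Lipschitz on $[c,1-c]^2$, so $\|\hat\Delta^\phi_{ak}-\Delta^\phi_{ak}\|_2\lesssim\|\hat F_1-F_1\|_2+\|\hat F_0-F_0\|_2$. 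Finally, $\hat m_\phi$ is Lipschitz in the margins. (Under cross-fitting this step is replaced by a conditional Chebyshev argument.)

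\textbf{Bias term.} Condition on $X$ and use $E\{\mathbbm 1(A=a)\mathbbm 1(Y\le k)\mid X\}=e_a(X)F_a(k\mid X)$, writing $e_1=e$ and $e_0=1-e$. This gives
\[
E\hat\Psi_\phi-\phi=E\Big[\hat m_\phi-m_\phi+\sum_{a,k}\tfrac{e_a}{\hat e_a}\hat\Delta^\phi_{ak}(F_a-\hat F_a)\Big].
\]
Split $e_a/\hat e_a=1+(e_a-\hat e_a)/\hat e_a$. The part with $(e_a-\hat e_a)$ is bounded by Cauchy--Schwarz, with $\hat\Delta$ and $1/\hat e_a$ bounded, giving $\|\hat e-e\|_2(\|\hat F_1-F_1\|_2+\|\hat F_0-F_0\|_2)$. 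What remains is
\[
E\Big[\hat m_\phi-m_\phi-\sum_{a,k}\hat\Delta^\phi_{ak}(\hat F_a-F_a)\Big].
\]
This is the error of a first-order Taylor expansion of $m_\phi$, viewed as a function of the $2(L-1)$ margin values and expanded at $\hat F$, where $\hat\Delta^\phi_{ak}=\partial m_\phi/\partial F_a(k)$ evaluated at $\hat F$. Since $C$ has bounded second derivatives on $[c,1-c]^2$ by (ii), the error is pointwise $\lesssim\sum_{a,k}(\hat F_a-F_a)^2$, and taking expectations gives $\|\hat F_1-F_1\|_2^2+\|\hat F_0-F_0\|_2^2$. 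Adding the two pieces yields $O_p(\mathrm{Rem})$.

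\textbf{Main obstacle.} I expect the main obstacle to be verifying that the displayed $\Delta^\phi_{ak}$ are exactly the gradient of $m_\phi$ in the free margins $F_a(k)$, $k\le L-2$. The boundary values $F_a(-1)=0$ and $F_a(L-1)=1$ are fixed, so terms involving them drop out. To check this, write
\[
m_\phi=\sum_k\big[C\{F_1(k),F_0(k)\}-C\{F_1(k-1),F_0(k)\}\big],
\]
obtained by summing the rectangle increments over $j\le k$. Differentiating in $F_1(k)$ gives $\dot C_1\{F_1(k),F_0(k)\}-\dot C_1\{F_1(k),F_0(k+1)\}$, and similarly in $F_0(k)$, which matches the stated $\Delta^\phi_{ak}$. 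One more point needs care. Near the boundaries, evaluations like $C\{F_1(k),1\}$ have an argument outside $[c,1-c]$. These evaluations are either constant or linear (since $C(u,1)=u$), so the second-order bound is unaffected. Once these checks are done, the rate conditions make $\mathrm{Rem}=o_p(n^{-1/2})$, and Slutsky's lemma together with the central limit theorem applied to $\hat E\{IF_\phi\}$ give the normal limits. Efficiency then follows from the preceding proposition identifying $IF_\phi$ and $IF_\xi$ as efficient influence functions.
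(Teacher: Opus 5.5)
Your proposal is correct and follows essentially the same route as the paper: the $\phi$ case repeats the argument of Theorem~\ref{thm:rate} with $(m_\phi,\Delta^\phi_{ak})$ in place of $(m_\psi,\Delta^\psi_{ak})$, and the $\xi$ case follows from the linear identities $\xi=\phi+\psi-1$, $IF_\xi=IF_\phi+IF_\psi$ and $\hat\xi=\hat\phi+\hat\psi-1$. The only difference is cosmetic: you Taylor-expand $m_\phi$ at $\hat F$ rather than at $F$, so the bias term reduces directly to a quadratic remainder plus a propensity cross term without the paper's separate Lipschitz bound on $\Delta_{ak}-\hat\Delta_{ak}$, and your explicit checks of the gradient $\Delta^\phi_{ak}$ and of the boundary evaluations fill in details the paper leaves as ``entirely parallel.''
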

			
			Proofs in this section are entirely parallel to those of Proposition \ref{prop: eif} and Theorem \ref{thm:rate}. 
			For $\phi$, one applies the same pathwise
			differentiation argument to the functional $m_\phi(x)$ above. For $\xi$, the result follows either
			by repeating the same calculation with $m_\xi(x)$, or directly from the linear identity $
			\xi=\phi+\psi-1,$
			which implies $IF_\xi=IF_\phi+IF_\psi $
			and yields the one-step estimator and rate-doubly-robust expansion immediately. 
			Similarly, the Donsker condition can be removed by cross-fitting. Further details are omitted.
			
			\section{The alternative unconditional copula model}\label{sec: supp-unc}
			
			We retain Assumption 1 and the nonparametric observed-data model $\mathcal{M}_{\mathrm{np}}$, but replace Assumption 2 by an unconditional copula model for the joint distribution of the potential outcomes:
			\begin{assumption}\label{asmp: copulauncond}
				There exists a known copula $C_\rho$ such that
				\begin{equation}\label{eq: copulauncond}
					\text{pr}\{Y(1)\le k,\;Y(0)\le j\}
					=
					C_\rho\!\left\{F_1(k),\,F_0(j)\right\}, \quad \forall k, j = 0,\ldots,L-1.
				\end{equation}
			\end{assumption}
			Here $F_a(k) = \text{pr}\{Y(a) \le k\}, a=0,1.$
			This assumption impose the copula structure directly on the unconditional joint law of $\{Y(1),Y(0)\}$, the copula parameter encodes the unconditional dependence between $\{Y(1),Y(0)\}$, or can be interpreted to characterize the rank dependence between latent continuous variable $\{Y^*(1), Y^*(0)\}$ through an analogous argument of Proposition \ref{prop:latent-ordinal-copula} in the main text. 
			Compared with Assumption \ref{asmp: copula}, this model is more general because the dependence between $Y(1)$ and $Y(0)$ is no longer restricted to be the same across $X$. However, it has no ordinal regression interpretations.
			
			\begin{proposition}[Latent continuous interpretation]\label{prop:unc-latent-ordinal-copula}
				Suppose for $a = 0,1$,
				\[Y(a) = \ell \iff \lambda_{\ell-1} < Y^*(a) \leq \lambda_\ell, \quad \ell = 0, \ldots, L-1,\]
				where $Y^*(a)$ is latent continuous potential outcome and $-\infty = \lambda_{-1} < \lambda_0 < \cdots < \lambda_{L-2} < \lambda_{L-1} = +\infty$ are thresholds shared by both treatment arms.
				If the joint distribution of the latent
				continuous potential outcomes satisfies
				\[
				\mathrm{pr}\!\left\{Y^*(1) \leq y_1,\, Y^*(0) \leq y_0\right\} = C_\rho\!\left\{F^*_1(y_1),\, F^*_0(y_0)\right\}, \quad \forall\, y_1, y_0 \in \mathbb{R},
				\]
				where $F^*_a(y) = \mathrm{pr}\{Y^*(a) \leq y\}$, $a = 0, 1$, then Assumption~\ref{asmp: copulauncond} holds with the same copula $C_{\rho}$. 
			\end{proposition}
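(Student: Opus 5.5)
The argument rests on the fact that the discretization events are monotone events in the latent variables, so the joint CDF of the ordinal potential outcomes is a joint CDF of the latent ones evaluated at threshold points. The first step is to fix $k,j\in\{0,\ldots,L-1\}$ and use the threshold rule to rewrite $\{Y(a)\le k\}=\{Y^*(a)\le\lambda_k\}$. This holds because the intervals $(\lambda_{\ell-1},\lambda_\ell]$ for $\ell=0,\ldots,k$ are disjoint and their union is $(-\infty,\lambda_k]$. The thresholds are shared across arms, so the same $\lambda_k$ appears in both arms.

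Next, take the case $k,j\le L-2$, where $\lambda_k$ and $\lambda_j$ are finite. Applying the copula hypothesis at $(y_1,y_0)=(\lambda_k,\lambda_j)$ gives
\[
\mathrm{pr}\{Y(1)\le k,\,Y(0)\le j\}=\mathrm{pr}\{Y^*(1)\le\lambda_k,\,Y^*(0)\le\lambda_j\}=C_\rho\{F_1^*(\lambda_k),F_0^*(\lambda_j)\}.
\]
The same monotone-event identity, applied marginally, gives $F_a^*(\lambda_k)=\mathrm{pr}\{Y^*(a)\le\lambda_k\}=\mathrm{pr}\{Y(a)\le k\}=F_a(k)$. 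Substituting this yields \eqref{eq: copulauncond} for these indices.

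The boundary case, where $k$ or $j$ equals $L-1$, is the only point needing care, since $\lambda_{L-1}=+\infty$ cannot be plugged into the hypothesis as a real argument. Here $F_a(L-1)=1$, and the boundary property of copulas, $C_\rho(u,1)=u$ and $C_\rho(1,v)=v$, reduces both sides to the appropriate marginal probability. For example, with $k=L-1$, the left side is $\mathrm{pr}\{Y(0)\le j\}=F_0(j)=C_\rho\{1,F_0(j)\}$. Alternatively, one can take limits $y_1\uparrow\infty$ in the hypothesis and use continuity of probability measures together with the continuity of $C_\rho$.

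This completes the plan. It mirrors the proof of Proposition~\ref{prop:latent-ordinal-copula} with the conditioning on $X$ and the regression shift $\eta_a(X)$ removed, so no residual transformation is needed. I expect no substantive obstacle.
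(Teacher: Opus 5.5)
Your proposal is correct and follows essentially the same route as the paper: rewrite $\{Y(a)\le k\}=\{Y^*(a)\le\lambda_k\}$ so that $F_a(k)=F_a^*(\lambda_k)$, then evaluate the latent copula identity at $(y_1,y_0)=(\lambda_k,\lambda_j)$. The only difference is that you spell out the boundary case $k$ or $j=L-1$ via the copula boundary conditions $C_\rho(u,1)=u$ and $C_\rho(1,v)=v$, which the paper simply calls easy to check.
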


			Under Assumption \ref{asmp: causal}, the unconditional margins are identified by either outcome regression or inverse probability weighting:
			\begin{equation}
				F_a(k) = E\{F_a(k \mid X)\}
				= E\left[\frac{\mathbbm{1}(A=a)\mathbbm{1}(Y \le k)}{\text{pr}(A=a\mid X)}\right], \quad a=0,1.
			\end{equation}
			Set $F_a(-1)=0$ and $F_a(L-1)=1$. 
			Then the joint cell probabilities $\pi_{kj} = \text{pr}\{Y(1)=k, Y(0)=j\}$ are identified by rectangle increments under Assumption \ref{asmp: copulauncond}:
			$
			\pi_{kj} = C\{F_1(k),F_0(j)\} - C\{F_1(k-1),F_0(j)\}
			- C\{F_1(k),F_0(j-1)\} + C\{F_1(k-1),F_0(j-1)\}.$
			And $\psi$ is identified by
			\begin{equation}\label{eq: uncond-iden-formula}
				\psi = \text{pr}\{Y(1)>Y(0)\}
				= \sum_{k=0}^{L-1}\sum_{j=0}^{k-1}\pi_{kj}
				= \sum_{k=0}^{L-1}\left[C\{F_1(k),F_0(k-1)\} - C\{F_1(k-1),F_0(k-1)\}\right].
			\end{equation}
			Therefore, under the unconditional copula model, $\psi$ is identified as a smooth functional of the finite-dimensional vector $\{F_1(0),\ldots,F_1(L-2),F_0(0),\ldots,F_0(L-2)\}$.
			
			Assume that $C(u,v)$ is differentiable in each argument, with partial derivatives
			\[
			\dot{C}_1(u,v) = \partial C(u,v)/\partial u, \quad
			\dot{C}_0(u,v) = \partial C(u,v)/\partial v.
			\]
			For $k=0,\ldots,L-2$, define
			\begin{equation}\label{eq: uncond-deriv}
				\begin{aligned}
					\Delta_{1k}^{\psi}
					&= \frac{\partial \psi}{\partial F_1(k)}
					= \dot{C}_1\{F_1(k),F_0(k-1)\} - \dot{C}_1\{F_1(k),F_0(k)\},\\
					\Delta_{0k}^{\psi}
					&= \frac{\partial \psi}{\partial F_0(k)}
					= \dot{C}_0\{F_1(k+1),F_0(k)\} - \dot{C}_0\{F_1(k),F_0(k)\}.
				\end{aligned}
			\end{equation}
			
			For each $a=0,1$ and $k=0,\ldots,L-2$, define
			\[
			D_{ak}(O)
			=
			\frac{\mathbbm{1}(A=a)}{e(X)^a\{1-e(X)\}^{1-a}}
			\{\mathbbm{1}(Y \le k)-F_a(k \mid X)\}
			+
			F_a(k \mid X)-F_a(k).
			\]
			Here $D_{ak}(O)$ is the efficient influence function for the marginal parameter $F_a(k)$ in $\mathcal{M}_{\mathrm{np}}$. By the chain rule, the efficient influence function for $\psi$ is
			\[
			IF_\psi(O)
			=
			\sum_{k=0}^{L-2}\Delta^{\psi}_{1k}D_{1k}(O)
			+
			\sum_{k=0}^{L-2}\Delta^{\psi}_{0k}D_{0k}(O).
			\]
			We summarize the results in the following proposition.
			\begin{proposition}[Efficient influence function]\label{prop:eifuncond}
				Under Assumptions~\ref{asmp: causal} and \ref{asmp: copulauncond}, the efficient influence function for $\psi$
				in the nonparametric model $\mathcal{M}_{\mathrm{np}}$ is
				\[
				IF_\psi(O)
				=
				\sum_{a=0}^1\sum_{k=0}^{L-2}\Delta^{\psi}_{ak}
				\left\{
				\frac{\mathbbm{1}(A=a)}{e(X)^a\{1-e(X)\}^{1-a}}
				\{\mathbbm{1}(Y \le k)-F_a(k \mid X)\}
				+
				F_a(k \mid X)-F_a(k)
				\right\}
				\]
				The semiparametric efficiency bound for $\psi$ in $\mathcal{M}_{\rm np}$ equals $E\{IF_\psi(O)^2\}$.
			\end{proposition}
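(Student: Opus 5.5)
The plan is to reduce the claim to two ingredients: the efficient influence functions of the marginal parameters $F_a(k)$, and a delta-method (chain rule) step. Identification formula \eqref{eq: uncond-iden-formula} writes $\psi=g(\theta)$ for the finite-dimensional vector $\theta=\{F_1(0),\ldots,F_1(L-2),F_0(0),\ldots,F_0(L-2)\}$. Here $g(\theta)=\sum_{k}[C\{F_1(k),F_0(k-1)\}-C\{F_1(k-1),F_0(k-1)\}]$, with the conventions $F_a(-1)=0$ and $F_a(L-1)=1$. Since $C$ is differentiable and $\theta$ lies in the interior of $(0,1)^{2(L-1)}$ under overlap, $g$ is differentiable there. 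So it suffices to show two things: (a) each $F_a(k)$ is pathwise differentiable in $\mathcal{M}_{\rm np}$ with influence function $D_{ak}$; and (b) $\partial g/\partial F_a(k)=\Delta^\psi_{ak}$ as in \eqref{eq: uncond-deriv}.

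For (a), I take a regular parametric submodel $p_t(y,a,x)=p_t(x)p_t(a\mid x)p_t(y\mid a,x)$ with score $S=S_X+S_{A\mid X}+S_{Y\mid A,X}$. I then differentiate $F_a(k)=E_t\{\text{pr}_t(Y\le k\mid A=a,X)\}$ at $t=0$. This gives $E[\{F_a(k\mid X)-F_a(k)\}S_X]+E[\{\mathbbm{1}(A=a)/\text{pr}(A=a\mid X)\}\{\mathbbm{1}(Y\le k)-F_a(k\mid X)\}S_{Y\mid A,X}]$. Both terms equal $E\{D_{ak}(O)S(O)\}$, because of the following orthogonality facts: the second summand of $D_{ak}$ is mean zero given $X$, so it is orthogonal to $S_{A\mid X}$ and $S_X$; the first summand is mean zero given $(A,X)$, so it is orthogonal to $S_{A\mid X}$ and $S_X$; and the first summand of $D_{ak}$ is orthogonal to $S_{Y\mid A,X}$. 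This is the standard ATE-type calculation, and the paper already states $D_{ak}$ as the EIF of $F_a(k)$.

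For (b), I differentiate the telescoped sum term by term. The coordinate $F_1(k)$ appears in $C\{F_1(k),F_0(k-1)\}$, with a plus sign from summand $k$, and in $-C\{F_1(k),F_0(k)\}$, from summand $k+1$. This yields $\Delta^\psi_{1k}$. Similarly, $F_0(k)$ appears in $C\{F_1(k+1),F_0(k)\}$ and $-C\{F_1(k),F_0(k)\}$, both from summand $k+1$, which gives $\Delta^\psi_{0k}$. The boundary terms with $F_a(-1)=0$ and $F_a(L-1)=1$ are constants and contribute no derivatives.

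The chain rule then gives $\frac{d}{dt}\psi_t|_{t=0}=\sum_{a,k}\Delta^\psi_{ak}\frac{d}{dt}F_{a,t}(k)=E[\{\sum_{a,k}\Delta^\psi_{ak}D_{ak}(O)\}S(O)]$. So $IF_\psi$ is an influence function, and it is mean zero with finite variance under overlap. Because $\mathcal{M}_{\rm np}$ is nonparametric, its tangent space is all of $L_2^0(P)$, so the influence function is unique and therefore efficient. The bound $E\{IF_\psi(O)^2\}$ follows. The only delicate point is the bookkeeping in (b), especially the index shifts at the boundary $k=L-2$ where $F_1(L-1)=1$. Under the Assumption~\ref{asmp: copulauncond} model, the copula imposes no restriction on the observed law, so nonparametric efficiency applies.
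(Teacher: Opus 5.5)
Your proposal is correct and follows the paper's proof. You establish $D_{ak}$ as the efficient influence function of $F_a(k)$ by checking the $S_X$, $S_{A\mid X}$ and $S_{Y\mid A,X}$ components separately, apply the chain rule with $\partial\psi/\partial F_a(k)=\Delta^\psi_{ak}$ (your index bookkeeping for these derivatives, which the paper only asserts, is right), and conclude efficiency because the nonparametric tangent space is saturated. One slip in your orthogonality list for (a): the second summand $F_a(k\mid X)-F_a(k)$ is a function of $X$ alone, not mean zero given $X$, so it is orthogonal to $S_{A\mid X}$ and $S_{Y\mid A,X}$ but not to $S_X$, and it is the first (inverse-weighted residual) summand that carries the $S_{Y\mid A,X}$ contribution, as your displayed derivative already shows.
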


			We next construct a doubly-robust estimator that is consistent if either working model of $e(X)$ and $ F_a(k \mid X)$ is correct, and it can achieve the semiparametric efficiency bound under both models are correct. 
			Given estimators $\hat e(X)$ and $\hat F_a(k \mid X)$, define
			\[
			\hat F_a^{\rm dr}(k)
			=
			\hat E\left[
			\hat F_a(k \mid X)
			+
			\frac{\mathbbm{1}(A=a)}{\hat e(X)^a\{1-\hat{e}(X)\}^{1-a}}
			\{\mathbbm{1}(Y \le k)-\hat F_a(k \mid X)\}
			\right], \quad a=0,1,\; k=0,\ldots,L-2,\]
			where $\hat E$ is the empirical mean operator. 
			Set $\hat F_a^{\,dr}(-1)=0$ and $\hat F_a^{\,dr}(L-1)=1$, and define
			\begin{equation}\label{eq: uncdrest}
				\hat\psi_{\rm dr}
				=
				\sum_{k=0}^{L-1}
				\left[
				C\{\hat F_1^{\rm dr}(k),\hat F_0^{\rm dr}(k-1)\}
				-
				C\{\hat F_1^{\rm dr}(k-1),\hat F_0^{\rm dr}(k-1)\}
				\right].
			\end{equation}
			This is a substitution estimator obtained by plugging the doubly robust estimators of the unconditional margins into the copula functional.
			
			\begin{theorem}\label{thm: uncthm}
				Under Assumptions \ref{asmp: causal} and \ref{asmp: copulauncond}, suppose $c \le e(X), \hat e(X) \le 1-c$ for some $c>0$, 
				then:
				
				\begin{enumerate}
					\item[(i)] \textit{(Double robustness)} If either $\hat{e}(X)$ is consistent for $e(X)$, or $\hat{F}_a(k\mid X)$ is consistent
					for $F_a(k\mid X)$ for all $a = 0,1$ and $k=0,\ldots,L-2$, then
					\[
					\hat\psi_{\rm dr} \stackrel{p}{\longrightarrow} \psi.
					\]
					
					\item[(ii)] \textit{(Asymptotic normality and efficiency.)} If $c\leq F_a(k\mid X), \hat F_a(k\mid x)\leq 1-c$, $\|\hat e-\bar e\|_2 = o_p(1), \|\hat F_a-\bar F_a\|_2 = o_p(1), a=0,1$, the copula $C_{\rho}$ is twice continuously differentiable on $[c, 1-c]^2$ with bounded second partial derivatives, and all nuisance functions and their estimators are in a Donsker class, then
					\[
					\hat\psi_{\rm dr}-\psi
					=
					\hat E\{IF_\psi(O)\}
					+ O_p(\mathrm{Rem})
					+ o_p(n^{-1/2}),
					\]
					where $
					\mathrm{Rem}
					=
					\|\hat e-e\|_2
					\sum_{a=0}^1 \|\hat F_a-F_a\|_2.$
					Consequently, if $\|\hat e-e\|_2=o_p(n^{-1/4})$ and $\|\hat F_a-F_a\|_2=o_p(n^{-1/4})$ for $a =0,1$, then
					$\mathrm{Rem}=o_p(n^{-1/2})$ and
					\[
					n^{1/2}(\hat\psi_{\rm dr}-\psi)
					\Rightarrow
					N\!\left(0, E\{IF_\psi(O)^2\}\right).
					\]
				\end{enumerate}
			\end{theorem}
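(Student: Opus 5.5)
The proof treats $\hat\psi_{\rm dr}$ as a smooth map applied to an asymptotically linear vector of doubly robust marginal estimators. Write $g:\mathbb{R}^{2(L-1)}\to\mathbb{R}$ for the map in \eqref{eq: uncond-iden-formula}, sending $\theta=\{F_1(k),F_0(k)\}_{k=0}^{L-2}$ to $\psi$. Then $\psi=g(\theta)$ and $\hat\psi_{\rm dr}=g(\hat\theta^{\rm dr})$, where $\hat\theta^{\rm dr}$ collects the $\hat F_a^{\rm dr}(k)$. The argument has two steps: first analyse each coordinate $\hat F_a^{\rm dr}(k)$ as a standard AIPW estimator of a marginal CDF value, then transfer the results to $\hat\psi_{\rm dr}$ through $g$.

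\emph{Part (i).} Let $(\bar e,\bar F_a)$ denote the probability limits of the nuisance estimators. Under the bounds $c\le \hat e\le 1-c$, the uniform law of large numbers gives that $\hat F_a^{\rm dr}(k)$ converges in probability to
\[
E\bigl[\bar F_a(k\mid X)+\mathbbm{1}(A=a)\{\mathbbm{1}(Y\le k)-\bar F_a(k\mid X)\}/\bar e_a(X)\bigr],
\]
where $\bar e_a=\bar e^a(1-\bar e)^{1-a}$. Iterating expectations over $A,Y$ given $X$ and using Assumption~\ref{asmp: causal}, this limit equals
\[
F_a(k)+E\bigl[\{e_a(X)/\bar e_a(X)-1\}\{F_a(k\mid X)-\bar F_a(k\mid X)\}\bigr].
\]
The correction term vanishes if either $\bar e=e$ or $\bar F_a=F_a$. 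Since $C$ is a copula, it is Lipschitz (with constant $1$ in each argument), so $g$ is continuous. The continuous mapping theorem then gives $\hat\psi_{\rm dr}\to_p\psi$.

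\emph{Part (ii).} I first establish the standard AIPW expansion
\[
\hat F_a^{\rm dr}(k)-F_a(k)=\hat E\{D_{ak}(O)\}+R_{ak}+o_p(n^{-1/2}).
\]
The empirical-process term $(\hat E-E)$ applied to the difference between the estimated and true influence functions is $o_p(n^{-1/2})$ by the Donsker condition together with $L_2$ consistency and the boundedness supplied by $c$. The bias term, by the same iterated-expectation computation as in part (i), is
\[
R_{ak}=E\bigl[\{e_a(X)-\hat e_a(X)\}\{F_a(k\mid X)-\hat F_a(k\mid X)\}/\hat e_a(X)\bigr],
\]
which Cauchy–Schwarz bounds by $c^{-1}\|\hat e-e\|_2\|\hat F_a-F_a\|_2$. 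In particular, $\hat\theta^{\rm dr}-\theta=O_p(n^{-1/2}+\mathrm{Rem})=o_p(1)$.

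Next I Taylor-expand $g$ around $\theta$. The partial derivatives of $g$ are exactly the $\Delta^\psi_{ak}$ in \eqref{eq: uncond-deriv}, since each $F_a(k)$ appears in two summands of \eqref{eq: uncond-iden-formula}. The hypothesis on second derivatives applies because $F_a(k)=E\{F_a(k\mid X)\}\in[c,1-c]$, and $\hat\theta^{\rm dr}$ lies in a slightly enlarged compact box with probability tending to one. The second-order term is therefore $O_p(\|\hat\theta^{\rm dr}-\theta\|^2)=O_p(n^{-1}+\mathrm{Rem}^2)=o_p(n^{-1/2})+O_p(\mathrm{Rem})$. The linear term equals $\sum_{a,k}\Delta^\psi_{ak}\{\hat E D_{ak}+R_{ak}\}+o_p(n^{-1/2})=\hat E\{IF_\psi(O)\}+O_p(\mathrm{Rem})+o_p(n^{-1/2})$, using that the $\Delta^\psi_{ak}$ are bounded constants.

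When each nuisance rate is $o_p(n^{-1/4})$, we get $\mathrm{Rem}=o_p(n^{-1/2})$. The central limit theorem then gives asymptotic normality with variance $E\{IF_\psi^2\}$, and this variance is the efficiency bound by Proposition~\ref{prop:eifuncond}.

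The main obstacle is the boundary behaviour of the copula. The end terms $k=0$ and $k=L-1$ involve $C$ evaluated at $0$ or $1$, where $C(u,0)=0$ and $C(u,1)=u$ hold exactly, so $g$ is affine in those arguments. Consequently only the interior arguments in $[c,1-c]$ require the second-derivative bound, and I would verify this bookkeeping explicitly. A secondary point is that the Donsker step needs $\hat F_a$ and $\hat e$ to converge to the truth rather than merely to $\bar F_a$ and $\bar e$. I read the hypothesis as $\bar e=e$ and $\bar F_a=F_a$, which is what the statement of $\mathrm{Rem}$ implicitly requires.
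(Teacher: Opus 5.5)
Your proposal is correct and follows essentially the same route as the paper's proof. Both arguments use an AIPW expansion for each $\hat F_a^{\rm dr}(k)$, bound the product-form bias $R_{ak}$ by Cauchy--Schwarz, control the empirical-process term through the Donsker condition, and finish with a first-order Taylor expansion of the copula functional whose derivatives are the $\Delta^\psi_{ak}$. Your bookkeeping is in fact tighter than the paper's, which asserts $\hat F_a^{\rm dr}(k)-F_a(k)=O_p(n^{-1/2})$ under mere consistency, whereas your bound $O_p(n^{-1/2}+\mathrm{Rem})$, with the resulting second-order term $o_p(n^{-1/2})+O_p(\mathrm{Rem})$, is what the stated expansion actually requires.
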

			
			\begin{remark}
				Outcome regression and inverse probability weighting estimators can be obtained by
				substituting into~\eqref{eq: idenpsi} the marginal estimators
				\[
				\hat{F}_a^{\mathrm{or}}(k) = \hat{E}\{\hat{F}_a(k \mid X)\}, \quad
				\hat{F}_a^{\mathrm{ipw}}(k) = \hat{E}\left\{
				\frac{\mathbbm{1}(A=a)\,\mathbbm{1}(Y \leq k)}
				{\hat{e}(X)^a\{1-\hat{e}(X)\}^{1-a}}\right\},
				\]
				respectively.
				Under correctly specified parametric working models
				$F_a(k \mid X;\alpha)$ and $e(X;\gamma)$, inference on $\psi$ can be conducted
				via the generalised method of moments with sandwich standard errors, treating
				$(\psi, \alpha, \gamma)$ as a joint parameter vector.
			\end{remark}
			
			For other estimands under the unconditional copula model of Assumption \ref{asmp: copulauncond}, the probability of beneficial effect $\phi$ can be identified as
			\[
			\phi = \text{pr}\{Y(1)\ge Y(0)\}
			= \sum_{k=0}^{L-1}
			\left[
			C\{F_1(k),F_0(k)\}
			-
			C\{F_1(k-1),F_0(k)\}
			\right].
			\]
			For $k=0,\ldots,L-2$, let
			\[
			\Delta^\phi_{1k}
			=
			C_1\{F_1(k),F_0(k)\}
			-
			C_1\{F_1(k),F_0(k+1)\},
			\quad
			\Delta^\phi_{0k}
			=
			C_0\{F_1(k),F_0(k)\}
			-
			C_0\{F_1(k-1),F_0(k)\}.
			\]
			The efficient influence function for $\phi$ in $\mathcal{M}_{\rm np}$ is
			\[
			IF_\phi(O)
			=
			\sum_{a=0}^1\sum_{k=0}^{L-2}\Delta^{\phi}_{ak}
			\left\{
			\frac{\mathbbm{1}(A=a)}{e(X)^a\{1-e(X)\}^{1-a}}
			\{\mathbbm{1}(Y \le k)-F_a(k \mid X)\}
			+
			F_a(k \mid X)-F_a(k)
			\right\}
			\]
			The corresponding efficient and doubly robust estimator is
			\[
			\hat\phi_{\rm dr}
			=
			\sum_{k=0}^{L-1}
			\left[
			C\{\hat F_1^{\rm dr}(k),\hat F_0^{\rm dr}(k)\}
			-
			C\{\hat F_1^{\rm dr}(k-1),\hat F_0^{\rm dr}(k)\}
			\right].
			\]
			
			For the relative effect $
			\xi = \text{pr}\{Y(1)>Y(0)\} - \text{pr}\{Y(1)<Y(0)\} = \psi+\phi-1$,
			The efficient influence function is $
			IF_\xi(O)=IF_\psi(O)+IF_\phi(O).$
			Equivalently, for $k=0,\ldots,L-2$, define 
			\[
			\Delta^\xi_{1k}
			=
			\Delta^{\psi}_{1k}+\Delta^\phi_{1k}
			=
			C_1\{F_1(k),F_0(k-1)\}
			-
			C_1\{F_1(k),F_0(k+1)\},
			\]
			\[
			\Delta^\xi_{0k}
			=
			\Delta^{\psi}_{0k}+\Delta^\phi_{0k}
			=
			C_0\{F_1(k+1),F_0(k)\}
			-
			C_0\{F_1(k-1),F_0(k)\},
			\]
			so that
			\[
			IF_\xi(O)
			=
			\sum_{a=0}^1\sum_{k=0}^{L-2}\Delta^{\xi}_{ak}
			\left\{
			\frac{\mathbbm{1}(A=a)}{e(X)^a\{1-e(X)\}^{1-a}}
			\{\mathbbm{1}(Y \le k)-F_a(k \mid X)\}
			+
			F_a(k \mid X)-F_a(k)
			\right\}
			\]
			A natural doubly robust estimator is 
			\[
			\hat\xi_{\rm dr}=\hat\psi_{\rm dr}+\hat\phi_{\rm dr}-1.\]
			The variance can be estimated through a plug-in estimator of the efficient influence functions.
			Take $\psi$ as an example, obtain
			\[
			\hat\Delta_{1k}^{\psi}
			=
			\dot{C}_1\{\hat F_1^{\rm dr}(k),\hat F_0^{\rm dr}(k-1)\}
			-
			\dot{C}_1\{\hat F_1^{\rm dr}(k),\hat F_0^{\rm dr}(k)\},
			\]
			\[
			\hat\Delta_{0k}^{\psi}
			=
			\dot{C}_0\{\hat F_1^{\rm dr}(k+1),\hat F_0^{\rm dr}(k)\}
			-
			\dot{C}_0\{\hat F_1^{\rm dr}(k),\hat F_0^{\rm dr}(k)\},
			\]
			and
			\[
			\hat D_{ak}(O)
			=
			\frac{\mathbbm{1}(A=a)}{\hat e(X)^a\{1-\hat e(X)\}^{1-a}}
			\{\mathbbm{1}(Y \le k)-\hat F_a(k \mid X)\}
			+
			\hat F_a(k \mid X)-\hat F_a^{\rm dr}(k).
			\]
			Then
			$\hat{IF}_\psi(O)
			=
			\sum_{a=0}^1\sum_{k=0}^{L-2}\hat\Delta_{ak}^{\psi}\hat D_{ak}(O)$, and the asymptotic variance can be estimated by $\hat E\{\hat{IF}_\psi(O)^2\}$.
			
			\section{Unconfoundedness sensitivity analysis for the conditional copula one-step estimator}
			\label{sec:supp_sa_unconf}
			This section develops a formal sensitivity analysis that quantifies how the conclusions from the conditional copula one-step estimator~\eqref{eq:onestep} would change under violations of unconfoundedness. 
			For brevity, we present results for $\psi = \text{pr}\{Y(1) > Y(0)\}$; the construction for $\phi$ and $\xi$ is identical upon replacing the copula map and derivative coefficients accordingly.
			
			The development proceeds in three stages. First, we characterize the probability limit of the one-step estimator when unconfoundedness fails, showing that it targets an observed-data pseudo-target rather than the causal parameter. Second, we derive bounds on the causal parameter under a Rosenbaum-type sensitivity model. Third, we derive the efficient influence function and construct efficient one-step estimators for the bound endpoints and discuss the nesting property to connect with original one-step estimator.
			
			We begin by establishing what the one-step estimator estimates when unconfoundedness fails. 
			For $k = 0,\ldots,L-2$, define $\mathbbm{1}(Y\leq k) = \mathbbm{1}(Y \le k)$ and
			$p_{a,k}(x) = \text{pr}(Y \le k \mid A=a, X=x)$, $a = 0,1$,
			with conventions $p_{a,-1}(x) = 0$ and $p_{a,L-1}(x) = 1$. Under unconfoundedness, $p_{a,k}(x)$ coincides with the causal conditional margin $F_a(k \mid x) = \text{pr}\{Y(a) \le k \mid X = x\}$; without unconfoundedness, these two quantities generally differ. Define the observed-law copula functional
			\[
			\bar{m}_\psi(x;\rho)
			=
			\sum_{k=0}^{L-1}
			\left[
			C_{\rho}\{p_{1,k}(x), p_{0,k-1}(x)\}
			-
			C_{\rho}\{p_{1,k-1}(x), p_{0,k-1}(x)\}
			\right],
			\]
			which has the same form as $m_\psi(x)$ under unconfoundedness but with $F_a(k \mid x)$ replaced by the observed conditional $p_{a,k}(x)$. The pseudo-target is $\bar\psi(\rho) = E\{\bar{m}_\psi(X;\rho)\}$. 
			The next proposition summarizes the formal results, the discrepancy $\psi(\rho) - \bar\psi(\rho)$ reflects the confounding bias induced by unmeasured confounders, and is generally nonzero.
			\begin{proposition}[Pseudo-target]\label{prop:pseudolimit}
				Suppose $\hat{e}(x) \to e(x)$ and $\hat{F}_a(k \mid x) \to p_{a,k}(x)$ in probability for $a = 0,1$ and $k = 0,\ldots,L-2$. Then the one-step estimator in \eqref{eq:onestep} converges in probability to $\bar\psi(\rho)$.
			\end{proposition}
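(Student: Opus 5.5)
The argument is a consistency statement: the one-step estimator in \eqref{eq:onestep} only ever touches the observed-data law, so its limit depends on where the nuisance estimators converge and not on the causal model. The idea is to split $\hat\psi$ into a plug-in term plus a correction term. The plug-in term converges to $\bar\psi(\rho)$. The correction term converges to zero because, in the limit, $\mathbbm{1}(Y\le k)-p_{a,k}(X)$ has conditional mean zero given $(A=a,X)$. Unconfoundedness plays no role in either step.

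\emph{Step 1 (plug-in term).} Since $\hat F_a(k\mid x)\to p_{a,k}(x)$ and $C_\rho$ is continuous, the plug-in $\hat m_\psi(x)$, computed from \eqref{eq: idenpsi} with $\hat F_a$, converges to $\bar m_\psi(x;\rho)$. This works because $\bar m_\psi$ is the same finite sum of copula evaluations, now applied to $p_{a,k}$. Write
\[
\hat E\{\hat m_\psi(X)\} - \bar\psi(\rho) = (\hat E - E)\{\hat m_\psi(X)\} + E\{\hat m_\psi(X) - \bar m_\psi(X;\rho)\}.
\]
The first term is $o_p(1)$ by the Glivenko--Cantelli/Donsker condition (iv) of Theorem~\ref{thm:rate}, or by cross-fitting with a conditional law of large numbers. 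The second term is $o_p(1)$ by dominated convergence, since $0\le \hat m_\psi\le 1$. Here $E\{\cdot\}$ integrates over $X$ with $\hat m_\psi$ held fixed.

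\emph{Step 2 (correction term).} The correction is $\hat E\{R(O;\hat\eta)\}$, where
\[
R(O;\eta)=\sum_a \mathbbm{1}(A=a)\,e(X)^{-a}\{1-e(X)\}^{a-1}\sum_k \Delta_{ak}(X)\{\mathbbm{1}(Y\le k)-F_a(k\mid X)\},
\]
and $\eta=(e,F,\Delta)$ collects the nuisances. Evaluate this at the limits $\bar\eta=(e,p,\bar\Delta)$, with $\bar\Delta$ built from the $p_{a,k}$. Conditioning on $(A,X)$ gives $E\{\mathbbm{1}(Y\le k)\mid A=a,X\}=p_{a,k}(X)$ by definition, so $E\{R(O;\bar\eta)\}=0$ regardless of whether $\{Y(0),Y(1)\}\ind A\mid X$. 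Then
\[
\hat E\{R(O;\hat\eta)\} = (\hat E-E)R(O;\hat\eta) + E\{R(O;\hat\eta)-R(O;\bar\eta)\}.
\]
The empirical-process term vanishes as in Step 1.

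\emph{Main obstacle.} The delicate piece is showing $E\{R(O;\hat\eta)-R(O;\bar\eta)\}\to0$. I would first handle it through boundedness. Condition (i) gives $\hat e\in[c,1-c]$, so the weights are at most $1/c$. Condition (ii), bounded derivatives on $[c,1-c]^2$, makes $\hat\Delta_{ak}$ uniformly bounded and continuous in $\hat F$, so $\hat\Delta\to\bar\Delta$ pointwise in probability. The integrand is therefore uniformly bounded and converges in probability, and dominated convergence along subsequences (or the $L_2$ version, with Cauchy--Schwarz) gives the limit. Alternatively, the rate expansion in Theorem~\ref{thm:rate} applies verbatim with $F_a$ replaced by $p_{a,k}$. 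Its proof only uses the observed-data identity $E\{\mathbbm{1}(Y\le k)\mid A=a,X\}=p_{a,k}(X)$, so $\hat\psi-\bar\psi(\rho)=O_p(\mathrm{Rem})+O_p(n^{-1/2})=o_p(1)$. Combining Steps 1 and 2, $\hat\psi\to\bar\psi(\rho)$ in probability.
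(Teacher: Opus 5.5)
Your proposal is correct and follows the same route as the paper's proof. The plug-in term $\hat E\{\hat m_\psi(X)\}$ tends to $E\{\bar m_\psi(X;\rho)\}$ by continuity of the copula map, and the augmentation term has zero mean at the limiting nuisances solely because $E\{\mathbbm{1}(Y\le k)\mid A=a,X\}=p_{a,k}(X)$, with no appeal to unconfoundedness; beyond the paper, you spell out the empirical-process and dominated-convergence steps, together with the positivity, smoothness and Glivenko--Cantelli conditions you borrow from Theorem~\ref{thm:rate}, which the paper leaves implicit when passing from a zero population mean to the probability limit.
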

			
			To bound this confounding bias, we work under Assumption \ref{ass:li}, which posits that a latent variable $U$ restores unconfoundedness with the sensitivity parameter $\Gamma \ge 1$ controlling the worst-case influence of $U$ on treatment assignment odds. When $\Gamma = 1$, Assumption \ref{ass:li} reduces to the unconfoundedness of Assumption \ref{asmp: causal}.
			
			The true causal margin $F_a(k \mid x)$ can be decomposed as a mixture of the observed conditional $p_{a,k}(x)$ and the unidentifiable counterfactual conditional $r_{a,k}(x) = \text{pr}\{Y(a) \le k \mid A = 1-a, X = x\}$:
			\begin{equation}\label{eq:Fa_decomp}
				F_a(k \mid x) = \text{pr}(A=a \mid X=x)\, p_{a,k}(x) + \{1 - \text{pr}(A=a \mid X=x)\}\, r_{a,k}(x).
			\end{equation}
			Under unconfoundedness, $r_{a,k}(x) = p_{a,k}(x)$ and the decomposition collapses. In general, $r_{a,k}(x)$ is not identified, but Assumption \ref{ass:li} constrains it to a known interval. 
			Following \cite{yadlowsky2022bounds}, we can derive the closed-form sharp bounds for $r_{a,k}$. Define the bounding transforms
			\[
			r^-_\Gamma(p) = \frac{p}{p + \Gamma(1-p)}, \quad r^+_\Gamma(p) = \frac{\Gamma\, p}{\Gamma\, p + 1-p}, \quad p \in [0,1].
			\]
			Note that $r^-_\Gamma(p) \le p \le r^+_\Gamma(p)$ for $\Gamma \ge 1$, with equality when $\Gamma = 1$.
			Then the sharp bounds are characterized as $r_{a,k}(x) \in [r^-_\Gamma\{p_{a,k}(x)\},\; r^+_\Gamma\{p_{a,k}(x)\}]$.
			Plugging in \eqref{eq:Fa_decomp} yields sharp bounds for margins; the formal results are stated in Proposition \ref{prop:binarybounds}.
			
			The margin bounds in Proposition~\ref{prop:binarybounds} are each expressible as functions of the observed-data quantities $e(x)$ and $p_{a,k}(x)$, and hence are identifiable from the data for any fixed $\Gamma$. 
			It is convenient to write the endpoint transforms compactly as
			\[G^\pm_{1,\Gamma}(e,p) = ep + (1-e)r^\pm_\Gamma(p),\quad G^\pm_{0,\Gamma}(e,p) = (1-e)p + e\,r^\pm_\Gamma(p),\]
			so that $F^\pm_{1,\Gamma}(k \mid x) = G^\pm_{1,\Gamma}\{e(x), p_{1,k}(x)\}$ and $F^\pm_{0,\Gamma}(k \mid x) = G^\pm_{0,\Gamma}\{e(x), p_{0,k}(x)\}$.
			Unlike the causal margin $F_a(k \mid x)$ identified only through outcome regression under unconfoundedness, the endpoint margins $F^\pm_{a,\Gamma}(k \mid x)$ depend on both the observed outcome regression and the propensity score. This additional dependence on $e(x)$ is the source of important structural differences in the efficient influence function and the remainder term, discussed below. The derivatives of $r^\pm_\Gamma$ are
			\[
			\dot{r}^-_\Gamma(p) = \frac{\Gamma}{\{\Gamma - (\Gamma-1)p\}^2}, \quad
			\dot{r}^+_\Gamma(p) = \frac{\Gamma}{\{1 + (\Gamma-1)p\}^2},
			\]
			and the partial derivatives of the $G$-transforms are
			\[
			\begin{aligned}
				\frac{\partial G^\pm_{1,\Gamma}}{\partial e}(e,p) &= p - r^\pm_\Gamma(p), &\quad
				\frac{\partial G^\pm_{0,\Gamma}}{\partial e}(e,p) &= r^\pm_\Gamma(p) - p, \\[3pt]
				\frac{\partial G^\pm_{1,\Gamma}}{\partial p}(e,p) &= e + (1-e)\dot{r}^\pm_\Gamma(p), &\quad
				\frac{\partial G^\pm_{0,\Gamma}}{\partial p}(e,p) &= 1-e + e\,\dot{r}^\pm_\Gamma(p).
			\end{aligned}
			\]
			
			As $m_\psi(x)$ is monotone in the conditional margins that decrease in the treated margins and increase in the control margins, by properties of the copula, conservative bounds of $m_\psi(x)$ can be constructed via correspondingly plugging in marginal bounds. Define the lower and upper endpoint maps
			\begin{equation}
				\begin{aligned}
					m^-_{\psi,\Gamma}(x;\rho)
					&=
					\sum_{k=0}^{L-1}
					\left[
					C_{\rho}\{F^+_{1,\Gamma}(k \mid x), F^-_{0,\Gamma}(k{-}1 \mid x)\}
					-
					C_{\rho}\{F^+_{1,\Gamma}(k{-}1 \mid x), F^-_{0,\Gamma}(k{-}1 \mid x)\}
					\right], \label{eq:m_lower} \\
					m^+_{\psi,\Gamma}(x;\rho)
					&=
					\sum_{k=0}^{L-1}
					\left[
					C_{\rho}\{F^-_{1,\Gamma}(k \mid x), F^+_{0,\Gamma}(k{-}1 \mid x)\}
					-
					C_{\rho}\{F^-_{1,\Gamma}(k{-}1 \mid x), F^+_{0,\Gamma}(k{-}1 \mid x)\}
					\right]
				\end{aligned}
			\end{equation}
			with boundary conventions $F^\pm_{a,\Gamma}(-1 \mid x) = 0$ and $F^\pm_{a,\Gamma}(L-1 \mid x) = 1$, and define the endpoint parameters $\psi^-_\Gamma(\rho) = E\{m^-_{\psi,\Gamma}(X;\rho)\}$ and $\psi^+_\Gamma(\rho) = E\{m^+_{\psi,\Gamma}(X;\rho)\}$.
			
			\begin{proposition}[Endpoint bounds]\label{prop:endpointbounds}
				Under Assumptions \ref{asmp: causal}(ii), \ref{asmp: copula}, \ref{ass:li}, we have bounds $m_\psi(X) \in [m^-_{\psi,\Gamma}(X;\rho),\; m^+_{\psi,\Gamma}(X;\rho)]$, and therefore $\psi(\rho) \in [\psi^-_\Gamma(\rho),\; \psi^+_\Gamma(\rho)]$.
			\end{proposition}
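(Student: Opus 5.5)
The plan is to show that $m_\psi(x)$, viewed as a function of the vector of conditional margins, is coordinatewise nonincreasing in the treated margins $\{F_1(k\mid x)\}_{k}$ and nondecreasing in the control margins $\{F_0(k\mid x)\}_{k}$. I will then move the true margins to the endpoint margins of Proposition~\ref{prop:binarybounds} one coordinate at a time and integrate over $X$.

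\textbf{Setup.} Under Assumptions~\ref{asmp: copula} and \ref{ass:li}, the true conditional margins $F_a(k\mid x)=\mathrm{pr}\{Y(a)\le k\mid X=x\}$ still enter the copula representation, so the telescoped form of \eqref{eq: idenpsi} holds:
\[
m_\psi(x)=\sum_{k=0}^{L-1}\bigl[C\{u_k,v_{k-1}\}-C\{u_{k-1},v_{k-1}\}\bigr],\qquad u_k=F_1(k\mid x),\ v_k=F_0(k\mid x).
\]
Here $u_{-1}=v_{-1}=0$ and $u_{L-1}=v_{L-1}=1$. I treat the right-hand side as a function $M(u,v)$ defined for any vectors in $[0,1]^{L-1}$, not only for ordered ones.

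\textbf{Monotonicity via 2-increasingness.} Only finite differences of the copula are needed, so no differentiability assumption is required.

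\emph{Treated coordinate.} The coordinate $u_k$ ($0\le k\le L-2$) appears only in two terms: $+C(u_k,v_{k-1})$ from summand $k$ and $-C(u_k,v_k)$ from summand $k+1$. Raising $u_k$ to $u_k'\ge u_k$ therefore changes $M$ by
\[
\bigl[C(u_k',v_{k-1})-C(u_k,v_{k-1})\bigr]-\bigl[C(u_k',v_k)-C(u_k,v_k)\bigr].
\]
This is $\le 0$ by the 2-increasing (rectangle) inequality of $C$, provided $v_{k-1}\le v_k$.

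\emph{Control coordinate.} The coordinate $v_k$ appears only in summand $k+1$, as $C(u_{k+1},v_k)-C(u_k,v_k)$. Raising $v_k$ increases this difference, again by 2-increasingness, provided $u_k\le u_{k+1}$.

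The proviso in each case concerns only the ordering in $k$ of the \emph{other} arm's vector. Neither argument requires the coordinate being moved to stay ordered.

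\textbf{Chaining the moves.} By Proposition~\ref{prop:binarybounds}, $F^-_{a,\Gamma}(k\mid x)\le F_a(k\mid x)\le F^+_{a,\Gamma}(k\mid x)$ for all $a,k,x$; this uses Assumption~\ref{asmp: causal}(ii) so that $e(x)\in(0,1)$.

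For the lower bound:
\begin{enumerate}
\item Raise each treated coordinate $u_k$ to $F^+_{1,\Gamma}(k\mid x)$, one at a time, keeping $v$ equal to the true $F_0(\cdot\mid x)$. This vector is nondecreasing in $k$ because it is a CDF, so $M$ does not increase at any step.
\item Then lower each control coordinate $v_k$ to $F^-_{0,\Gamma}(k\mid x)$. This requires $k\mapsto F^+_{1,\Gamma}(k\mid x)$ to be nondecreasing. That holds because $p_{1,k}(x)$ is nondecreasing in $k$ and $G^+_{1,\Gamma}(e,p)$ is increasing in $p$, since $\partial G^+_{1,\Gamma}/\partial p=e+(1-e)\dot r^+_\Gamma(p)>0$. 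Hence $M$ does not increase here either.
\end{enumerate}
The end result is exactly $m^-_{\psi,\Gamma}(x;\rho)$ as defined in \eqref{eq:m_lower}, with the boundary conventions matching. This gives $m_\psi(x)\ge m^-_{\psi,\Gamma}(x;\rho)$.

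The upper bound is symmetric: lower $u$ to $F^-_{1,\Gamma}$ using the ordered true $v$, then raise $v$ to $F^+_{0,\Gamma}$ using the ordered $F^-_{1,\Gamma}$. The latter is ordered because $\partial G^-_{1,\Gamma}/\partial p>0$.

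\textbf{Conclusion.} Taking expectations over $X$ yields $\psi(\rho)\in[\psi^-_\Gamma(\rho),\psi^+_\Gamma(\rho)]$.

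\textbf{Main obstacle.} The delicate step is the ordering requirement. The intermediate vectors are mixtures of true and endpoint margins across the two arms, so they are not CDFs of any distribution, and I need the monotone-in-$k$ property of the \emph{other} arm's vector at each stage. The two-phase order above (move all treated coordinates first, then all control coordinates), together with the monotonicity of the $G$-transforms in $p$, is designed to guarantee this requirement at every step.
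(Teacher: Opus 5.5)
Your proof is correct and follows the same route as the paper. Both arguments show that $m_\psi$ is nonincreasing in the treated margins and nondecreasing in the control margins, substitute the endpoint margins from Proposition~\ref{prop:binarybounds}, and take expectations over $X$. Your version is more careful in two respects: you use the rectangle inequality where the paper uses monotone copula partials to sign $\Delta^\psi_{1k}\le 0$ and $\Delta^\psi_{0k}\ge 0$, and your two-phase path, with its check that $k\mapsto F^{\pm}_{1,\Gamma}(k\mid x)$ is nondecreasing, makes explicit an ordering requirement that the paper's ``coordinatewise monotone, so plug in endpoints'' step leaves implicit.
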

			
			We now turn to estimation and inference for the endpoint parameters $\psi^\pm_\Gamma(\rho)$. These are smooth functionals of the observed-data nuisance $\eta = (e, \{p_{1,k}\},\{p_{0,k}\})$ through the $G$-transforms, and we adopt the same semiparametric framework as Section~\ref{sec: est} to derive efficient influence function and parallel one-step estimators. 
			We introduce some necessary notations. For the lower endpoint, the copula derivatives are
			\begin{align*}
				\Delta^{\psi,-}_{1k}(x)
				&=
				\dot{C}_1\{F^+_{1,\Gamma}(k \mid x), F^-_{0,\Gamma}(k{-}1 \mid x)\}
				-
				\dot{C}_1\{F^+_{1,\Gamma}(k \mid x), F^-_{0,\Gamma}(k \mid x)\},
				\\
				\Delta^{\psi,-}_{0k}(x)
				&=
				\dot{C}_0\{F^+_{1,\Gamma}(k{+}1 \mid x), F^-_{0,\Gamma}(k \mid x)\}
				-
				\dot{C}_0\{F^+_{1,\Gamma}(k \mid x), F^-_{0,\Gamma}(k \mid x)\},
			\end{align*}
			where $\dot{C}_1, \dot{C}_0$ are the two partial derivatives for copula $C_\rho$.  For the upper endpoint,
			\begin{align*}
				\Delta^{\psi,+}_{1k}(x)
				&=
				\dot{C}_1\{F^-_{1,\Gamma}(k \mid x), F^+_{0,\Gamma}(k{-}1 \mid x)\}
				-
				\dot{C}_1\{F^-_{1,\Gamma}(k \mid x), F^+_{0,\Gamma}(k \mid x)\},
				\\
				\Delta^{\psi,+}_{0k}(x)
				&=
				\dot{C}_0\{F^-_{1,\Gamma}(k{+}1 \mid x), F^+_{0,\Gamma}(k \mid x)\}
				-
				\dot{C}_0\{F^-_{1,\Gamma}(k \mid x), F^+_{0,\Gamma}(k \mid x)\}.
			\end{align*}
			These are the analogues of $\Delta^\psi_{ak}(x)$ in~\eqref{eq: deltas}, evaluated at the endpoint margins instead of the true margins. 
			An additional complexity arises from $m^\pm_\psi(x)$ depending on $e(x)$ through the $G$-transforms, which introduces additional augmentation terms in the efficient influence function.
			Define the propensity score augmentation coefficients, which capture the sensitivity of the endpoint maps to perturbations in $e(x)$:
			\begin{align*}
				H^-_\psi(x)
				&=
				\sum_{k=0}^{L-2}
				\Delta^{\psi,-}_{1k}(x)
				\frac{\partial G^+_{1,\Gamma}}{\partial e}\{e(x), p_{1,k}(x)\}
				+
				\sum_{k=0}^{L-2}
				\Delta^{\psi,-}_{0k}(x)
				\frac{\partial G^-_{0,\Gamma}}{\partial e}\{e(x), p_{0,k}(x)\},
				\\[3pt]
				H^+_\psi(x)
				&=
				\sum_{k=0}^{L-2}
				\Delta^{\psi,+}_{1k}(x)
				\frac{\partial G^-_{1,\Gamma}}{\partial e}\{e(x), p_{1,k}(x)\}
				+
				\sum_{k=0}^{L-2}
				\Delta^{\psi,+}_{0k}(x)
				\frac{\partial G^+_{0,\Gamma}}{\partial e}\{e(x), p_{0,k}(x)\},
			\end{align*}
			and the outcome augmentation coefficients, which capture the sensitivity to perturbations in $p_{a,k}(x)$:
			\[\begin{aligned}
				W^-_{1k}(x)
				&=
				\Delta^{\psi,-}_{1k}(x)
				\frac{\partial G^+_{1,\Gamma}}{\partial p}\{e(x), p_{1,k}(x)\},
				\quad
				&W^-_{0k}(x)
				&=
				\Delta^{\psi,-}_{0k}(x)
				\frac{\partial G^-_{0,\Gamma}}{\partial p}\{e(x), p_{0,k}(x)\},
				\\
				W^+_{1k}(x)
				&=
				\Delta^{\psi,+}_{1k}(x)
				\frac{\partial G^-_{1,\Gamma}}{\partial p}\{e(x), p_{1,k}(x)\},
				\quad
				&W^+_{0k}(x)
				&=
				\Delta^{\psi,+}_{0k}(x)
				\frac{\partial G^+_{0,\Gamma}}{\partial p}\{e(x), p_{0,k}(x)\}.
			\end{aligned}
			\]
			
			\begin{proposition}[Efficient influence function for the endpoint parameters]\label{prop:sa_eif}
				Under Assumptions~\ref{asmp: causal}(ii), \ref{asmp: copula}, \ref{ass:li}, the efficient influence functions for $\psi^-_\Gamma(\rho)$ and $\psi^+_\Gamma(\rho)$ in the nonparametric model $\mathcal{M}_{\mathrm{np}}$ are, respectively,
				\begin{equation}\label{eq: IFbounds}
					\begin{aligned}
						IF^-_{\psi,\Gamma}(O)
						&=
						\{m^-_{\psi,\Gamma}(X;\rho) - \psi^-_\Gamma(\rho)\}
						+
						H^-_\psi(X)\{A - e(X)\} \\
						&\quad
						+
						\sum_{a=0}^{1}
						\frac{\mathbbm{1}(A=a)}{e(X)^a\{1-e(X)\}^{1-a}}
						\sum_{k=0}^{L-2}
						W^-_{ak}(X)\{\mathbbm{1}(Y\leq k) - p_{a,k}(X)\},\\
						IF^+_{\psi,\Gamma}(O)
						&=
						\{m^+_{\psi,\Gamma}(X;\rho) - \psi^+_\Gamma(\rho)\}
						+
						H^+_\psi(X)\{A - e(X)\} \\
						&\quad
						+
						\sum_{a=0}^{1}
						\frac{\mathbbm{1}(A=a)}{e(X)^a\{1-e(X)\}^{1-a}}
						\sum_{k=0}^{L-2}
						W^+_{ak}(X)\{\mathbbm{1}(Y\leq k) - p_{a,k}(X)\}.
					\end{aligned}
				\end{equation}
				The semiparametric efficiency bounds for $\psi^\pm_\Gamma(\rho)$ in $\mathcal{M}_{\mathrm{np}}$ are $E[\{IF^\pm_{\psi,\Gamma}(O)\}^2]$.
			\end{proposition}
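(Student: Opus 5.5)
The endpoint parameter $\psi^-_\Gamma(\rho)=E\{m^-_{\psi,\Gamma}(X;\rho)\}$ is a functional of the observed law only. Write $m^-_{\psi,\Gamma}(x;\rho)=\mathcal{T}\{e(x),p_{1,\cdot}(x),p_{0,\cdot}(x)\}$ for a smooth map $\mathcal{T}$, namely the copula sum composed with the $G$-transforms. In $\mathcal{M}_{\mathrm{np}}$ the tangent space is saturated, so the (unique, hence efficient) influence function can be obtained from a pathwise derivative computed by the chain rule. I will treat the lower endpoint; the upper endpoint is identical after swapping $\pm$ labels.

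The plan is to take a regular one-dimensional submodel $p_t(Y,A,X)=p_t(X)p_t(A\mid X)p_t(Y\mid A,X)$ with score $S=S_X+S_{A\mid X}+S_{Y\mid A,X}$, and differentiate:
\[
\frac{d}{dt}\psi^-_{\Gamma,t}=E\{m^-_{\psi,\Gamma}(X)S_X\}+E\Big\{\frac{\partial \mathcal{T}}{\partial e}\,\dot e_t(X)\Big\}+\sum_{a,k}E\Big\{\frac{\partial \mathcal{T}}{\partial p_{a,k}}\,\dot p_{a,k,t}(X)\Big\}.
\]
The first term yields $m^-_{\psi,\Gamma}(X)-\psi^-_\Gamma$. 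For the coefficients, differentiate $\mathcal{T}$ with respect to each margin $F^\pm_{a,\Gamma}(k\mid x)$. This is the same algebra that produced \eqref{eq: deltas}: the telescoping copula sum depends on $F_1(k)$ only through the two terms involving index $k$, which gives $\Delta^{\psi,-}_{ak}$ evaluated at the endpoint margins $(F^+_{1,\Gamma},F^-_{0,\Gamma})$. By the chain rule through $G^+_{1,\Gamma}$ and $G^-_{0,\Gamma}$,
\[
\frac{\partial\mathcal{T}}{\partial e}=H^-_\psi(x),\qquad \frac{\partial\mathcal{T}}{\partial p_{a,k}}=W^-_{ak}(x).
\]
Note that $e$ enters every margin, so its total coefficient sums contributions over $a$ and $k$. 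The boundary margins at $k=-1$ and $k=L-1$ are fixed, so only $k\le L-2$ appears.

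Next I use the standard representations of the nuisance pathwise derivatives. For the propensity score,
\[
\dot e_t(x)=E\big[\{A-e(X)\}S_{A\mid X}\mid X=x\big],
\]
so $E\{H^-_\psi\dot e_t\}=E[H^-_\psi(X)\{A-e(X)\}S]$, since $H^-_\psi(X)\{A-e(X)\}$ is orthogonal to $S_X$ and $S_{Y\mid A,X}$. For the outcome regression,
\[
\dot p_{a,k,t}(x)=E\big[\{\mathbbm 1(Y\le k)-p_{a,k}(X)\}S_{Y\mid A,X}\mid A=a,X=x\big].
\]
Multiplying by $\mathbbm 1(A=a)/\text{pr}(A=a\mid X)$ converts the conditional expectation into an unconditional one, and the resulting residual term is orthogonal to $S_X$ and $S_{A\mid X}$. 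Summing the pieces shows that the derivative equals $E\{IF^-_{\psi,\Gamma}(O)S\}$ with $IF^-_{\psi,\Gamma}$ as in \eqref{eq: IFbounds}. Since $IF^-_{\psi,\Gamma}$ has mean zero and finite variance (by the overlap assumption and bounded copula derivatives), it is the canonical gradient. The efficiency bound is therefore its second moment.

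The step that needs the most care is the $e$-dependence. Unlike Proposition~\ref{prop: eif}, where the margins depend only on the outcome regression, the $G$-transforms make $m^-_{\psi,\Gamma}$ depend on $e(x)$, which produces the extra $H^-_\psi\{A-e\}$ term. I must make sure the total derivative in $e$ is collected correctly across all $2(L-1)$ margins, with the correct signs $\partial G_1/\partial e=p-r$ and $\partial G_0/\partial e=r-p$. I also need to confirm that the treated-arm margin uses $G^+$ and the control-arm margin uses $G^-$ for the lower endpoint, matching the definition of $m^-_{\psi,\Gamma}$. A minor point is that the copula must be differentiable at the endpoint margins; this follows because $r^\pm_\Gamma$ maps $(0,1)$ into $(0,1)$, so the endpoint margins stay in the interior whenever $p_{a,k}\in(0,1)$.
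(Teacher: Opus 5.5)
Your proposal is correct and follows essentially the same route as the paper: the chain rule through the $G$-transforms yields $H^-_\psi$ and $W^-_{ak}$, combined with the representations $\dot e(x)=E[\{A-e(X)\}S_{A\mid X}\mid X=x]$ and $\dot p_{a,k}(x)=E[\{\mathbbm{1}(Y\le k)-p_{a,k}(X)\}S_{Y\mid A,X}\mid A=a,X=x]$, and each augmentation term is orthogonal to the other score components. The only cosmetic difference is that you differentiate along a single general submodel via the product rule, whereas the paper perturbs one likelihood factor at a time and combines the pieces by linearity over $\mathcal{T}_X\oplus\mathcal{T}_{A\mid X}\oplus\mathcal{T}_{Y\mid A,X}$.
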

			
			Compared with the efficient influence function $IF_\psi(O)$ in Proposition~2, the sensitivity bounds influence functions~\ref{eq: IFbounds} contain additional propensity score augmentation terms $H^\pm_\psi(X)\{A - e(X)\}$ and updated marginal derivatives $ W^\pm$, accounting for the dependency of $m^\pm_\psi$ on $e(x)$ through transformation $G$.
			
			Let $\hat{e}(x)$ and $\hat{p}_{a,k}(x)$ be flexible nuisance estimators for $e(x)$ and $p_{a,k}(x)$ and let $\hat{m}^\pm_\psi$, $\hat{H}^\pm_\psi$, $\hat{W}^\pm_{ak}$ denote the corresponding plug-in estimators.
			Following \eqref{eq: IFbounds}, the lower and upper bound one-step estimators are 
			\begin{equation}\label{eq: onestep-bounds}
				\begin{aligned}
					\hat\psi^-_{\Gamma}(\rho)
					&=
					\hat{E}\left\{
					\hat{m}^-_{\psi,\Gamma}(X;\rho)
					+
					\hat{H}^-_\psi(X)\{A - \hat{e}(X)\}
					\right.\\
					&\quad\left.
					+
					\sum_{a=0}^{1}
					\frac{\mathbbm{1}(A=a)}{\hat{e}(X)^a\{1-\hat{e}(X)\}^{1-a}}
					\sum_{k=0}^{L-2}
					\hat{W}^-_{ak}(X)\{\mathbbm{1}(Y\leq k) - \hat{p}_{a,k}(X)\}
					\right\},
					\\
					\hat\psi^+_{\Gamma}(\rho)
					&=
					\hat{E}\left\{
					\hat{m}^+_{\psi,\Gamma}(X;\rho)
					+
					\hat{H}^+_\psi(X)\{A - \hat{e}(X)\}
					\right.\\
					&\quad\left.
					+
					\sum_{a=0}^{1}
					\frac{\mathbbm{1}(A=a)}{\hat{e}(X)^a\{1-\hat{e}(X)\}^{1-a}}
					\sum_{k=0}^{L-2}
					\hat{W}^+_{ak}(X)\{\mathbbm{1}(Y\leq k) - \hat{p}_{a,k}(X)\}
					\right\},
				\end{aligned}
			\end{equation}
			respectively.
			For inference, the asymptotic variance is estimated by $\hat{E}[\{\hat{IF}^\pm_{\psi,\Gamma}(O)\}^2]$, where $\hat{IF}^\pm_{\psi,\Gamma}$ is the plug-in estimator of \eqref{eq: IFbounds} with $\psi^\pm_\Gamma(\rho)$ replaced by $\hat\psi^\pm_{\Gamma}(\rho)$.
			
			The following theorem collects the main properties of the sensitivity estimators. Part~(i) establishes a rate-doubly-robust asymptotic linear expansion, paralleling Theorem~\ref{thm:rate} for the original estimator. 
			Part~(ii) shows that, under $n^{-1/4}$ convergence rates for the nuisance estimators, the sensitivity estimators are asymptotically normal and achieve the semiparametric efficiency bounds. Part~(iii) verifies the nesting property: at $\Gamma = 1$, the sensitivity interval collapses to the original point estimate, ensuring a smooth transition between the primary analysis and the sensitivity analysis.
			
			\begin{theorem}[Properties of the sensitivity estimators]\label{thm:sa_main}
				Suppose Assumptions~\ref{asmp: causal}(ii), \ref{asmp: copula}, \ref{ass:li} hold.
				\begin{enumerate}
					\item[\textup{(i)}] \textup{(Rate-doubly-robust expansion.)}
					Suppose (a) there exists $c > 0$ such that $c \le e(X), \hat{e}(X), p_{a,k}(X), \hat{p}_{a,k}(X) \le 1-c$ for some constant $c>0$ and all $a,k$; (b) the copula $C_{\rho}$ is twice continuously differentiable on $[c, 1-c]^2$ with bounded second partial derivatives; (c) the nuisance estimator satisfying $\|\hat{e} - e\|_2 = o_p(1)$ and $\|\hat{p}_a - p_a\|_2 = o_p(1)$ for $a = 0,1$; and (d) all nuisance functions and their estimators belong to a Donsker class. Then
					\[
					\hat\psi^\pm_{\Gamma}(\rho) - \psi^\pm_\Gamma(\rho)
					=
					\hat{E}\{IF^\pm_{\psi,\Gamma}(O)\}
					+
					O_p(\mathrm{Rem})
					+
					o_p(n^{-1/2}),
					\]
					where $\|\hat{p}_a - p_a\|_2^2 = \sum_{k=0}^{L-2}\|\hat{p}_{a,k} - p_{a,k}\|_2^2$ and
					\[
					\mathrm{Rem}
					=
					\|\hat{p}_1 - p_1\|_2^2
					+
					\|\hat{p}_0 - p_0\|_2^2
					+
					\|\hat{e} - e\|_2^2
					+
					\|\hat{e} - e\|_2
					\left(\|\hat{p}_1 - p_1\|_2 + \|\hat{p}_0 - p_0\|_2\right).
					\]
					
					\item[\textup{(ii)}] \textup{(Asymptotic normality and efficiency.)}
					Under conditions listed in (i), and in addition, $\|\hat{e} - e\|_2 = o_p(n^{-1/4})$ and $\|\hat{p}_a - p_a\|_2 = o_p(n^{-1/4})$ for $a=0,1$, then $\mathrm{Rem} = o_p(n^{-1/2})$ and
					\[
					n^{1/2}\{\hat\psi^\pm_{\Gamma}(\rho) - \psi^\pm_\Gamma(\rho)\}
					\Rightarrow
					\mathcal{N}\left(0,\; E[\{IF^\pm_{\psi,\Gamma}(O)\}^2]\right).
					\]
					The one-step sensitivity estimators achieve the semiparametric efficiency bounds for $\psi^\pm_\Gamma(\rho)$.
					
					\item[\textup{(iii)}] \textup{(Exact nesting.)}
					If $\Gamma = 1$ and the same nuisance estimators are used, then
					\[
					\hat\psi^-_{1}(\rho) = \hat\psi^+_{1}(\rho) = \hat\psi(\rho),
					\]
					where $\hat\psi(\rho)$ is the one-step estimator in \eqref{eq:onestep}.
				\end{enumerate}
			\end{theorem}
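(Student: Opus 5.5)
The argument follows the template of Theorem~\ref{thm:rate}, with one new ingredient: the endpoint functionals $m^\pm_{\psi,\Gamma}$ depend on the propensity score $e$ as well as on the observed CDFs $p_{a,k}$. Write $\eta=(e,\{p_{1,k}\},\{p_{0,k}\})$ and $\hat\eta$ for its estimator. Let $\varphi^\pm(O;\eta)$ denote the uncentred integrand in \eqref{eq: onestep-bounds}, so that $\hat\psi^\pm_\Gamma(\rho)=\hat E\{\varphi^\pm(O;\hat\eta)\}$.

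For part (i) we use the standard decomposition
\[
\hat\psi^\pm_\Gamma-\psi^\pm_\Gamma=(\hat E-E)\{\varphi^\pm(O;\eta)\}+(\hat E-E)\{\varphi^\pm(O;\hat\eta)-\varphi^\pm(O;\eta)\}+R(\hat\eta,\eta),
\]
where $R(\hat\eta,\eta)=E\{\varphi^\pm(O;\hat\eta)\}-\psi^\pm_\Gamma$. The first term equals $\hat E\{IF^\pm_{\psi,\Gamma}(O)\}$ by Proposition~\ref{prop:sa_eif}. The second term is $o_p(n^{-1/2})$ by the Donsker condition (d) together with $L_2$-consistency of $\varphi^\pm(\cdot;\hat\eta)$. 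That consistency holds because, on $[c,1-c]$, the maps $G^\pm$, $r^\pm_\Gamma$, $\dot r^\pm_\Gamma$, the derivatives $\dot C_a$ and the inverse weights are all Lipschitz, and these compositions preserve the Donsker property.

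The main work is bounding $R$. First take expectations of the augmentation terms conditional on $X$. This gives $E[\hat H(X)\{e(X)-\hat e(X)\}]$ for the propensity term, and $E[\{e(X)/\hat e(X)\}\hat W_{1k}(X)\{p_{1,k}(X)-\hat p_{1,k}(X)\}]$ for the treated outcome term, with the analogous expression for $a=0$. Writing $e/\hat e=1+(e-\hat e)/\hat e$ splits off a product term of order $\|\hat e-e\|_2\|\hat p_a-p_a\|_2$ by Cauchy--Schwarz, and leaves
\[
R\approx E\Big[\hat m^\pm-m^\pm-\hat H(\hat e-e)-\sum_{a,k}\hat W_{ak}(\hat p_{a,k}-p_{a,k})\Big].
\]

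Second, use the chain rule. The map $(e,p)\mapsto m^\pm$ is the composition of the copula functional with $G^\pm$. The coefficients $H^\pm$ and $W^\pm_{ak}$ are exactly the partial derivatives of this composition in $e$ and in $p_{a,k}$: $H$ collects $\Delta\cdot\partial_e G$ summed over all $k$, and $W$ is $\Delta\cdot\partial_p G$. The bracket is therefore the error of a first-order Taylor expansion evaluated at $\hat\eta$ rather than $\eta$. By (a)--(b), the composite map has bounded second derivatives on the relevant compact set: $C$ is $C^2$ there, and $r^\pm_\Gamma$ is smooth on $[c,1-c]$. A pointwise second-order Taylor bound then gives
\[
|{\rm bracket}|\lesssim (\hat e-e)^2+\sum_{a,k}(\hat p_{a,k}-p_{a,k})^2+|\hat e-e|\sum_{a,k}|\hat p_{a,k}-p_{a,k}|.
\]
Integrating yields $O_p(\mathrm{Rem})$. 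The term $\|\hat e-e\|_2^2$ is new relative to Theorem~\ref{thm:rate}, because $m^\pm$ now depends nonlinearly on $e$.

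The main obstacle is bookkeeping, not the analysis itself. One must check that $H^\pm$ and $W^\pm$ as defined match the true gradient of $m^\pm$. The issue is that $m^\pm$ involves the $G^+$ transform for one arm and the $G^-$ transform for the other, and the $\Delta^{\psi,\pm}$ coefficients are evaluated at these mixed endpoints. One must also check that the boundary terms $k=-1$ and $k=L-1$, whose margins are fixed at $0$ and $1$, contribute nothing. I would verify this by differentiating the telescoped form of $m^\pm$ directly, as was done for \eqref{eq: deltas}.

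Part (ii) follows immediately. Under the $o_p(n^{-1/4})$ rates, $\mathrm{Rem}=o_p(n^{-1/2})$. The central limit theorem applied to $\hat E\{IF^\pm_{\psi,\Gamma}\}$ and Slutsky's lemma then give asymptotic normality. Efficiency follows because the limiting variance equals the bound in Proposition~\ref{prop:sa_eif}.

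Part (iii) is algebraic. At $\Gamma=1$ we have $r^\pm_1(p)=p$, so $G^\pm_{a,1}(e,p)=p$ identically. Hence $\partial_e G=0$, which gives $H^\pm\equiv0$, and $\partial_p G=1$, which gives $W^\pm_{ak}=\Delta^\pm_{ak}$. Moreover $F^\pm_{a,1}=p_{a,k}$, so $m^\pm=m_\psi$ and $\Delta^{\psi,\pm}_{ak}=\Delta^\psi_{ak}$. With $\hat F_a=\hat p_a$, the integrands in \eqref{eq: onestep-bounds} therefore coincide with those in \eqref{eq:onestep}.
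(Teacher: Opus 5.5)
Your proposal is correct and follows the paper's proof essentially step for step. Both use the same three-term decomposition with the Donsker condition for the empirical-process term, both condition on $X$ to reduce $R$ to a first-order Taylor remainder plus a propensity-ratio cross term, and both obtain exact nesting from the $\Gamma=1$ collapse $H^\pm\equiv 0$, $W^\pm_{ak}=\Delta^\psi_{ak}$ with $\hat F_a=\hat p_a$. The only difference is cosmetic: you expand $m^\pm$ around $\hat\eta$, so the plug-in gradient $(\hat H^\pm,\hat W^\pm_{ak})$ cancels the augmentation terms exactly, whereas the paper expands around $\eta$ and then bounds $H^\pm-\hat H^\pm$ and $W^\pm_{ak}-\hat W^\pm_{ak}$ by Lipschitz continuity, which rests on the same bounded-second-derivative assumption.
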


			Finally, the sensitivity interval for the one-step estimator under hidden confounding is
			\[
			\hat\psi(\rho)
			+
			\left[\hat{b}^-_\Gamma(\rho),\; \hat{b}^+_\Gamma(\rho)\right],
			\quad
			\hat{b}^\pm_\Gamma(\rho) = \hat\psi^\pm_{\Gamma}(\rho) - \hat\psi(\rho).
			\]
			This interval quantifies how the original point estimate would shift under
			hidden confounding of strength~$\Gamma$.  When reported alongside the copula
			sensitivity curves, it provides a two-dimensional sensitivity assessment
			over both the dependence parameter~$\rho$ and the confounding
			parameter~$\Gamma$, bounding the maximal deviation from the current curve
			under a permitted degree of unconfoundedness violation.
			
			\section{Proofs of main theoretical results}\label{sec: supp-proofmain}
			\subsection*{Proof of Proposition~\ref{prop:latent-ordinal-copula}}
			
			Fix $x$, under the latent threshold model,
			\[
			Y^*(a)=\eta_a(X)+\varepsilon_a,
			\quad
			Y(a)=\ell \Longleftrightarrow \lambda_{\ell-1}<Y^*(a)\leq \lambda_\ell,
			\quad \ell=0,\ldots,L-1,
			\]
			we have, for $a=0,1$ and $k=0,\ldots,L-2$,
			\[
			\begin{aligned}
				F_a(k\mid x)
				&=
				\text{pr}\{Y(a)\leq k\mid X=x\}  \\
				&=
				\text{pr}\{Y^*(a)\leq \lambda_k\mid X=x\}  \\
				&=
				\text{pr}\{\varepsilon_a\leq \lambda_k-\eta_a(x)\mid X=x\}  \\
				&=
				F_{\varepsilon_a\mid X}\{\lambda_k-\eta_a(x)\}.
			\end{aligned}
			\]
			As $\lambda_{L-1}=+\infty$, the above equation holds also for $k=L-1$.
			
			Suppose that the latent residuals satisfy the conditional copula model
			\[
			\text{pr}(\varepsilon_1\leq e_1,\varepsilon_0\leq e_0\mid X=x)
			=
			C_\rho\{F_{\varepsilon_1\mid X}(e_1),
			F_{\varepsilon_0\mid X}(e_0)\},
			\quad e_1,e_0\in\mathbb R .
			\]
			For \(k,j=0,\ldots,L-2\), set $
			e_1=\lambda_k-\eta_1(x),
			e_0=\lambda_j-\eta_0(x). $
			Then
			\[
			\begin{aligned}
				&\text{pr}\{Y(1)\leq k,Y(0)\leq j\mid X=x\} \\
				&\quad =
				\text{pr}\{Y^*(1)\leq \lambda_k,Y^*(0)\leq \lambda_j\mid X=x\} \\
				&\quad =
				\text{pr}\{\varepsilon_1\leq \lambda_k-\eta_1(x),
				\varepsilon_0\leq \lambda_j-\eta_0(x)\mid X=x\} \\
				&\quad =
				C_\rho\left[
				F_{\varepsilon_1\mid X}\{\lambda_k-\eta_1(x)\},
				F_{\varepsilon_0\mid X}\{\lambda_j-\eta_0(x)\}
				\right] \\
				&\quad =
				C_\rho\{F_1(k\mid x),F_0(j\mid x)\}.
			\end{aligned}
			\]
			Boundary cases when at least one of $k,j$ takes the value of $L-1$ are easy to check using $\lambda_{L-1}=+\infty$. 
			Therefore, $\text{pr}\{Y(1)\leq k,Y(0)\leq j\mid X=x\} = C_\rho\{F_1(k\mid x),F_0(j\mid x)\}$ for all $k,j$ and Assumption~\ref{asmp: copula} holds.
			
			\subsection*{Proof of Proposition \ref{prop: eif}}
			Recall the observed data $O = (X,A,Y)$. 
			The observed data law $\text{pr}(O)$ has a factorization that $\text{pr}(o)=\text{pr}_X(x)\text{pr}_{A\mid X}(a\mid x)\text{pr}_{Y\mid A,X}(y\mid a,x)$.
			Consider the regular parametric submodels
			$\{\text{pr}(o;\varepsilon) = \text{pr}_X(x;\varepsilon)\text{pr}_{A\mid X}(a\mid x;\varepsilon)\text{pr}_{Y\mid A,X}(y\mid a,x;\varepsilon):\varepsilon\in(-\delta,\delta)\}$ indexed by $\varepsilon$ with $\left.\text{pr}(O;\varepsilon)\right|_{\varepsilon=0}=\text{pr}(O)$.
			For the unrestricted nonparametric model $\mathcal{M}_{\rm np}$, any score admits the decomposition
			\[
			S(O)=\left.\frac{\partial}{\partial\varepsilon}\log\left\{\text{pr}(O;\varepsilon)\right\}\right|_{\varepsilon=0}, \quad S(o)=S_X(x)+S_{A\mid X}(a,x)+S_{Y\mid A,X}(y,a,x),
			\]
			with the constraints $
			E\{S_X(X)\}=0,
			E\{S_{A\mid X}(A,X)\mid X\}=0,
			E\{S_{Y\mid A,X}(Y,A,X)\mid A,X\}=0.$
			This yields the nonparametric tangent space that 
			\[\mathcal{T} = \left\{b\left(Y,A,X\right):E\left\{b(Y,A,X)\right\} = 0\right\}.\]
			To verify the efficient influence function, we first check $IF_{\psi} \in\mathcal{T}$, equivalently 
			$E\{IF_{\psi}(O)\} = 0$.
			By definition, $E\{m_\psi(X)-\psi\}=0$. Moreover, $
			E\left\{\mathbbm{1}\{Y\le k\}-F_a(k\mid X)\mid A=a,X\right\}=0.$
			So the weighted residual terms have conditional mean zero, implying $E\{IF_{\psi}(O)\}=0$.
			
			It remains to verify that the pathwise differentiability is satisfied if 
			\[
			\left.\frac{\partial}{\partial\varepsilon}\psi\left\{\text{pr}\left(O;\varepsilon\right)\right\}\right|_{\varepsilon=0}
			=
			E\left\{IF_{\psi}(O)S(O)\right\}.
			\]
			
			First, consider the submodel $\text{pr}(O;\varepsilon)$ where only $\text{pr}_X$ is indexed by $\varepsilon$ and other components are at the truth.
			Then $m_\psi(x)$ is fixed with respect to $\varepsilon$ and we have
			\begin{equation}\label{eq: pathdiff1}
				\left.\frac{\partial}{\partial\varepsilon}\psi\left\{\text{pr}\left(O;\varepsilon\right)\right\}\right|_{\varepsilon=0}
				=
				E\left\{m_\psi(X)S_X(X)\right\}
				=
				E\left\{(m_\psi(X)-\psi)S_X(X)\right\}
				=
				E\left\{IF_{\psi}(O)S_X(X)\right\}.
			\end{equation}
			
			Next, consider the submodel $\text{pr}(O;\varepsilon)$ with $\text{pr}_{A\mid X}$ perturbations only.
			In this case, $F_a(\cdot\mid x)$ and $m_\psi(x)$ are fixed, so the derivative is zero. 
			Since $E\{S_{A\mid X}(A,X)\mid X\}=0$ and the residual terms have conditional mean zero given $(A,X)$, one has
			\begin{equation}\label{eq: pathdiff2}
				\left.\frac{\partial}{\partial\varepsilon}\psi\left\{\text{pr}\left(O;\varepsilon\right)\right\}\right|_{\varepsilon=0} = 0=
				E\left\{IF_{\psi}\left(O\right)S_{A\mid X}\left(A,X\right)\right\}.
			\end{equation}
			
			Finally, consider the submodel $\text{pr}(O;\varepsilon)$ with $\text{pr}_{Y\mid A,X}$ perturbations only.
			Then $\text{pr}_X$ is fixed and
			\[
			\left.\frac{\partial}{\partial\varepsilon}\psi(P_\varepsilon)\right|_{\varepsilon=0}
			=
			E\left\{\left.\frac{\partial}{\partial\varepsilon}m_{\psi,\varepsilon}(X)\right|_{\varepsilon=0}\right\}.
			\]
			Since $m_\psi(x)$ is a smooth functional of the finite-dimensional vector
			$\{F_1(k\mid x),F_0(k\mid x)\}_{k=0}^{L-2}$, by the chain rule, we have
			\[
			\left.\frac{\partial}{\partial\varepsilon}m_{\psi,\varepsilon}(x)\right|_{\varepsilon=0}
			=
			\sum_{k=0}^{L-2} \Delta_{1k}(x)\left.\frac{\partial}{\partial\varepsilon}F_{1,\varepsilon}(k\mid x)\right|_{\varepsilon=0}
			+
			\sum_{k=0}^{L-2} \Delta_{0k}(x)\left.\frac{\partial}{\partial\varepsilon}F_{0,\varepsilon}(k\mid x)\right|_{\varepsilon=0}.
			\]
			For each $a$ and $k$ with $F_a(k\mid x)=E\{\mathbbm{1}\{Y\le k\}\mid A=a,X=x\}$, we have 
			\[
			\left.\frac{\partial}{\partial\varepsilon}F_{a,\varepsilon}(k\mid x)\right|_{\varepsilon=0}
			=
			E\left(\left\{\mathbbm{1}\left(Y\le k\right)-F_a\left(k\mid X\right)\right\}\,S_{Y\mid A,X}(Y,A,X)\mid A=a,X=x\right).
			\]
			Combining these identities and using
			\[
			E\left\{\frac{A}{e(X)}Z\mid X\right\}=E\left(Z\mid A=1,X\right),
			\quad
			E\left\{\frac{1-A}{1-e(X)}Z\mid X\right\}=E\left(Z\mid A=0,X\right),
			\]
			for any random variable $Z$, we have the pathwise differentiability that
			\begin{equation}\label{eq: pathdiff3}
				\left.\frac{\partial}{\partial\varepsilon}\psi\left\{\text{pr}\left(O;\varepsilon\right)\right\}\right|_{\varepsilon=0}
				=
				E\left\{IF_{\psi}(O)S_{Y\mid A,X}(Y,A,X)\right\}.
			\end{equation}
			
			Moreover, in the unrestricted model, each of the sets of such scores corresponds to perturbations of exactly one factor in the likelihood, as we have an orthogonal direct-sum decomposition of the tangent space.
			\begin{equation}\label{eq: tangentspace}
				\mathcal{T}=\mathcal{T}_X\oplus\mathcal{T}_{A\mid X}\oplus\mathcal{T}_{Y\mid A,X}.
			\end{equation}
			
			Because the map $S\mapsto \left.\frac{\partial}{\partial\varepsilon}\psi\{\text{pr}(O;\varepsilon)\}\right|_{\varepsilon=0}$ is linear in the score direction and combine \eqref{eq: tangentspace}, for an arbitrary score $S\in\mathcal{T}$ we have
			\[
			\left.\frac{\partial}{\partial\varepsilon}\psi\{\text{pr}(O;\varepsilon)\}\right|_{\varepsilon=0}
			=
			\left.\frac{\partial}{\partial\varepsilon}\psi\{\text{pr}(O;\varepsilon)\}\right|_{\varepsilon=0}\Big|_{S_X}
			+
			\left.\frac{\partial}{\partial\varepsilon}\psi\{\text{pr}(O;\varepsilon)\}\right|_{\varepsilon=0}\Big|_{S_{A\mid X}}
			+
			\left.\frac{\partial}{\partial\varepsilon}\psi\{\text{pr}(O;\varepsilon)\}\right|_{\varepsilon=0}\Big|_{S_{Y\mid A,X}},
			\]
			and combine with equations \eqref{eq: pathdiff1}--\eqref{eq: pathdiff3}, we have
			\[
			\begin{aligned}
				\left.\frac{\partial}{\partial\varepsilon}\psi\{\text{pr}(O;\varepsilon)\}\right|_{\varepsilon=0}
				&=
				E\{IF_{\psi}(O)S_X(X)\}+E\{IF_{\psi}(O)S_{A\mid X}(A,X)\}+E\{IF_{\psi}(O)S_{Y\mid A,X}(Y,A,X)\}\\
				&=
				E\{IF_{\psi}(O)S(O)\}.
			\end{aligned}
			\]
			Therefore, $\psi$ is pathwise differentiable with influence function $IF_{\psi}$. 
			Since $IF_{\psi}\in \mathcal{T}$, it is the nonparametric efficient influence function.
			
			\subsection*{Proof of Theorem \ref{thm:rate}}
			Recall that the one-step estimator in \eqref{eq:onestep} can be written as
			\[
			\hat\psi=\hat E\{\Psi(O;\hat e,\hat F_1,\hat F_0)\},
			\]
			where
			\[
			\Psi(O;e,F_1,F_0)
			=
			m_\psi(X)
			+
			\sum_{a=0}^1
			\frac{\mathbbm{1}(A=a)}{e(X)^a\{1-e(X)\}^{1-a}}
			\sum_{k=0}^{L-2}
			\Delta_{ak}(X)
			\{\mathbbm{1}(Y\le k)-F_a(k\mid X)\}.
			\]
			By Proposition \ref{prop: eif}, $IF_\psi(O)=\Psi(O;e,F_1,F_0)-\psi$.
			We can write
			\[
			\hat\psi-\psi
			=
			(\hat E-E)\{\Psi(O;\hat e,\hat F_1,\hat F_0)\}
			+
			E\{\Psi(O;\hat e,\hat F_1,\hat F_0)-\Psi(O;e,F_1,F_0)\}.
			\]
			Adding and subtracting $(\hat E-E)\Psi(O;e,F_1,F_0)$ yields
			\[
			\hat\psi-\psi
			=
			\hat E\{IF_\psi(O)\}
			+
			(\hat E-E)\{\Psi(O;\hat e,\hat F_1,\hat F_0)-\Psi(O;e,F_1,F_0)\}
			+
			E\{\Psi(O;\hat e,\hat F_1,\hat F_0)-\Psi(O;e,F_1,F_0)\}.
			\]
			
			The second term is $o_p(n^{-1/2})$ by the Donsker condition and consistency conditions, using the empirical process theory. 
			Consider the third term and denote $
			R=E\{\Psi(O;\hat e,\hat F_1,\hat F_0)-\Psi(O;e,F_1,F_0)\}.$
			Conditioning on $X$, for each $a$ and $k$,
			\[
			E\!\left\{
			\frac{\mathbbm{1}(A=a)}{\hat e(X)^a\{1-\hat e(X)\}^{1-a}}
			\hat\Delta_{ak}(X)\{1(Y\le k)-\hat F_a(k\mid X)\}
			\mid X
			\right\}
			=
			\frac{e(X)^a\{1-e(X)\}^{1-a}}{\hat e(X)^a\{1-\hat e(X)\}^{1-a}}
			\hat\Delta_{ak}(X)
			\{F_a(k\mid X)-\hat F_a(k\mid X)\}.
			\]
			Hence
			\[
			R
			=
			E\left[
			\hat m_\psi(X)-m_\psi(X)
			+
			\sum_{a=0}^1\sum_{k=0}^{L-2}
			\frac{e(X)^a\{1-e(X)\}^{1-a}}{\hat e(X)^a\{1-\hat e(X)\}^{1-a}}
			\hat\Delta_{ak}(X)
			\{F_a(k\mid X)-\hat F_a(k\mid X)\}
			\right].
			\]
			
			Since $m_\psi$ is a linear combination of copulas increments and is therefore twice continuously differentiable with bounded derivatives, apply a first-order Taylor expansion of $m_\psi(X)$ around the true marginal distributions, we have
			\[
			\hat m_\psi(X)-m_\psi(X)
			=
			\sum_{k=0}^{L-2}\Delta_{1k}(X)\{\hat F_1(k\mid X)-F_1(k\mid X)\}
			+
			\sum_{k=0}^{L-2}\Delta_{0k}(X)\{\hat F_0(k\mid X)-F_0(k\mid X)\}
			+
			r(X),
			\]
			where $
			|r(X)|
			\le
			C
			\sum_{a=0}^1\sum_{k=0}^{L-2}
			|\hat F_a(k\mid X)-F_a(k\mid X)|^2.
			$
			
			Substituting this into $R$ gives
			\[
			R
			=
			E\!\left[
			\sum_{a=0}^1\sum_{k=0}^{L-2}
			\left\{
			\Delta_{ak}(X)
			-
			\frac{e(X)^a\{1-e(X)\}^{1-a}}{\hat e(X)^a\{1-\hat e(X)\}^{1-a}}
			\hat\Delta_{ak}(X)
			\right\}
			\{\hat F_a(k\mid X)-F_a(k\mid X)\}
			+
			r(X)
			\right].
			\]
			By adding and subtracting $\hat\Delta_{ak}(X)$,
			\[
			\left|
			\Delta_{ak}(X)
			-
			\frac{e(X)^a\{1-e(X)\}^{1-a}}{\hat e(X)^a\{1-\hat e(X)\}^{1-a}}
			\hat\Delta_{ak}(X)
			\right|
			\le
			|\Delta_{ak}(X)-\hat\Delta_{ak}(X)|
			+
			\left|
			1-\frac{e(X)^a\{1-e(X)\}^{1-a}}{\hat e(X)^a\{1-\hat e(X)\}^{1-a}}
			\right||\hat\Delta_{ak}(X)|.
			\]
			The positivity condition $c\le \hat e(X)\le 1-c$ implies
			\[
			\left|
			1-\frac{e(X)^a\{1-e(X)\}^{1-a}}{\hat e(X)^a\{1-\hat e(X)\}^{1-a}}
			\right|
			\le c^{-1}|\hat e(X)-e(X)|,\quad a=0,1.
			\]
			By the smoothness of the copula derivatives in Condition (iv), $\Delta_{ak}$ is Lipschitz in the margins, so
			\[
			|\Delta_{ak}(X)-\hat\Delta_{ak}(X)|
			\le
			C\sum_{b=0}^1\sum_{j=0}^{L-2}|\hat F_b(j\mid X)-F_b(j\mid X)|,
			\]
			with boundary conventions $F_a(-1\mid X)=0$ and $F_a(L-1\mid X)=1$, and $|\hat\Delta_{ak}(X)|$ is uniformly bounded by general copula properties. Combining the three bounds gives
			\[
			\left|
			\Delta_{ak}(X)
			-
			\frac{e(X)^a\{1-e(X)\}^{1-a}}{\hat e(X)^a\{1-\hat e(X)\}^{1-a}}
			\hat\Delta_{ak}(X)
			\right|
			\le
			C\left(
			|\hat e(X)-e(X)|
			+
			\sum_{b=0}^1\sum_{j=0}^{L-2}
			|\hat F_b(j\mid X)-F_b(j\mid X)|
			\right).
			\]
			Taking expectations and applying the Cauchy--Schwarz inequality yields
			\[
			|R|
			\le
			C\left(
			\|\hat F_1-F_1\|_2^2
			+
			\|\hat F_0-F_0\|_2^2
			+
			\|\hat e-e\|_2(\|\hat F_1-F_1\|_2+\|\hat F_0-F_0\|_2)
			\right).
			\]
			This establishes the stated remainder rate.
			Combining the previous steps gives
			\[
			\hat\psi-\psi
			=
			\hat E\{IF_\psi(O)\}
			+
			O_p(\mathrm{Rem})
			+
			o_p(n^{-1/2}),
			\]
			where
			\[
			\mathrm{Rem}
			=
			\|\hat F_1-F_1\|_2^2
			+
			\|\hat F_0-F_0\|_2^2
			+
			\|\hat e-e\|_2(\|\hat F_1-F_1\|_2+\|\hat F_0-F_0\|_2).
			\]

			If $
			\|\hat e-e\|_2=o_p(n^{-1/4}),\quad
			\|\hat F_a-F_a\|_2=o_p(n^{-1/4}),$
			then $\mathrm{Rem}=o_p(n^{-1/2})$. Therefore
			\[
			n^{1/2}(\hat\psi-\psi)
			=
			n^{1/2}\hat E\{IF_\psi(O)\}+o_p(1).
			\]
			Since $IF_\psi(O)$ has mean zero and variance $E\{IF_\psi(O)^2\}$, the central limit theorem implies
			\[
			n^{1/2}(\hat\psi-\psi) \Rightarrow
			N\!\left(0,E\{IF_\psi(O)^2\}\right).
			\]
			
			This completes the proof.
			
			\subsection*{Proof of Proposition~\ref{prop:binarybounds}}
			\begin{proof}
				Following \cite{yadlowsky2022bounds}, the worst-case lower bound for $r_{a,k}(x)$ under Assumption~3 solves
				\[
				\min_{0 \le \theta \le 1}
				\frac{1}{2}\left\{p_{a,k}(x)(1-\theta)^2 + \Gamma\{1-p_{a,k}(x)\}\theta^2\right\}.
				\]
				Write $p = p_{a,k}(x)$ for brevity. Differentiating with respect to $\theta$ and setting the derivative to zero gives $-p(1-\theta) + \Gamma(1-p)\theta = 0$, hence $\theta = p/\{p + \Gamma(1-p)\} = r^-_\Gamma(p)$. The second derivative $p + \Gamma(1-p) > 0$ confirms this is a minimum. The upper bound solves the analogous problem
				\[
				\min_{0 \le \theta \le 1}
				\frac{1}{2}\left\{\Gamma\, p(1-\theta)^2 + (1-p)\theta^2\right\},
				\]
				yielding $\theta = \Gamma p/\{\Gamma p + (1-p)\} = r^+_\Gamma(p)$.
				
				Substituting the above bounds into the decomposition \eqref{eq:Fa_decomp} with $\text{pr}(A=1 \mid X=x) = e(x)$, for $a = 1$:
				\[
				F^\pm_{1,\Gamma}(k \mid x) = e(x)\,p_{1,k}(x) + \{1 - e(x)\}\,r^\pm_\Gamma\{p_{1,k}(x)\},
				\]
				and for $a = 0$, using $\text{pr}(A=0 \mid X=x) = 1 - e(x)$:
				\[
				F^\pm_{0,\Gamma}(k \mid x) = \{1 - e(x)\}\,p_{0,k}(x) + e(x)\,r^\pm_\Gamma\{p_{0,k}(x)\}. 
				\]
			\end{proof}

			\section{Proofs of supplementary results}
			\label{sec: supp-proofs-others}
			
			\subsection*{Proof of Proposition~\ref{prop:unc-latent-ordinal-copula}}
			\begin{proof}
				Under the latent threshold model, for $a=0,1$ and $k=0,\ldots,L-2$,
				\[
				F_a(k)
				=
				\text{pr}\{Y(a)\leq k\}
				=
				\text{pr}\{Y^*(a)\leq \lambda_k\}
				=
				F_a^*(\lambda_k).
				\]
				As $\lambda_{L-1}=+\infty$, the equation also holds for $k=L-1$.
				
				Suppose the latent continuous potential outcomes satisfy
				\[
				\text{pr}\{Y^*(1)\leq y_1,\,Y^*(0)\leq y_0\}
				=
				C_\rho\{F_1^*(y_1),F_0^*(y_0)\},
				\quad y_1,y_0\in\mathbb{R}.
				\]
				For $k,j=0,\ldots,L-2$, set $y_1=\lambda_k$ and $y_0=\lambda_j$. Then
				\[
				\begin{aligned}
					&\text{pr}\{Y(1)\leq k,\,Y(0)\leq j\} \\
					&\quad=
					\text{pr}\{Y^*(1)\leq \lambda_k,\,Y^*(0)\leq \lambda_j\}\\
					&\quad=
					C_\rho\{F_1^*(\lambda_k),F_0^*(\lambda_j)\}\\
					&\quad=
					C_\rho\{F_1(k),F_0(j)\}.
				\end{aligned}
				\]
				Boundary cases when $k=L-1$ or $j=L-1$ are easy to check using $\lambda_{L-1}=+\infty$.
				Therefore $\text{pr}\{Y(1)\leq k,Y(0)\leq j\}=C_\rho\{F_1(k),F_0(j)\}$ for all $k,j$, and Assumption~\ref{asmp: copulauncond} holds with the same copula $C_\rho$.
			\end{proof}
			
			\subsection*{Proof of Proposition~\ref{prop:eifuncond}}
			Recall the observed data $O=(X,A,Y)$ with the factorization $\text{pr}(o)=\text{pr}_X(x)\text{pr}_{A\mid X}(a\mid x)\text{pr}_{Y\mid A,X}(y\mid a,x)$.
			From the proof of Proposition~\ref{prop: eif}, the nonparametric tangent space admits the orthogonal direct-sum decomposition
			\[
			\mathcal{T}=\mathcal{T}_X\oplus\mathcal{T}_{A\mid X}\oplus\mathcal{T}_{Y\mid A,X}.
			\]
			
			For $a=0,1$ and $k=0,\ldots,L-2$, the marginal probability $F_a(k)=E\{F_a(k\mid X)\}$ is identified under Assumption~\ref{asmp: causal}. 
			Recall
			\[
			D_{ak}(O)
			=
			\frac{\mathbbm{1}(A=a)}{e(X)^a\{1-e(X)\}^{1-a}}
			\{\mathbbm{1}(Y \le k)-F_a(k \mid X)\}
			+
			F_a(k \mid X)-F_a(k).
			\]
			We verify that $D_{ak}(O)$ is the efficient influence function for $F_a(k)$ in $\mathcal{M}_{\rm np}$.
			First, $D_{ak}\in\mathcal{T}$: by definition $E\{F_a(k\mid X)-F_a(k)\}=0$, and the weighted residual term has conditional mean zero given $(A,X)$, so $E\{D_{ak}(O)\}=0$.
			
			For pathwise differentiability, consider the same regular parametric submodels $\{\text{pr}(O;\varepsilon):\varepsilon\in(-\delta,\delta)\}$ as in the proof of Proposition~\ref{prop: eif}.
			Under perturbation of $\text{pr}_X$ only, $F_a(k\mid x)$ is fixed and
			\[
			\left.\frac{\partial}{\partial\varepsilon}F_a(k;\varepsilon)\right|_{\varepsilon=0}
			=
			E\{F_a(k\mid X)\,S_X(X)\}
			=
			E\bigl[\{F_a(k\mid X)-F_a(k)\}\,S_X(X)\bigr]
			=
			E\{D_{ak}(O)\,S_X(X)\},
			\]
			using $E\{S_X(X)\}=0$ and that the inverse-probability-weighted residual term has expectation zero.
			Under perturbation of $\text{pr}_{A\mid X}$ only, both $F_a(k\mid x)$ and $F_a(k)$ are fixed, so the derivative is zero; on the other hand, $E\{D_{ak}(O)\,S_{A\mid X}(A,X)\}=0$ because both terms in $D_{ak}$ have conditional mean zero given $X$.
			Under perturbation of $\text{pr}_{Y\mid A,X}$ only, $\text{pr}_X$ and $\text{pr}_{A\mid X}$ are fixed; following the same calculation as~\eqref{eq: pathdiff3} in the proof of Proposition~\ref{prop: eif},
			\[
			\left.\frac{\partial}{\partial\varepsilon}F_a(k;\varepsilon)\right|_{\varepsilon=0}
			=
			E\!\left\{
			\frac{\mathbbm{1}(A=a)}{e(X)^a\{1-e(X)\}^{1-a}}
			\{\mathbbm{1}(Y\leq k)-F_a(k\mid X)\}
			S_{Y\mid A,X}(Y,A,X)
			\right\}
			=
			E\{D_{ak}(O)\,S_{Y\mid A,X}\}.
			\]
			By the orthogonal tangent decomposition and linearity, $F_a(k)$ is pathwise differentiable with efficient influence function $D_{ak}(O)$.
			
			Next, we consider the efficient influence function for $\psi$.
			Under Assumption~\ref{asmp: copulauncond}, $\psi$ is a smooth functional of the finite-dimensional vector $\{F_1(k),F_0(k)\}_{k=0}^{L-2}$ given in~\eqref{eq: uncond-iden-formula}, with partial derivatives $\partial\psi/\partial F_a(k)=\Delta^\psi_{ak}$ defined in~\eqref{eq: uncond-deriv}.
			For any score $S\in\mathcal{T}$, by linearity of the pathwise derivative,
			\[
			\left.\frac{\partial}{\partial\varepsilon}\psi\{\text{pr}(O;\varepsilon)\}\right|_{\varepsilon=0}
			=
			\sum_{a=0}^1\sum_{k=0}^{L-2}\Delta^\psi_{ak}
			\left.\frac{\partial}{\partial\varepsilon}F_a(k;\varepsilon)\right|_{\varepsilon=0}
			=
			\sum_{a=0}^1\sum_{k=0}^{L-2}\Delta^\psi_{ak}\,E\{D_{ak}(O)S(O)\}
			=
			E\{IF_\psi(O)S(O)\},
			\]
			where
			\[
			IF_\psi(O)
			=
			\sum_{a=0}^1\sum_{k=0}^{L-2}\Delta^\psi_{ak}\,D_{ak}(O).
			\]
			As a finite linear combination of elements of $\mathcal{T}$, $IF_\psi\in\mathcal{T}$, hence $IF_\psi$ is the nonparametric efficient influence function for $\psi$. The semiparametric efficiency bound is $E\{IF_\psi(O)^2\}$.
			
			\subsection*{Proof of Theorem~\ref{thm: uncthm}}
			
			For each $a=0,1$ and $k=0,\ldots,L-2$, define
			\[
			\Psi_{ak}(O;e,F_a)
			=
			\frac{\mathbbm{1}(A=a)}{e(X)^a\{1-e(X)\}^{1-a}}
			\{\mathbbm{1}(Y\leq k)-F_a(k\mid X)\}
			+
			F_a(k\mid X),
			\]
			so that $\hat F_a^{\rm dr}(k)=\hat E\{\Psi_{ak}(O;\hat e,\hat F_a)\}$ and $D_{ak}(O)=\Psi_{ak}(O;e,F_a)-F_a(k)$.
			
			\textit{Part (i): Double robustness.}
			Conditioning on $X$ and using $E\{\mathbbm{1}(Y\leq k)\mid A=a,X\}=F_a(k\mid X)$ under Assumption~1,
			\[
			E\{\Psi_{ak}(O;\hat e,\hat F_a)\}
			=
			E\!\left[
			\frac{e(X)^a\{1-e(X)\}^{1-a}}{\hat e(X)^a\{1-\hat e(X)\}^{1-a}}
			\{F_a(k\mid X)-\hat F_a(k\mid X)\}
			+
			\hat F_a(k\mid X)
			\right].
			\]
			Rearranging,
			\begin{equation}\label{eq:uncond-DR-bias}
				E\{\Psi_{ak}(O;\hat e,\hat F_a)\}-F_a(k)
				=
				E\!\left[
				\frac{e(X)^a\{1-e(X)\}^{1-a}-\hat e(X)^a\{1-\hat e(X)\}^{1-a}}{\hat e(X)^a\{1-\hat e(X)\}^{1-a}}
				\{\hat F_a(k\mid X)-F_a(k\mid X)\}
				\right].
			\end{equation}
			The right-hand side is a product of two estimation errors and converges to zero in probability if either $\|\hat e-e\|_2=o_p(1)$ or $\|\hat F_a(k\mid\cdot)-F_a(k\mid\cdot)\|_2=o_p(1)$.
			Combined with the law of large numbers applied to the empirical mean defining $\hat F_a^{\rm dr}(k)$, this yields $\hat F_a^{\rm dr}(k)\overset{p}{\to}F_a(k)$ for each $a,k$.
			By the continuity of the copula $C$, $\hat\psi^{\rm dr}$, viewed as a continuous functional of $\{\hat F_a^{\rm dr}(k)\}_{a,k}$, converges in probability to $\psi$.
			
			\textit{Part (ii): Asymptotic linear expansion and efficiency.}
			As $\psi$ is a smooth map for unconditional margins through copula increments and therefore twice continuously differentiable with bounded derivatives on $[c,1-c]^{2(L-1)}$.
			A first-order Taylor expansion gives
			\begin{equation}\label{eq:uncond-Taylor}
				\hat\psi^{\rm dr}-\psi
				=
				\sum_{a=0}^1\sum_{k=0}^{L-2}\Delta^\psi_{ak}\{\hat F_a^{\rm dr}(k)-F_a(k)\}+r,
				\quad
				|r|\leq C\sum_{a=0}^1\sum_{k=0}^{L-2}\{\hat F_a^{\rm dr}(k)-F_a(k)\}^2.
			\end{equation}
			
			For each $a,k$, decompose
			\[
			\hat F_a^{\rm dr}(k)-F_a(k)
			=
			\hat E\{D_{ak}(O)\}
			+
			(\hat E-E)\{\Psi_{ak}(O;\hat e,\hat F_a)-\Psi_{ak}(O;e,F_a)\}
			+
			R_{ak},
			\]
			where $R_{ak}=E\{\Psi_{ak}(O;\hat e,\hat F_a)\}-F_a(k)$. The middle term is $o_p(n^{-1/2})$ by the Donsker condition and the $L_2$-consistency of $(\hat e,\hat F_a)$, using empirical process theory. 
			By~\eqref{eq:uncond-DR-bias}, the positivity condition $c\leq\hat e(X)\leq 1-c$, applying the Cauchy--Schwarz inequality yields
			\[
			|R_{ak}|
			\leq
			c^{-1}\,\|\hat e-e\|_2\,\|\hat F_a(k\mid\cdot)-F_a(k\mid\cdot)\|_2.
			\]
			Therefore
			\[
			\hat F_a^{\rm dr}(k)-F_a(k)
			=
			\hat E\{D_{ak}(O)\}
			+
			O_p\bigl(\|\hat e-e\|_2\,\|\hat F_a(k\mid\cdot)-F_a(k\mid\cdot)\|_2\bigr)
			+
			o_p(n^{-1/2})
			=
			O_p(n^{-1/2}),
			\]
			under the consistency conditions, so the Taylor remainder in~\eqref{eq:uncond-Taylor} satisfies $|r|=O_p(n^{-1})=o_p(n^{-1/2})$. Combining and using $IF_\psi(O)=\sum_{a,k}\Delta^\psi_{ak}D_{ak}(O)$,
			\[
			\hat\psi^{\rm dr}-\psi
			=
			\hat E\{IF_\psi(O)\}
			+
			O_p(\mathrm{Rem})
			+
			o_p(n^{-1/2}),
			\quad
			\mathrm{Rem}
			=
			\|\hat e-e\|_2\sum_{a=0}^1\|\hat F_a-F_a\|_2,
			\]
			which establishes the stated rate-doubly-robust expansion.
			
			If $\|\hat e-e\|_2=o_p(n^{-1/4})$ and $\|\hat F_a-F_a\|_2=o_p(n^{-1/4})$ for $a=0,1$, then $\mathrm{Rem}=o_p(n^{-1/2})$, and
			\[
			n^{1/2}(\hat\psi^{\rm dr}-\psi)
			=
			n^{1/2}\hat E\{IF_\psi(O)\}+o_p(1).
			\]
			Since $IF_\psi(O)$ has mean zero and variance $E\{IF_\psi(O)^2\}$, the central limit theorem gives
			\[
			n^{1/2}(\hat\psi^{\rm dr}-\psi)\Rightarrow N\bigl(0,\,E\{IF_\psi(O)^2\}\bigr).
			\]
			This completes the proof.
			
			\subsection*{Proof of Proposition~\ref{prop:pseudolimit}}
			\begin{proof}
				Define
				\[
				\bar\Delta^{\psi}_{1k}(x) = \dot{C}_1\{p_{1,k}(x), p_{0,k-1}(x)\} - \dot{C}_1\{p_{1,k}(x), p_{0,k}(x)\},
				\]
				\[
				\bar\Delta^{\psi}_{0k}(x) = \dot{C}_0\{p_{1,k+1}(x), p_{0,k}(x)\} - \dot{C}_0\{p_{1,k}(x), p_{0,k}(x)\}.
				\]
				By the assumed nuisance convergence, the plug-in component $\hat{E}\{\hat{m}_\psi(X)\}$ of \eqref{eq:onestep} converges in probability to $E\{\bar{m}_\psi(X;\rho)\}$, and $\hat\Delta^\psi_{ak}(X) \to \bar\Delta^\psi_{ak}(X)$ in probability. For the augmentation term, conditioning on $X$ and then on $(A,X)$ gives, for each $a$ and $k$,
				\[
				E\left[
				\frac{\mathbbm{1}(A=a)}{e(X)^a\{1-e(X)\}^{1-a}}
				\bar\Delta^\psi_{ak}(X)
				\{\mathbbm{1}(Y\leq k) - p_{a,k}(X)\}
				\;\middle|\;
				X
				\right]
				=
				\bar\Delta^\psi_{ak}(X)
				\,
				E\{\mathbbm{1}(Y\leq k) - p_{a,k}(X) \mid A=a, X\}
				= 0,
				\]
				because $p_{a,k}(X) = E(\mathbbm{1}(Y\leq k) \mid A=a, X)$, and the inverse probability weight $\mathbbm{1}(A=a)/\{e(X)^a(1-e(X))^{1-a}\}$ integrates to $1$ when conditioned on $(A=a, X)$. Hence the population mean of each augmentation term vanishes when centered at $p_{a,k}(X)$, and the probability limit of \eqref{eq:onestep} equals $\bar\psi(\rho)$.
			\end{proof}
			
			\subsection*{Proof of Proposition~\ref{prop:endpointbounds}}
			\begin{proof}
				By Proposition~\ref{prop:binarybounds},
				\[
				F^-_{1,\Gamma}(k \mid X) \le F_1(k \mid X) \le F^+_{1,\Gamma}(k \mid X), \quad F^-_{0,\Gamma}(k \mid X) \le F_0(k \mid X) \le F^+_{0,\Gamma}(k \mid X)
				\]
				for all $k$. By \eqref{eq: deltas}, the conditional map $m_\psi(x) = \sum_{k=0}^{L-1}\{C(F_1(k \mid x), F_0(k{-}1 \mid x)) - C(F_1(k{-}1 \mid x), F_0(k{-}1 \mid x))\}$ depends on the conditional margins $\{F_1(k \mid x), F_0(k \mid x)\}_{k=0}^{L-2}$. For any copula $C$, $\dot{C}_1(u,v)$ is nondecreasing in $v$ and $\dot{C}_0(u,v)$ is nondecreasing in $u$ on $(0,1)^2$. This implies that $\Delta^\psi_{1k}(x) = \dot{C}_1\{F_1(k \mid x), F_0(k{-}1 \mid x)\} - \dot{C}_1\{F_1(k \mid x), F_0(k \mid x)\} \le 0$, so $m_\psi(x)$ is coordinatewise decreasing in $F_1(k \mid x)$; similarly, $\Delta^\psi_{0k}(x) \ge 0$, so $m_\psi(x)$ is coordinatewise increasing in $F_0(k \mid x)$. Therefore, the smallest value of $m_\psi(x)$ is obtained by setting the treated margins to their upper endpoints and the control margins to their lower endpoints:
				\[
				m_\psi(x) \ge \sum_{k=0}^{L-1} \left[C\{F^+_{1,\Gamma}(k \mid x), F^-_{0,\Gamma}(k{-}1 \mid x)\} - C\{F^+_{1,\Gamma}(k{-}1 \mid x), F^-_{0,\Gamma}(k{-}1 \mid x)\}\right] = m^-_{\psi,\Gamma}(x;\rho).
				\]
				The upper bound $m_\psi(x) \le m^+_{\psi,\Gamma}(x;\rho)$ follows by reversing the endpoint choices. Taking expectations over $X$ yields $\psi(\rho) \in [\psi^-_\Gamma(\rho), \psi^+_\Gamma(\rho)]$.
			\end{proof}
			
			\subsection*{Proof of Proposition~\ref{prop:sa_eif}}
			\begin{proof}
				The proof parallels that of Proposition~2; we provide full details for the lower endpoint $\psi^-_\Gamma(\rho)$, as the upper endpoint is identical. 
				The parameter $\psi^-_\Gamma(\rho) = E\{m^-_{\psi,\Gamma}(X;\rho)\}$ is a smooth functional of the observed-data distribution through the nuisance $\eta = (e, \{p_{1,k}\}, \{p_{0,k}\})$, where the dependence on $\eta$ is mediated by the $G$-transforms: $m^-_\psi$ depends on $\{G^+_{1,\Gamma}(e(x), p_{1,k}(x)),\, G^-_{0,\Gamma}(e(x), p_{0,k}(x))\}_{k=0}^{L-2}$.
				
				\medskip
				\noindent\textit{Step 1: Mean zero.}\quad
				We verify $E\{IF^-_{\psi,\Gamma}(O)\} = 0$. By definition, $E\{m^-_{\psi,\Gamma}(X;\rho) - \psi^-_\Gamma(\rho)\} = 0$. For the propensity score augmentation, conditioning on $X$:
				\[
				E\{H^-_\psi(X)(A - e(X)) \mid X\} = H^-_\psi(X)\,E\{A - e(X) \mid X\} = 0,
				\]
				since $e(X) = E(A \mid X)$. For the outcome augmentation, conditioning first on $(A,X)$:
				\[
				E\left[\frac{\mathbbm{1}(A=a)}{e(X)^a\{1-e(X)\}^{1-a}} W^-_{ak}(X)\{\mathbbm{1}(Y\leq k) - p_{a,k}(X)\} \;\middle|\; X\right]
				= W^-_{ak}(X)\,E\{\mathbbm{1}(Y\leq k) - p_{a,k}(X) \mid A = a, X\} = 0,
				\]
				because $p_{a,k}(X) = E(\mathbbm{1}(Y\leq k) \mid A=a, X)$. Hence $E\{IF^-_{\psi,\Gamma}(O)\} = 0$ and $IF^-_{\psi,\Gamma} \in \mathcal{T}$, the nonparametric tangent space.
				
				\medskip
				\noindent\textit{Step 2: Pathwise differentiability.}\quad
				Consider regular parametric submodels $\{\text{pr}(O;\varepsilon) = \text{pr}_X(x;\varepsilon)\,\text{pr}_{A \mid X}(a \mid x;\varepsilon)\,\text{pr}_{Y \mid A,X}(y \mid a, x;\varepsilon) : \varepsilon \in (-\delta,\delta)\}$ through the truth at $\varepsilon = 0$.
				Parallel to the proof of Proposition \ref{prop: eif}, we consider perturbations for each component.
				
				\medskip
				\textit{Perturbation of $\text{pr}_X$ only.}\quad
				Under this submodel, $e(x)$ and $p_{a,k}(x)$ are fixed, so $m^-_{\psi,\Gamma}(x;\rho)$ does not vary with $\varepsilon$. Therefore
				\[
				\frac{\partial}{\partial\varepsilon}\psi^-_\Gamma\left\{\text{pr}\left(O;\varepsilon\right)\right\}\bigg|_{\varepsilon=0}
				=
				E\left\{m^-_{\psi,\Gamma}(X;\rho)\,S_X(X)\right\}
				=
				E\left[\left\{m^-_{\psi,\Gamma}(X;\rho) - \psi^-_\Gamma(\rho)\right\}\,S_X\left(X\right)\right],
				\]
				using $E\{S_X(X)\} = 0$. On the other hand, $E\{IF^-_{\psi,\Gamma}(O)\,S_X(X)\} = E[\{m^-_{\psi,\Gamma}(X;\rho) - \psi^-_\Gamma(\rho)\}\,S_X(X)]$ because $E\{H^-_\psi(X)(A - e(X))\,S_X(X)\} = E[H^-_\psi(X)\,S_X(X)\,E\{A - e(X) \mid X\}] = 0$, and similarly the outcome augmentation terms have mean zero when multiplied by $S_X(X)$, by iterated expectation.
				
				\medskip
				\textit{Perturbation of $\text{pr}_{A \mid X}$ only.}\quad
				This is the key difference from Proposition~2: $m^-_{\psi,\Gamma}(x;\rho)$ depends on $e(x)$ through $F^\pm_{a,\Gamma}(k \mid x) = G^\pm_{a,\Gamma}\{e(x), p_{a,k}(x)\}$, and $e(x)$ varies under perturbation of $\text{pr}_{A \mid X}$. Denote $e_\varepsilon(x) = \text{pr}(A=1 \mid X=x;\varepsilon)$ and let $\dot{e}(x) = \partial e_\varepsilon(x)/\partial\varepsilon|_{\varepsilon=0}$. Since only $\text{pr}_{A \mid X}$ is perturbed while $p_{a,k}(x) = \text{pr}(Y \le k \mid A=a, X=x)$ is fixed, the chain rule gives
				\begin{align}
					\frac{\partial}{\partial\varepsilon}m^-_{\psi,\Gamma,\varepsilon}(x)\bigg|_{\varepsilon=0}
					&=
					\sum_{k=0}^{L-2}
					\Delta^{\psi,-}_{1k}(x)\,
					\frac{\partial G^+_{1,\Gamma}}{\partial e}\{e(x), p_{1,k}(x)\}\,\dot{e}(x)
					+
					\sum_{k=0}^{L-2}
					\Delta^{\psi,-}_{0k}(x)\,
					\frac{\partial G^-_{0,\Gamma}}{\partial e}\{e(x), p_{0,k}(x)\}\,\dot{e}(x)
					\notag\\
					&=
					H^-_\psi(x)\,\dot{e}(x).
					\label{eq:chain_e}
				\end{align}
				To identify $\dot{e}(x)$, note that $e_\varepsilon(x) = E_\varepsilon(A \mid X=x) = \int a\,\text{pr}_{A \mid X}(a \mid x;\varepsilon)\,da$. Differentiating with respect to $\varepsilon$ at $\varepsilon = 0$:
				\[
				\dot{e}(x)
				=
				\int a\,\text{pr}_{A \mid X}(a \mid x)\,S_{A \mid X}(a,x)\,da
				=
				E\{A\,S_{A \mid X}(A,X) \mid X = x\}.
				\]
				Since $E\{S_{A \mid X}(A,X) \mid X\} = 0$, we have $E\{e(X)S_{A \mid X}(A,X) \mid X\} = 0$, and therefore
				\[
				\dot{e}(x) = E\{(A - e(X))\,S_{A \mid X}(A,X) \mid X = x\}.
				\]
				Substituting into \eqref{eq:chain_e} and taking expectations over $X$:
				\begin{align*}
					\frac{\partial}{\partial\varepsilon}\psi^-_\Gamma\{\text{pr}(O;\varepsilon)\}\bigg|_{\varepsilon=0}
					&=
					E\left[H^-_\psi(X)\,E\{(A - e(X))\,S_{A \mid X}(A,X) \mid X\}\right]\\
					&=
					E\{H^-_\psi(X)\,(A - e(X))\,S_{A \mid X}(A,X)\}.
				\end{align*}
				On the other hand, we claim $E\{IF^-_{\psi,\Gamma}(O)\,S_{A \mid X}(A,X)\} = E\{H^-_\psi(X)(A - e(X))\,S_{A \mid X}(A,X)\}$. This follows because the remaining terms in $IF^-_{\psi,\Gamma}$ contribute zero: for the plug-in term, $E[\{m^-_{\psi,\Gamma}(X;\rho) - \psi^-_{\Gamma}(\rho)\}\,S_{A \mid X}(A,X)] = E[\{m^-_{\psi,\Gamma}(X;\rho) - \psi^-_\Gamma(\rho)\}\,E\{S_{A \mid X}(A,X) \mid X\}] = 0$; and for the outcome augmentation, by iterated conditioning,
				\begin{align*}
					&E\left[\frac{\mathbbm{1}(A=a)}{e(X)^a\{1-e(X)\}^{1-a}} W^-_{ak}(X)\{\mathbbm{1}(Y\leq k) - p_{a,k}(X)\}\,S_{A \mid X}(A,X)\right]\\
					&\quad=
					E\left[W^-_{ak}(X)\,S_{A \mid X}(a,X)\,E\{\mathbbm{1}(Y\leq k) - p_{a,k}(X) \mid A=a, X\}\right]
					= 0.
				\end{align*}
				Therefore the pathwise derivative along $\mathcal{T}_{A \mid X}$ equals $E\{IF^-_{\psi,\Gamma}(O)\,S_{A \mid X}(A,X)\}$.
				
				\medskip
				\textit{Perturbation of $\text{pr}_{Y \mid A,X}$ only.}\quad
				Under this submodel, $\text{pr}_X$ and $e(x) = \text{pr}(A = 1 \mid X=x)$ are fixed, but $p_{a,k}(x)$ varies. Denote $p_{a,k,\varepsilon}(x) = E_\varepsilon\{\mathbbm{1}(Y\leq k) \mid A=a, X=x\}$ and $\dot{p}_{a,k}(x) = \partial p_{a,k,\varepsilon}(x)/\partial\varepsilon|_{\varepsilon=0}$. By the chain rule through the $G$-transforms:
				\begin{align*}
					\frac{\partial}{\partial\varepsilon}m^-_{\psi,\Gamma,\varepsilon}(x)\bigg|_{\varepsilon=0}
					&=
					\sum_{k=0}^{L-2}
					\Delta^{\psi,-}_{1k}(x)\,
					\frac{\partial G^+_{1,\Gamma}}{\partial p}\{e(x), p_{1,k}(x)\}\,\dot{p}_{1,k}(x)\\
					&\quad+
					\sum_{k=0}^{L-2}
					\Delta^{\psi,-}_{0k}(x)\,
					\frac{\partial G^-_{0,\Gamma}}{\partial p}\{e(x), p_{0,k}(x)\}\,\dot{p}_{0,k}(x)\\
					&=
					\sum_{a=0}^{1}\sum_{k=0}^{L-2}
					W^-_{ak}(x)\,\dot{p}_{a,k}(x).
				\end{align*}
				To identify $\dot{p}_{a,k}(x)$, differentiate $p_{a,k,\varepsilon}(x) = E_\varepsilon\{\mathbbm{1}(Y\leq k) \mid A=a, X=x\}$ to obtain
				\[
				\dot{p}_{a,k}(x) = E\{(\mathbbm{1}(Y\leq k) - p_{a,k}(X))\,S_{Y \mid A,X}(Y,A,X) \mid A=a, X=x\}.
				\]
				Therefore
				\begin{align*}
					\frac{\partial}{\partial\varepsilon}\psi^-_\Gamma\bigg|_{\varepsilon=0}
					&=
					E\left[\sum_{a=0}^{1}\sum_{k=0}^{L-2}
					W^-_{ak}(X)\,
					E\{(\mathbbm{1}(Y\leq k) - p_{a,k}(X))\,S_{Y \mid A,X} \mid A=a, X\}
					\right].
				\end{align*}
				Using the identity $E\{g(Y,X) \mid A=a, X\} = E\left[\mathbbm{1}(A=a)/\{e(X)^a\{1-e(X)\}^{1-a}\}g(Y,X) \;\middle|\; X\right]$ and the law of iterated expectations:
				\begin{align*}
					\frac{\partial}{\partial\varepsilon}\psi^-_\Gamma\bigg|_{\varepsilon=0}
					&=
					E\left[\sum_{a=0}^{1}
					\frac{\mathbbm{1}(A=a)}{e(X)^a\{1-e(X)\}^{1-a}}
					\sum_{k=0}^{L-2}
					W^-_{ak}(X)\{\mathbbm{1}(Y\leq k) - p_{a,k}(X)\}\,S_{Y \mid A,X}
					\right].
				\end{align*}
				We claim this equals $E\{IF^-_{\psi,\Gamma}(O)\,S_{Y \mid A,X}(Y,A,X)\}$. The remaining terms in $IF^-_{\psi,\Gamma}$ contribute zero: $E[\{m^-_{\psi,\Gamma}(X;\rho) - \psi^-_\Gamma(\rho)\}\,S_{Y \mid A,X}] = E[\{m^-_{\psi,\Gamma}(X;\rho) - \psi^-_\Gamma(\rho)\}\,E\{S_{Y \mid A,X} \mid A,X\}] = 0$ since $E\{S_{Y \mid A,X} \mid A,X\} = 0$; and $E\{H^-_\psi(X)(A - e(X))\,S_{Y \mid A,X}\} = E[H^-_\psi(X)\,E\{(A - e(X))\,S_{Y \mid A,X} \mid X\}]$, which equals $E[H^-_\psi(X)\,E\{(A-e(X))\,E(S_{Y \mid A,X} \mid A,X) \mid X\}] = 0$.
				
				\medskip
				\noindent\textit{Step 3: Combining via the orthogonal decomposition.}\quad
				The pathwise derivative map $S \mapsto \partial\psi^-_\Gamma\{\text{pr}(O;\varepsilon)\}/\partial\varepsilon|_{\varepsilon=0}$ is linear in the score direction, and the tangent space decomposes as $\mathcal{T} = \mathcal{T}_X \oplus \mathcal{T}_{A \mid X} \oplus \mathcal{T}_{Y \mid A,X}$. Combining the results from the three submodel calculations gives, for any $S \in \mathcal{T}$,
				\[
				\frac{\partial}{\partial\varepsilon}\psi^-_\Gamma\{\text{pr}(O;\varepsilon)\}\bigg|_{\varepsilon=0}
				=
				E\{IF^-_{\psi,\Gamma}(O)\,S(O)\}.
				\]
				Therefore $\psi^-_\Gamma$ is pathwise differentiable with influence function $IF^-_{\psi,\Gamma}$. Since $IF^-_{\psi,\Gamma} \in \mathcal{T}$ by Step~1, it is the efficient influence function, and the semiparametric efficiency bound is $E[\{IF^-_{\psi,\Gamma}(O)\}^2]$. The proof for $IF^+_{\psi,\Gamma}$ is identical, with all $+/-$ superscripts interchanged throughout.
			\end{proof}
			
			\subsection*{Proof of Theorem~\ref{thm:sa_main}}
			\begin{proof}
				We prove each part for the lower endpoint; the upper endpoint is identical.
				
				\medskip
				\textit{Part~(i): Rate-doubly-robust expansion.}\quad
				Define the score functional
				\begin{align*}
					\Psi^-(O;\eta)
					&=
					m^-_\psi(X;\eta)
					+
					H^-_\psi(X;\eta)\{A - e(X)\}\\
					&\quad+
					\sum_{a=0}^{1}
					\frac{\mathbbm{1}(A=a)}{e(X)^a\{1-e(X)\}^{1-a}}
					\sum_{k=0}^{L-2}
					W^-_{ak}(X;\eta)\{\mathbbm{1}(Y\leq k) - p_{a,k}(X)\},
				\end{align*}
				so that $\hat\psi^-_{\Gamma} = \hat{E}\{\Psi^-(O;\hat\eta)\}$ and $IF^-_{\psi,\Gamma}(O) = \Psi^-(O;\eta) - \psi^-_\Gamma$. Adding and subtracting $(\hat{E} - E)\Psi^-(O;\eta)$ yields
				\begin{equation}\label{eq:three_term}
					\hat\psi^-_{\Gamma} - \psi^-_\Gamma
					=
					\hat{E}\{IF^-_{\psi,\Gamma}(O)\}
					+
					\underbrace{(\hat{E} - E)\{\Psi^-(O;\hat\eta) - \Psi^-(O;\eta)\}}_{\text{(I)}}
					+
					\underbrace{E\{\Psi^-(O;\hat\eta) - \Psi^-(O;\eta)\}}_{\text{(II): }R}.
				\end{equation}
				Term~(I) is $o_p(n^{-1/2})$ by the Donsker condition and the $L_2$-consistency of $\hat\eta$, using the empirical process theory.
				
				For term~(II), conditioning on $X$ and using $E(A \mid X) = e(X)$, $E(\mathbbm{1}(Y\leq k) \mid A=a, X) = p_{a,k}(X)$:
				\begin{align}
					R
					&=
					E\left[\hat{m}^-_{\psi,\Gamma}(X;\rho) - m^-_{\psi,\Gamma}(X;\rho)
					+
					\hat{H}^-_\psi(X)\{e(X) - \hat{e}(X)\}\right.
					\notag\\
					&\quad\left.+
					\sum_{a=0}^{1}\sum_{k=0}^{L-2}
					\frac{e(X)^a\{1-e(X)\}^{1-a}}{\hat{e}(X)^a\{1-\hat{e}(X)\}^{1-a}}\,
					\hat{W}^-_{ak}(X)\{p_{a,k}(X) - \hat{p}_{a,k}(X)\}
					\right].
					\label{eq:R_full}
				\end{align}
				We now expand $\hat{m}^-_{\psi,\Gamma}(X;\rho) - m^-_{\psi,\Gamma}(X;\rho)$. Since $m^-_{\psi,\Gamma}(x;\rho)$ is a smooth function of $\eta(x) = (e(x), \{p_{1,k}(x)\}, \{p_{0,k}(x)\})$ through the $G$-transforms, a first-order Taylor expansion around the true $\eta$ gives
				\begin{align}
					\hat{m}^-_{\psi,\Gamma}(x;\rho) - m^-_{\psi,\Gamma}(x;\rho)
					&=
					H^-_\psi(x)\{\hat{e}(x) - e(x)\}
					+
					\sum_{a=0}^{1}\sum_{k=0}^{L-2}
					W^-_{ak}(x)\{\hat{p}_{a,k}(x) - p_{a,k}(x)\}
					+
					r(x),
					\label{eq:taylor_m}
				\end{align}
				where the remainder satisfies
				\[
				|r(x)|
				\le
				C\left(|\hat{e}(x) - e(x)|^2
				+
				\sum_{a=0}^{1}\sum_{k=0}^{L-2}|\hat{p}_{a,k}(x) - p_{a,k}(x)|^2\right)
				\]
				by the uniform boundedness and twice continuously differentiability of $m^-_\psi$ in $\eta$ implied by condition (b).
				
				Substituting \eqref{eq:taylor_m} into \eqref{eq:R_full}, we examine the first-order contributions. For the propensity score terms:
				\[
				H^-_\psi(x)\{\hat{e}(x) - e(x)\} + \hat{H}^-_\psi(x)\{e(x) - \hat{e}(x)\}
				=
				\{H^-_\psi(x) - \hat{H}^-_\psi(x)\}\{\hat{e}(x) - e(x)\},
				\]
				which is a product of two estimation errors, hence second order. Further using the uniformly boundedness of $\Delta_{ak}^{\psi,-}$ and continuity of $\partial G^{\pm}_{a,\Gamma}/\partial e$, we have
				\[\left|H^-_\psi(x) - \hat{H}^-_\psi(x)\right| \leq C\left(|\hat e(x) - e(x)| + \sum_{a=0}^{1}\sum_{k=0}^{L-2}|\hat{p}_{a,k}(x) - p_{a,k}(x)|^2 + |\hat e(x) - e(x)|\sum_{a=0}^{1}\sum_{k=0}^{L-2}|\hat{p}_{a,k}(x) - p_{a,k}(x)|^2 \right),\]
				for some large enough $C$.
				For the outcome regression terms, consider a representative term for treatment arm $a$ and threshold $k$:
				\[
				W^-_{ak}(x)\{\hat{p}_{a,k}(x) - p_{a,k}(x)\}
				+
				\frac{e(x)^a\{1-e(x)\}^{1-a}}{\hat{e}(x)^a\{1-\hat{e}(x)\}^{1-a}}\,\hat{W}^-_{ak}(x)\{p_{a,k}(x) - \hat{p}_{a,k}(x)\}.
				\]
				Bounding this term is parallel to the proof of Theorem \ref{thm:rate}. Write the propensity ratio as $1 + \delta(x)$ where $|\delta(x)| \le c^{-1}\,|\hat{e}(x) - e(x)|$ by the positivity condition and the mean value theorem. The above expression equals
				\[
				\{W^-_{ak}(x) - \hat{W}^-_{ak}(x)\}\{\hat{p}_{a,k}(x) - p_{a,k}(x)\}
				-
				\delta(x)\,\hat{W}^-_{ak}(x)\{\hat{p}_{a,k}(x) - p_{a,k}(x)\},
				\]
				both of which are products of estimation errors, hence second order. Taking expectations and applying the Cauchy--Schwarz inequality, and noting that $|W^-_{ak}(x) - \hat{W}^-_{ak}(x)| \le C(|\hat{e} - e| + \sum_{b,j}|\hat{p}_{b,j} - p_{b,j}|)$ for some large enough $C$ by the smoothness of the copula and the $G$-transforms.
				Together, we obtain
				\[
				|R|
				=O\left(
				\|\hat{p}_1 - p_1\|_2^2
				+
				\|\hat{p}_0 - p_0\|_2^2
				+
				\|\hat{e} - e\|_2^2
				+
				\|\hat{e} - e\|_2\,\{\|\hat{p}_1 - p_1\|_2 + \|\hat{p}_0 - p_0\|_2\}
				\right).
				\]
				Compared with the remainder in Theorem~1, the additional $\|\hat{e} - e\|_2^2$ term arises because $m^-_\psi$ depends on $e$ through the $G$-transforms, whereas $m_\psi$ does not depend on $e$ under unconfoundedness. Combining (I) and (II) in \eqref{eq:three_term} yields the stated expansion.
				
				\medskip
				\textit{Part~(ii): Asymptotic normality and efficiency.}\quad
				Under $\|\hat{e} - e\|_2 = o_p(n^{-1/4})$ and $\max_{a=0,1}\|\hat{p}_a - p_a\|_2 = o_p(n^{-1/4})$, each term in $\mathrm{Rem}$ is $o_p(n^{-1/2})$: the squared terms satisfy $o_p(n^{-1/4})^2 = o_p(n^{-1/2})$ and the cross-product term satisfies $o_p(n^{-1/4}) \cdot o_p(n^{-1/4}) = o_p(n^{-1/2})$. The expansion from part~(i) therefore becomes
				\[
				\hat\psi^-_{\Gamma} - \psi^-_\Gamma
				=
				\hat{E}\{IF^-_{\psi,\Gamma}(O)\}
				+
				o_p(n^{-1/2}).
				\]
				Since $IF^-_{\psi,\Gamma}(O)$ has mean zero by Proposition~\ref{prop:sa_eif} and finite variance $\sigma^2_- = E[\{IF^-_{\psi,\Gamma}(O)\}^2]$, the central limit theorem gives $n^{1/2}(\hat\psi^-_{\Gamma} - \psi^-_\Gamma) \Rightarrow \mathcal{N}(0, \sigma^2_-)$. By Proposition~\ref{prop:sa_eif}, $\sigma^2_-$ is the semiparametric efficiency bound for $\psi^-_\Gamma(\rho)$ in $\mathcal{M}_{\mathrm{np}}$.
				
				\medskip
				\textit{Part~(iii): Exact nesting.}\quad
				When $\Gamma = 1$, $r^-_\Gamma(p) = r^+_\Gamma(p) = p$ for all $p \in [0,1]$. Substituting into the $G$-transforms:
				\[
				G^\pm_{1,\Gamma}(e,p) = ep + (1-e)p = p, \quad G^\pm_{0,\Gamma}(e,p) = (1-e)p + ep = p.
				\]
				Therefore $F^\pm_{a,\Gamma}(k \mid x) = p_{a,k}(x)$ for all $a$ and $k$. Under Assumption~1, $p_{a,k}(x) = F_a(k \mid x)$, so $m^\pm_\psi(x;1,\rho) = m_\psi(x)$.
				
				For the augmentation coefficients, the $G$-transform derivatives at $\Gamma = 1$ are
				\[
				\frac{\partial G^\pm_{1,\Gamma}}{\partial e}(e,p) = p - p = 0,
				\quad
				\frac{\partial G^\pm_{0,\Gamma}}{\partial e}(e,p) = p - p = 0,
				\]
				\[
				\frac{\partial G^\pm_{1,\Gamma}}{\partial p}(e,p) = e + (1-e) \cdot 1 = 1,
				\quad
				\frac{\partial G^\pm_{0,\Gamma}}{\partial p}(e,p) = (1-e) + e \cdot 1 = 1,
				\]
				where $\dot{r}^\pm_\Gamma(p)|_{\Gamma=1} = 1$. Therefore $H^\pm_\psi(x) = 0$ and $W^\pm_{ak}(x) = \Delta^{\psi,\pm}_{ak}(x) \cdot 1 = \Delta^\psi_{ak}(x)$, since $\Delta^{\psi,\pm}_{ak}(x)$ reduces to $\Delta^\psi_{ak}(x)$ in \eqref{eq: deltas} when the endpoint margins equal $p_{a,k}(x) = F_a(k \mid x)$. Substituting into \eqref{eq: onestep-bounds}:
				\begin{align*}
					\hat\psi^\pm_{1}(\rho)
					&=
					\hat{E}\left\{
					\hat{m}_\psi(X)
					+
					\sum_{a=0}^{1}
					\frac{\mathbbm{1}(A=a)}{\hat{e}(X)^a\{1-\hat{e}(X)\}^{1-a}}
					\sum_{k=0}^{L-2}
					\hat\Delta^\psi_{ak}(X)\{\mathbbm{1}(Y \le k) - \hat{F}_a(k \mid X)\}
					\right\}
					=
					\hat\psi(\rho),
				\end{align*}
				which is exactly the one-step estimator in \eqref{eq:onestep}.
			\end{proof}
			
			\section{Additional simulations}\label{sec: supp-addsimu}
			\subsection*{Working-model misspecification and moderate overlap}
			
			We extend the simulation of Section~\ref{sec: simu} by assessing the conditional copula estimators under nuisance-model misspecification and weaker treatment overlap. The conditional copula model is correctly specified throughout, with true Gumbel copula at $\rho=2$ (Kendall's $\tau=0.5$). We generate $X=(X_1,X_2,X_3)^{\T}$ with $X_j\stackrel{\text{iid}}{\sim}\text{Unif}(-1,1)$ and define $\tilde X_j=(X_j+0.5)^2/2$. The baseline propensity score and outcome models are
			\[
			\begin{aligned}
				\text{logit}\{e(X)\}&=0.5+\sigma(-0.2X_1+0.2X_2-0.2X_3),\\
				\text{logit}\{\text{pr}(Y(a)\le k\mid X)\}&=\lambda_k-\eta_a(X),\quad \eta_a(X)=0.6+0.15(X_1+X_2+X_3)+0.4a,
			\end{aligned}
			\]
			with $\lambda_k=\text{logit}\{(k+1)/5\}$, $k=0,\ldots,3$, where $\sigma$ controls the overlap. The misspecified propensity score replaces the linear predictor by $1.2(-\tilde X_1+\tilde X_2-\tilde X_3)$, and the misspecified outcome model is multinomial logistic
			\[
			\text{pr}\{Y(a)=j\mid X\}=\exp\{s_j(a,X)\}\Big/\sum_{\ell=0}^{4}\exp\{s_\ell(a,X)\},\quad j=0,\ldots,4,
			\]
			with $s_j(a,X)=b_{j0}+b_{j1}a+2.5(b_{j2}\tilde X_1+b_{j3}\tilde X_2+b_{j4}\tilde X_3)$ and coefficients $b_0=(0,0,0,0,0)$, $b_1=(0.9,1.2,1.0,-0.9,0.8)$, $b_2=(1.3,0.2,-1.2,1.1,-0.7)$, $b_3=(0.5,-0.8,0.8,-1.0,1.0)$, $b_4=(-0.7,-1.1,0.4,1.0,-1.2)$. Under the parametric logistic and proportional-odds working models linear in $X$, the misspecified models depart smoothly but nontrivially from the working specification. Three scenarios are considered:
			\begin{itemize}
				\item[] \textit{Scenario FT}: misspecified propensity score, correctly specified outcome model, $\sigma=1$;
				\item[] \textit{Scenario FF}: both nuisance models misspecified, $\sigma=1$;
				\item[] \textit{Scenario TT} (moderate overlap): both nuisance models correctly specified, $\sigma=3$ to weaken overlap.
			\end{itemize}
			We consider the same four estimators as in Section~\ref{sec: simu}: $\hat\psi_{\text{Par}}$, $\hat\psi_{\text{ML}}$, $\hat\psi_{\text{G}}$, and $\hat\psi_{\text{Gb3}}$. Sample sizes are $n\in\{200,1000\}$ with $500$ replications.
			
			\begin{table}[htbp]
				\centering
				\caption{Simulation results under nuisance-model misspecification and moderate overlap. Bias, SD, and RMSE are multiplied by $10^3$; Cov (\%) is empirical coverage of 95\% confidence intervals; SBC (\%) is the frequency that the 95\% confidence interval lies within the sharp bounds.}
				\label{tab:task3_misspec_overlap}
				\setlength{\tabcolsep}{3.5pt}
				\begin{tabular}{lrrrrrrrrrr}
					\hline
					& \multicolumn{5}{c}{$n=200$} & \multicolumn{5}{c}{$n=1000$} \\
					\cline{2-6} \cline{7-11}
					Method & Bias & SD & RMSE & Cov (\%) & SBC (\%) & Bias & SD & RMSE & Cov (\%) & SBC (\%) \\
					\hline
					\multicolumn{11}{l}{\textit{Scenario FT}} \\
					Par & 8.4 & 66.1 & 66.6 & 95.0 & 99.4 & $-$0.2 & 30.0 & 30.0 & 96.0 & 100.0 \\
					ML & 0.1 & 69.5 & 69.5 & 88.6 & 99.4 & $-$4.9 & 31.8 & 32.2 & 90.6 & 100.0 \\
					PG & 9.4 & 66.2 & 66.8 & 95.2 & 99.4 & 1.4 & 29.9 & 29.9 & 95.6 & 100.0 \\
					PGb3 & $-$13.4 & 82.2 & 83.2 & 92.8 & 89.0 & $-$26.7 & 39.0 & 47.2 & 87.4 & 100.0 \\
					\hline
					\multicolumn{11}{l}{\textit{Scenario FF}} \\
					Par & 19.6 & 36.2 & 41.1 & 96.4 & 25.8 & 16.9 & 15.3 & 22.8 & 88.4 & 99.8 \\
					ML & 2.5 & 33.3 & 33.4 & 95.6 & 11.0 & $-$1.6 & 13.9 & 14.0 & 96.4 & 92.8 \\
					PG & 23.9 & 39.1 & 45.8 & 94.2 & 23.8 & 20.8 & 17.1 & 26.9 & 81.6 & 99.6 \\
					PGb3 & $-$10.9 & 34.6 & 36.2 & 88.8 & 2.2 & $-$15.2 & 14.3 & 20.9 & 80.8 & 40.2 \\
					\hline
					\multicolumn{11}{l}{\textit{Scenario TT (moderate overlap)}} \\
					Par & 9.4 & 70.3 & 70.9 & 92.4 & 98.2 & $-$1.1 & 32.2 & 32.1 & 93.8 & 100.0 \\
					ML & 1.3 & 73.5 & 73.4 & 89.0 & 98.2 & $-$5.9 & 33.8 & 34.2 & 92.8 & 100.0 \\
					PG & 10.0 & 70.3 & 71.0 & 92.2 & 98.0 & 0.3 & 32.1 & 32.1 & 93.6 & 100.0 \\
					PGb3 & $-$12.3 & 86.3 & 87.1 & 92.6 & 85.2 & $-$27.7 & 41.6 & 49.9 & 88.2 & 100.0 \\
					\hline
				\end{tabular}
			\end{table}
			
			Table~\ref{tab:task3_misspec_overlap} summarizes the results. In Scenario FT, $\hat\psi_{\text{Par}}$ is nearly unbiased at $n=1000$ with coverage close to nominal, consistent with the rate-expansion results in Theorem \ref{thm:rate}.
			In Scenario FF, the parametric one-step estimators exhibit moderate bias at $n=1000$ with under-cover confidence intervals, while the cross-fitted estimator $\hat\psi_{\text{ML}}$ remains nearly unbiased with near-nominal coverage at $n=1000$ across all scenarios, though it might have larger finite-sample bias. 
			For copula sensitivity, $\hat\psi_{\text{G}}$ with matched $\tau$ tracks $\hat\psi_{\text{Par}}$ closely in FT and TT, and the two move together in FF, illustrating certain robustness to copula family misspecification. In contrast, $\hat\psi_{\text{Gb3}}$ exhibits persistent bias and degraded coverage throughout, reinforcing the importance of calibrating the dependence level via $\tau$. In Scenario TT with moderate overlap, all methods show slightly larger bias and lower coverage.
			
			\subsection*{Covariate-dependent conditional dependence}
			
			We assess the robustness of the constant-$\rho$ conditional copula analysis when the true
			conditional dependence varies with covariates.  Relative to the simulation design in Section \ref{sec: simu}, the
			propensity score model, proportional-odds outcome model, thresholds, and support of~$X$
			are all kept fixed; only the conditional Gumbel dependence parameter is modified.
			Specifically, the latent errors are
			generated from a conditional Gumbel copula with covariate-dependent Kendall 's~$\tau$:
			\[
			\tau(X)=\mathrm{expit}(sX_1),
			\quad
			\rho(X)=\frac{1}{1-\tau(X)}.
			\]
			Three scenarios are considered: Scenario~1 ($s=0$) with  correctly specified constant
			dependence with $\tau(X)\equiv 0.5$; Scenario~2 ($s=0.8$) with moderate heterogeneity; and Scenario~3 ($s=1.6$) with strong heterogeneity.  
			By symmetry of~$X_1$,
			all three designs satisfy $E\{\tau(X)\}\approx 0.5$.
			
			Let $\psi_0(s)=E[m_\psi\{X;\tau(X)\}]$ denote the true target under the heterogeneous
			conditional copula model.  
			We compare two estimators, both using the correctly specified
			parametric nuisance models linear in $(X_1,X_2,X_3)$: \emph{Oracle heterogeneous}, which
			uses the true $\tau(X)$ inside the Gumbel copula map, and \emph{Constant~$\tau=0.5$},
			which uses the constant conditional Gumbel analysis with Kendall's~$\tau$ fixed at~$0.5$.
			
			\begin{table}[t]
				\centering
				\caption{Simulation results under covariate-dependent conditional dependence.
					Bias, SD, and RMSE are multiplied by~$10^3$; Cov~(\%) is the empirical coverage of
					95\% confidence intervals.}
				\label{tab:conddep}
				\setlength{\tabcolsep}{3.5pt}
				\begin{tabular}{lrrrrrrrr}
					\hline
					& \multicolumn{4}{c}{$n=200$} & \multicolumn{4}{c}{$n=1000$} \\
					\cline{2-5} \cline{6-9}
					Method & Bias & SD & RMSE & Cov (\%) & Bias & SD & RMSE & Cov (\%) \\
					\hline
					\multicolumn{9}{l}{\textit{Scenario 1: $s=0$ (constant dependence)}} \\
					Oracle        &  7.8 & 66.9 & 67.3 & 94.6 & $-$1.4 & 29.8 & 29.8 & 96.2 \\
					Constant      &  7.8 & 66.9 & 67.3 & 94.6 & $-$1.4 & 29.8 & 29.8 & 96.2 \\
					\hline
					\multicolumn{9}{l}{\textit{Scenario 2: $s=0.8$ (moderate heterogeneity)}} \\
					Oracle        &  4.0 & 67.0 & 67.1 & 93.2 &    1.2 & 31.0 & 31.0 & 95.2 \\
					Constant      &  3.7 & 66.2 & 66.2 & 94.2 &    1.8 & 30.2 & 30.2 & 95.6 \\
					\hline
					\multicolumn{9}{l}{\textit{Scenario 3: $s=1.6$ (strong heterogeneity)}} \\
					Oracle        & 10.0 & 72.4 & 73.0 & 93.4 & $-$3.3 & 35.0 & 35.1 & 94.0 \\
					Constant      &  9.0 & 66.9 & 67.4 & 94.4 & $-$2.0 & 30.3 & 30.3 & 94.4 \\
					\hline
				\end{tabular}
			\end{table}
			
			The true targets are $\psi_0(0)=0.370$, $\psi_0(0.8)=0.369$, and $\psi_0(1.6)=0.369$,
			nearly identical across scenarios.  The constant-$\tau$ analysis is remarkably stable: the structural bias of the constant-$\tau=0.5$ target relative to the true
			heterogeneous target is $0.0\times 10^{-3}$, $1.1\times 10^{-3}$, and
			$0.5\times 10^{-3}$ for $s=0$, $0.8$, and $1.6$, respectively.
			Table~\ref{tab:conddep} shows that this structural robustness carries over to
			finite-sample estimation.  In Scenario~1, the oracle and constant-$\tau$ rows coincide by
			construction.  In Scenarios~2 and~3, the two methods remain numerically close: at
			$n=1000$, their biases differ by at most $1.3\times 10^{-3}$, and both achieve coverage
			near the nominal 95\% level.  The stronger heterogeneity increases variability modestly
			for the oracle estimator but does not produce a substantial gap between the two analyses.
			
			\subsection*{Unconditional copula model}
			We assess the unconditional copula estimator in~\eqref{eq: uncdrest} under a
			data-generating process in which Assumption~\ref{asmp: copulauncond} holds exactly at
			the population level.
			
			Let $p_{ak}(x) = \mathrm{pr}\{Y(a) = k \mid X = x\}$ and
			$F_a(k) = \mathrm{pr}\{Y(a)\leq k\} = E\{F_a(k \mid X)\}$ denote the unconditional
			margins.  We impose the target unconditional Gumbel copula on $(F_1, F_0)$ to obtain the
			joint potential-outcome probability matrix $\Pi = \{\pi_{kj}\}$, define the correction
			$\Delta = \Pi - E\{p_1(X)\,p_0(X)^\top\}$, and set the conditional joint law to
			$\pi_{kj}(x) = p_{1k}(x)\,p_{0j}(x) + \Delta_{kj}$.  By construction,
			$E\{\pi_{kj}(X)\} = \pi_{kj}$, so the unconditional copula model holds exactly while the
			conditional dependence structure varies with~$X$.
			
			We generate $X = (X_1,X_2,X_3)^\T$ with
			$X_j \stackrel{\mathrm{iid}}{\sim} \mathrm{Unif}(-1,1)$, fix $L = 5$, and use a true
			unconditional Gumbel copula with $\rho = 2$ (Kendall's $\tau = 0.5$). The baseline
			propensity score and outcome models are
			\[
			\mathrm{logit}\{e(X)\} = 0.5 - 0.2X_1 + 0.2X_2 - 0.2X_3,
			\]
			\[
			\mathrm{logit}\{\mathrm{pr}(Y(a) \le k \mid X)\} = \lambda_k - \eta_a(X), \quad
			\eta_a(X) = 0.2 + 0.13(X_1 + X_2 + X_3) + 0.4a,
			\]
			where $\lambda_k = \mathrm{logit}\{(k+1)/5\}$ for $k = 0,\ldots,3$.  To induce
			misspecification, define $Q_j = X_j^2/2$ and set the misspecified truth models to
			\[
			\mathrm{logit}\{e(X)\} = 0.8(-Q_1 + Q_2 - Q_3), \quad
			\eta_a(X) = 0.30(-Q_1 + Q_2 - Q_3) + 0.4a.
			\]
			The parametric working models are logistic regression for the propensity score and
			proportional-odds regression, both linear in $(X_1,X_2,X_3)$.  Four scenarios are formed
			by crossing correctly specified~(T) and misspecified~(F) truth models for the propensity
			score and outcome: TT, FT, TF, and~FF.

			We consider four estimators, each the unconditional analogue of those in
			Section~\ref{sec: simu}: (i)~$\hat\psi_{\mathrm{Par}}$, the doubly robust estimator
			with correctly specified Gumbel copula and parametric nuisance models;
			(ii)~$\hat\psi_{\mathrm{ML}}$, the five-fold cross-fitted doubly robust estimator with
			random forest nuisance estimates; (iii)~$\hat\psi_{\mathrm{G}}$, the doubly robust
			estimator with a misspecified Gaussian copula calibrated to match Kendall's~$\tau$;
			and (iv)~$\hat\psi_{\mathrm{Gb3}}$, the doubly robust estimator with a misspecified
			Gumbel parameter $\rho = 3$.  Sample sizes are $n \in \{200, 1000\}$ with $500$
			replications.
			
			\begin{table}[tbp]
				\centering
				\caption{Simulation results for the unconditional copula model. Bias, SD, and RMSE are multiplied by $10^3$; Cov (\%) is empirical coverage of 95\% confidence intervals; SBC (\%) is the frequency that the 95\% confidence interval lies within the sharp bounds.}
				\label{tab:task5_unc_exact}
				\setlength{\tabcolsep}{3.5pt}
				\begin{tabular}{lrrrrrrrrrr}
					\hline
					& \multicolumn{5}{c}{$n=200$} & \multicolumn{5}{c}{$n=1000$} \\
					\cline{2-6} \cline{7-11}
					Method & Bias & SD & RMSE & Cov (\%) & SBC (\%) & Bias & SD & RMSE & Cov (\%) & SBC (\%) \\
					\hline
					\multicolumn{11}{l}{\textit{Scenario TT}} \\
					Par & 0.3 & 69.4 & 69.3 & 93.6 & 100.0 & $-$0.6 & 31.7 & 31.7 & 95.4 & 100.0 \\
					ML & 5.6 & 123.5 & 123.5 & 94.8 & 50.2 & 2.2 & 42.7 & 42.8 & 96.4 & 100.0 \\
					PG & 2.5 & 67.8 & 67.8 & 93.0 & 100.0 & 2.1 & 30.8 & 30.8 & 95.6 & 100.0 \\
					PGb3 & $-$25.5 & 90.0 & 93.5 & 90.6 & 85.4 & $-$27.7 & 42.3 & 50.6 & 88.4 & 100.0 \\
					\hline
					\multicolumn{11}{l}{\textit{Scenario FT}} \\
					Par & 0.2 & 66.0 & 65.9 & 93.2 & 100.0 & 0.3 & 28.9 & 28.9 & 95.8 & 100.0 \\
					ML & 5.3 & 104.7 & 104.8 & 94.2 & 74.2 & 4.8 & 36.9 & 37.1 & 96.4 & 100.0 \\
					PG & 2.7 & 64.5 & 64.5 & 93.4 & 100.0 & 3.1 & 28.1 & 28.2 & 95.6 & 100.0 \\
					PGb3 & $-$25.5 & 86.1 & 89.7 & 89.8 & 89.2 & $-$26.5 & 38.7 & 46.9 & 89.6 & 100.0 \\
					\hline
					\multicolumn{11}{l}{\textit{Scenario TF}} \\
					Par & 4.0 & 73.2 & 73.2 & 92.6 & 99.8 & 1.4 & 33.0 & 33.0 & 95.2 & 100.0 \\
					ML & 16.2 & 118.9 & 119.9 & 96.0 & 61.8 & 3.5 & 44.8 & 44.9 & 95.8 & 100.0 \\
					PG & 6.2 & 70.4 & 70.6 & 92.8 & 100.0 & 4.1 & 31.5 & 31.7 & 94.4 & 100.0 \\
					PGb3 & $-$20.5 & 95.4 & 97.5 & 90.8 & 86.4 & $-$25.1 & 44.4 & 51.0 & 88.8 & 100.0 \\
					\hline
					\multicolumn{11}{l}{\textit{Scenario FF}} \\
					Par & 7.9 & 70.6 & 71.0 & 93.4 & 100.0 & 5.1 & 30.7 & 31.1 & 93.8 & 100.0 \\
					ML & 15.0 & 110.0 & 110.9 & 92.6 & 75.6 & 3.7 & 39.2 & 39.3 & 96.0 & 100.0 \\
					PG & 10.3 & 67.8 & 68.5 & 93.6 & 100.0 & 7.8 & 29.4 & 30.4 & 93.6 & 100.0 \\
					PGb3 & $-$15.8 & 92.9 & 94.2 & 91.2 & 91.2 & $-$20.2 & 41.4 & 46.0 & 91.6 & 100.0 \\
					\hline
				\end{tabular}
			\end{table}
			
			Table~\ref{tab:task5_unc_exact} summarises the results under a fixed copula.  The
			parametric estimator $\hat\psi_{\mathrm{Par}}$ has little bias and achieves nominal
			coverage in scenarios TT, FT, and TF, where at least one nuisance model is correctly
			specified, confirming double robustness.  In scenario~FF, a small but nonvanishing bias
			emerges, consistent with the remainder structure in Theorem~\ref{thm: uncthm}.  The
			cross-fitted estimator $\hat\psi_{\mathrm{ML}}$ is approximately unbiased with
			satisfactory coverage at $n = 1000$ but exhibits greater variability at $n = 200$,
			reflecting the finite-sample cost of nonparametric nuisance estimation.  At $n = 1000$,
			the confidence intervals lie within the sharp bounds across all scenarios.  For copula
			sensitivity, $\hat\psi_{\mathrm{G}}$ with matched $\tau$ incurs only mild additional
			bias, whereas $\hat\psi_{\mathrm{Gb3}}$ produces substantial bias and degraded coverage,
			reinforcing the importance of calibrating the dependence level through~$\tau$.  These
			patterns closely parallel the conditional copula results in Section \ref{sec: simu}.
			
			For the sensitivity curves across the full $(L,\delta)$ grid, the single DGP
			configuration $(L,\delta) = (5,0.4)$ used above cannot be applied uniformly.  
			We therefore adopt a slightly different outcome model,
			$\eta_a(X) = 0.2 + 0.05(X_1+X_2+X_3) + \delta a$, while retaining the same propensity
			score model and Gumbel dependence as in the TT scenario.
			
			\begin{figure}[tbp]
				\centering
				\includegraphics[width=0.85\textwidth]{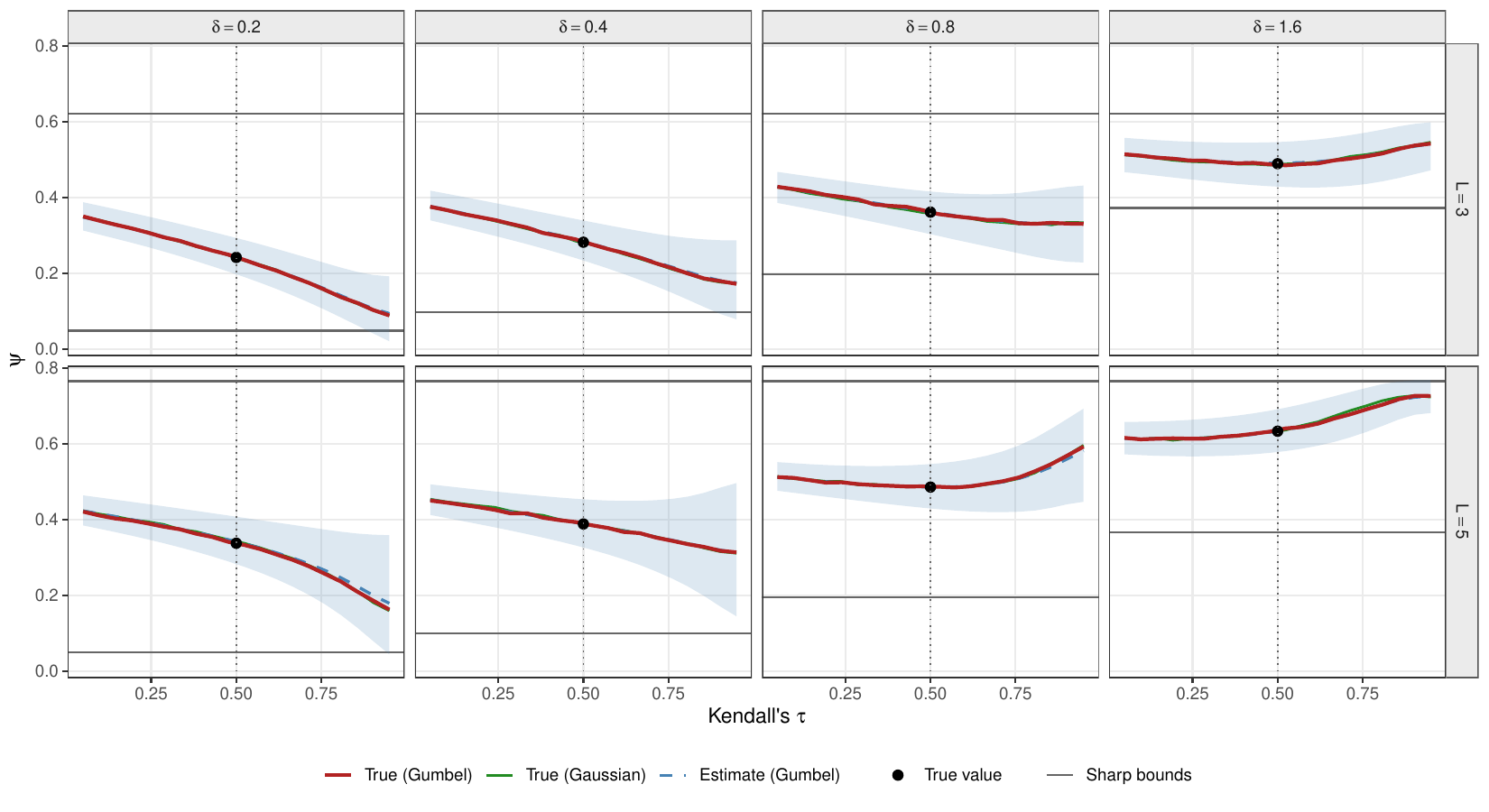}
				\caption{Sensitivity curves $\psi(\tau)$ for the unconditional copula model, with rows indexed by $L \in \{3,5\}$ and columns by $\delta \in \{0.2, 0.4, 0.8, 1.6\}$. Solid red and green curves are population truth under Gumbel and Gaussian copulas; dashed blue curve with shaded band is the doubly robust estimator and Monte Carlo 95\% range over 500 replications at $n = 1000$; solid grey lines are sharp bounds; black dot marks the truth at $\tau = 0.5$.}
				\label{fig:task5_unc_panel}
			\end{figure}
			
			Figure~\ref{fig:task5_unc_panel} displays the resulting sensitivity curves.  The
			estimated unconditional Gumbel curve closely tracks the population truth under both the
			Gumbel and Gaussian copulas across the full grid, and all confidence bands lie within
			the sharp bounds.  These patterns are qualitatively consistent with the conditional
			copula panel in the main text, indicating that both identification strategies yield
			similar inferential conclusions under correct model specification.
			
			\subsection*{Conditional versus unconditional copula sensitivity analysis}
			
			We compare the conditional and unconditional copula analyses to assess when the resulting sensitivity curves are practically indistinguishable and when they diverge. Relative to the main-text simulation design, we retain the same covariate distribution, propensity score model, treatment effect size, and ordinal threshold specification, and vary only the strength of the prognostic outcome effect and the dependence structure across three scenarios:
			\begin{itemize}
				\item[] \textit{Scenario~1}: the baseline conditional Gumbel model with constant Kendall's $\tau=0.5$, same as the Section \ref{sec: simu};
				\item[] \textit{Scenario~2}: a stronger prognostic effect with $\eta_a(X)=0.6+0.45(X_1+X_2+X_3)+0.4a$ and the same constant conditional Gumbel copula;
				\item[] \textit{Scenario~3}: the same outcome model as Scenario~2 but with heterogeneous conditional Gumbel dependence, $\tau(X)=0.5+0.30(|X_1|+|X_2|+|X_3|-1.5)$ and $\rho(X)=\{1-\tau(X)\}^{-1}$.
			\end{itemize}
			In all three scenarios, $(U_1,U_0) \mid X$ are coupled by the specified Gumbel copula through the same latent-variable construction as in Section \ref{sec: simu}. Sample sizes $n = 200,1000$ and number of replications is $500$
			
			For each scenario, we obtained two Gumbel sensitivity curves $\hat\psi(\tau)$ over $\tau \in \{0,0.1,\ldots,0.9\}$: the one-step estimator $\hat\psi^c(\tau)$ based on the conditional copula model and the doubly robust estimator $\hat\psi^{\mathrm{unc}}(\tau)$ based on the unconditional copula model.
			Both estimators use the same correctly specified parametric nuisance models: logistic regression for the propensity score and proportional-odds regression linear in $X$ for the outcome. We use $n \in \{200, 1000\}$ with $500$ replications.
			
			To summarize agreement, we report for each replication the mean absolute curve discrepancy $D_{\mathrm{mean}} = 10^{-1}\sum_{j=1}^{10}|\hat\psi^{c}(\tau_j) - \hat\psi^{\mathrm{unc}}(\tau_j)|$, the maximum discrepancy $D_{\max} = \max_{j}|\hat\psi^{c}(\tau_j) - \hat\psi^{\mathrm{unc}}(\tau_j)|$, and the pointwise discrepancy at $\tau = 0.5$, denoted $D_{0.5}$.
			
			\begin{table}[tbp]
				\centering
				\caption{Agreement between conditional and unconditional Gumbel sensitivity curves. Entries are Monte Carlo mean (standard deviation) of $D_{\mathrm{mean}}$, $D_{\max}$, and $D_{0.5}$, each multiplied by $10^3$, over $500$ replications.}
				\label{tab:cond_uncond_curve}
				\begin{tabular}{llccc}
					\hline
					& & \multicolumn{3}{c}{Curve discrepancy ($\times 10^3$)} \\
					\cline{3-5}
					Scenario & $n$ & $D_{\mathrm{mean}}$ & $D_{\max}$ & $D_{0.5}$ \\
					\hline
					1: Baseline     & $200$  & $10.5$ ($8.1$) & $37.3$ ($30.2$) & $6.5$ ($6.3$) \\
					& $1000$ & $2.1$ ($1.8$)  & $7.6$ ($8.1$)   & $1.2$ ($1.2$) \\[4pt]
					2: Strong prognostic & $200$  & $12.2$ ($8.7$) & $41.7$ ($34.1$) & $6.9$ ($6.5$) \\
					& $1000$ & $3.8$ ($1.8$)  & $8.8$ ($6.0$)   & $2.4$ ($1.7$) \\[4pt]
					3: Heterogeneous $\tau$ & $200$  & $11.5$ ($8.4$) & $37.7$ ($32.7$) & $6.9$ ($6.6$) \\
					& $1000$ & $3.9$ ($1.9$)  & $9.4$ ($7.0$)   & $2.4$ ($1.7$) \\
					\hline
				\end{tabular}
			\end{table}
			
			\begin{figure}[tbp]
				\centering
				\includegraphics[width=0.85\textwidth]{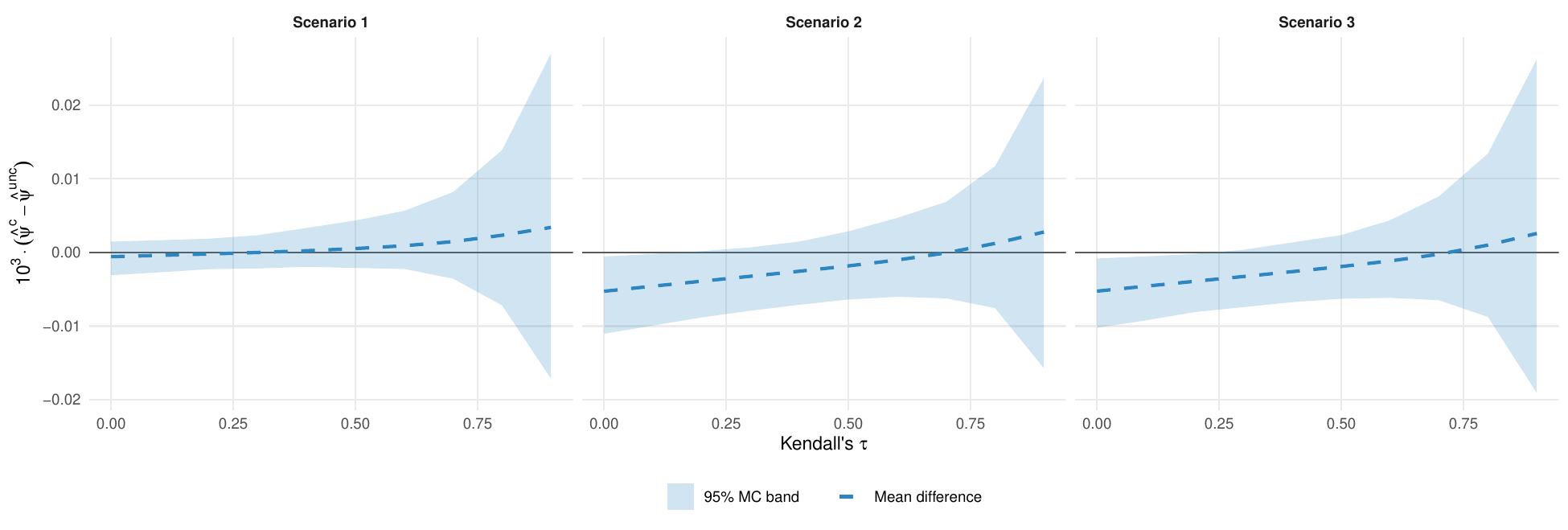}
				\caption{Estimated Gumbel sensitivity curves differences under three scenarios. The dashed blue line is the mean differences, and the blue area is the Monte Carlo 95\% range over 500 replications at $n = 1000$.}
				\label{fig:cond_uncond_curve}
			\end{figure}
			
			Table~\ref{tab:cond_uncond_curve} and Figure~\ref{fig:cond_uncond_curve} present the results. In Scenario~1, where the covariate effect on the outcome margins is weakest, the two curves are nearly indistinguishable: at $n = 1000$, $D_{\mathrm{mean}}$ is $2.1 \times 10^{-3}$ and $D_{0.5}$ is $1.2 \times 10^{-3}$. The discrepancy increases modestly in Scenarios~2 and~3, with $D_{\mathrm{mean}}$ rising to approximately $3.8$--$3.9 \times 10^{-3}$ and $D_{0.5}$ to $2.4 \times 10^{-3}$, driven by the stronger prognostic structure; introducing heterogeneous dependence in Scenario~3 adds little additional divergence beyond the prognostic effect.  
			These results indicate that the unconditional copula analysis closely approximates the conditional one, supporting its use as a robustness check for the estimated sensitivity curves.
			
			\section{Additional results of the application}\label{sec:supp-addapp}
			We provide the unconditional copula sensitivity curves here to serve as an additional robustness check.
			\begin{figure}[tbp]
				\centering
				\includegraphics[width=.9\textwidth]{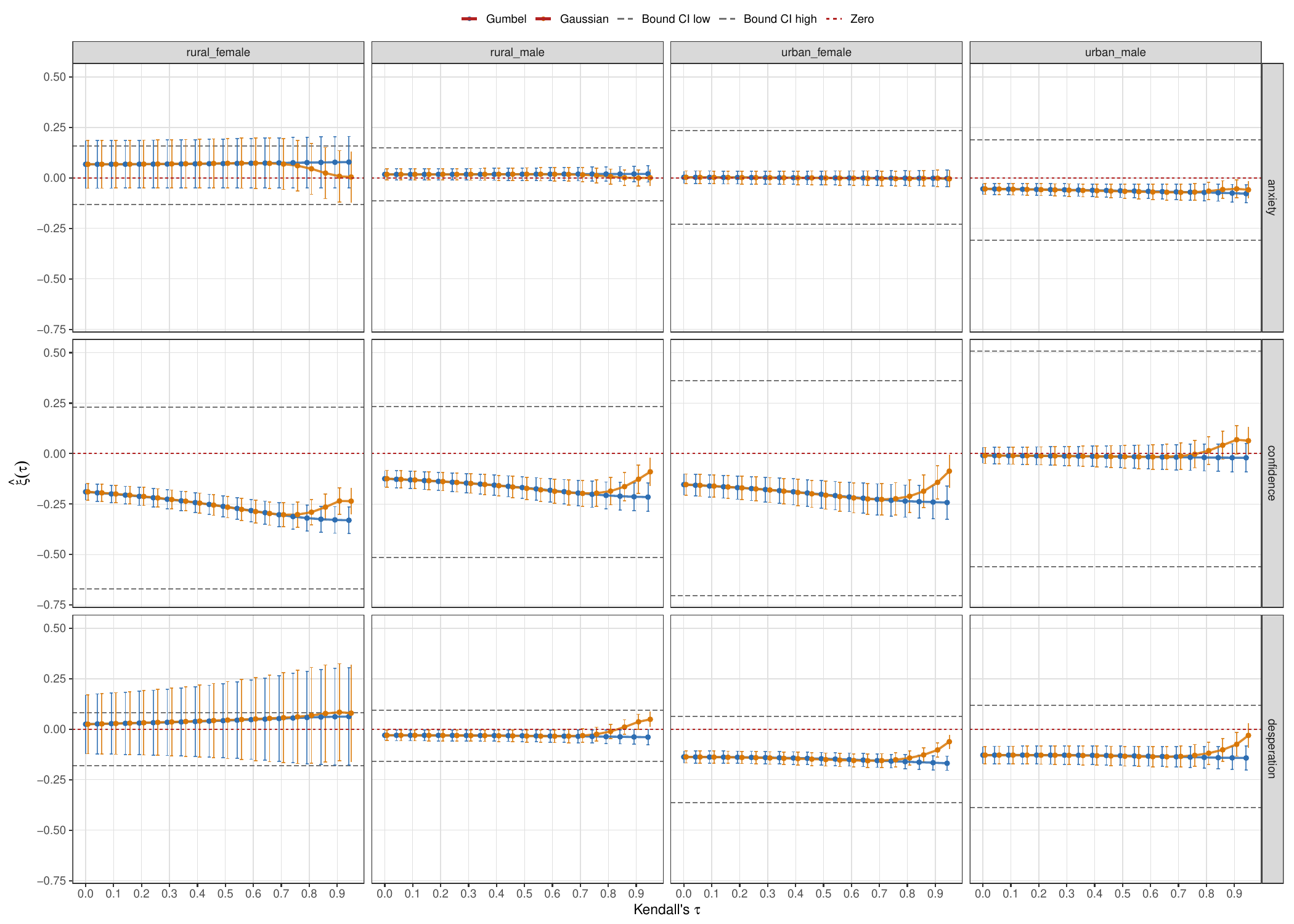}
				\caption{Unconditional copula-based subgroup sensitivity curves $\hat\xi(\tau)$ under Gaussian and Gumbel copulas, with pointwise 95\% confidence intervals.  The horizontal dashed red
					line marks $\xi=0$. Two dashed gray lines mark the bootstrapped confidence interval based on the sharp bounds. A curve lying below zero indicates that only-child status is
					harmful for the corresponding subgroup and outcome.}
				\label{fig:unc-application-xi}
			\end{figure}
			
			Figure~\ref{fig:unc-application-xi} displays the unconditional copula subgroup
			sensitivity curves for~$\xi$ under the Gaussian and Gumbel copula families,
			paralleling the conditional copula analysis in Figure~\ref{fig:application-xi} of the main text.
			The two analyses yield consistent conclusions regarding the direction of causal
			effects across all subgroup--outcome combinations: only-child status is
			associated with significantly lower anxiety among urban males, lower
			confidence among rural females, rural males, and urban females, and greater
			desperation among both urban subgroups.
			The point estimates and confidence
			bands are closely aligned between the conditional and unconditional
			approaches, with no case in which the two models disagree on the sign or
			significance of~$\hat\xi(\tau)$.  
			This demonstrates certain 
			robustness of the Section \ref{sec:application} findings.

			\putbib
		\end{bibunit}
		

\begin{thebibliography}{22}
\providecommand{\natexlab}[1]{#1}
\providecommand{\url}[1]{\texttt{#1}}
\expandafter\ifx\csname urlstyle\endcsname\relax
  \providecommand{\doi}[1]{doi: #1}\else
  \providecommand{\doi}{doi: \begingroup \urlstyle{rm}\Url}\fi

\bibitem[Bartolucci and Grilli(2011)]{bartolucci2011modeling}
Francesco Bartolucci and Leonardo Grilli.
\newblock Modeling partial compliance through copulas in a principal
  stratification framework.
\newblock \emph{Journal of the American Statistical Association}, 106\penalty0
  (494):\penalty0 469--479, 2011.

\bibitem[Chernozhukov et~al.(2018)Chernozhukov, Chetverikov, Demirer, Duflo,
  Hansen, Newey, and Robins]{chernozhukov2018double}
Victor Chernozhukov, Denis Chetverikov, Mert Demirer, Esther Duflo, Christian
  Hansen, Whitney Newey, and James Robins.
\newblock Double/debiased machine learning for treatment and structural
  parameters, 2018.

\bibitem[Chiba(2018)]{chiba2018bayesian}
Yasutaka Chiba.
\newblock Bayesian inference of causal effects for an ordinal outcome in
  randomized trials.
\newblock \emph{Journal of Causal Inference}, 6\penalty0 (2):\penalty0
  20170019, 2018.

\bibitem[Huang et~al.(2017)Huang, Fang, Hanley, and
  Rosenblum]{huang2017inequality}
Emily~J Huang, Ethan~X Fang, Daniel~F Hanley, and Michael Rosenblum.
\newblock Inequality in treatment benefits: Can we determine if a new treatment
  benefits the many or the few?
\newblock \emph{Biostatistics}, 18\penalty0 (2):\penalty0 308--324, 2017.

\bibitem[Kennedy(2024)]{kennedy2024semiparametric}
Edward~H Kennedy.
\newblock Semiparametric doubly robust targeted double machine learning: a
  review.
\newblock \emph{Handbook of statistical methods for precision medicine}, pages
  207--236, 2024.

\bibitem[Li and Tobias(2008)]{li2008bayesian}
Mingliang Li and Justin~L Tobias.
\newblock Bayesian analysis of treatment effects in an ordered potential
  outcomes model.
\newblock In \emph{Modelling and Evaluating Treatment Effects in Econometrics},
  volume~21, pages 57--91. Emerald Group Publishing Limited, 2008.

\bibitem[Lu et~al.(2018)Lu, Ding, and Dasgupta]{lu2018treatment}
Jiannan Lu, Peng Ding, and Tirthankar Dasgupta.
\newblock Treatment effects on ordinal outcomes: Causal estimands and sharp
  bounds.
\newblock \emph{Journal of Educational and Behavioral Statistics}, 43\penalty0
  (5):\penalty0 540--567, 2018.

\bibitem[Lu et~al.(2020)Lu, Zhang, and Ding]{lu2020sharp}
Jiannan Lu, Yunshu Zhang, and Peng Ding.
\newblock Sharp bounds on the relative treatment effect for ordinal outcomes.
\newblock \emph{Biometrics}, 76\penalty0 (2):\penalty0 664--669, 2020.

\bibitem[Lu et~al.(2025)Lu, Jiang, and Ding]{lu2025principal}
Sizhu Lu, Zhichao Jiang, and Peng Ding.
\newblock Principal stratification with continuous post-treatment variables:
  Nonparametric identification and semiparametric estimation.
\newblock \emph{Journal of the Royal Statistical Society Series B: Statistical
  Methodology}, page qkaf049, 2025.

\bibitem[Mao(2018)]{mao2018causal}
Lu~Mao.
\newblock {On causal estimation using $U$-statistics}.
\newblock \emph{Biometrika}, 105\penalty0 (1):\penalty0 215--220, 2018.

\bibitem[Newey(1990)]{newey1990semiparametric}
Whitney~K Newey.
\newblock Semiparametric efficiency bounds.
\newblock \emph{Journal of applied econometrics}, 5\penalty0 (2):\penalty0
  99--135, 1990.

\bibitem[Pocock et~al.(2012)Pocock, Ariti, Collier, and Wang]{pocock2012win}
Stuart~J Pocock, Cono~A Ariti, Timothy~J Collier, and Duolao Wang.
\newblock The win ratio: a new approach to the analysis of composite endpoints
  in clinical trials based on clinical priorities.
\newblock \emph{European heart journal}, 33\penalty0 (2):\penalty0 176--182,
  2012.

\bibitem[Robins et~al.(2008)Robins, Li, Tchetgen, van~der Vaart,
  et~al.]{robins2008higher}
James Robins, Lingling Li, Eric Tchetgen, Aad van~der Vaart, et~al.
\newblock Higher order influence functions and minimax estimation of nonlinear
  functionals.
\newblock In \emph{Probability and statistics: essays in honor of David A.
  Freedman}, volume~2, pages 335--422. Institute of Mathematical Statistics,
  2008.

\bibitem[Rosenbaum(2002)]{rosenbaum2002observational}
Paul~R. Rosenbaum.
\newblock \emph{Observational studies}.
\newblock Springer, 2002.

\bibitem[Sklar(1959)]{sklar1959fonctions}
M~Sklar.
\newblock Fonctions de r{\'e}partition {\`a} n dimensions et leurs marges.
\newblock In \emph{Annales de l'ISUP}, volume~8, pages 229--231, 1959.

\bibitem[Tsiatis(2006)]{tsiatis2006semiparametric}
Anastasios~A Tsiatis.
\newblock \emph{Semiparametric theory and missing data}.
\newblock Springer, 2006.

\bibitem[van~der Vaart and Wellner(1997)]{vaart1997weak}
AW~van~der Vaart and Jon~A Wellner.
\newblock Weak convergence and empirical processes with applications to
  statistics.
\newblock \emph{Journal of the Royal Statistical Society-Series A Statistics in
  Society}, 160\penalty0 (3):\penalty0 596--608, 1997.

\bibitem[Volfovsky et~al.(2015)Volfovsky, Airoldi, and
  Rubin]{volfovsky2015causal}
Alexander Volfovsky, Edoardo~M Airoldi, and Donald~B Rubin.
\newblock Causal inference for ordinal outcomes.
\newblock \emph{arXiv preprint arXiv:1501.01234}, 2015.

\bibitem[Yadlowsky et~al.(2022)Yadlowsky, Namkoong, Basu, Duchi, and
  Tian]{yadlowsky2022bounds}
Steve Yadlowsky, Hongseok Namkoong, Sanjay Basu, John Duchi, and Lu~Tian.
\newblock Bounds on the conditional and average treatment effect with
  unobserved confounding factors.
\newblock \emph{Annals of statistics}, 50\penalty0 (5):\penalty0 2587, 2022.

\bibitem[Zeng et~al.(2020)Zeng, Li, and Ding]{zeng2020being}
Shuxi Zeng, Fan Li, and Peng Ding.
\newblock Is being an only child harmful to psychological health?: evidence
  from an instrumental variable analysis of china's one-child policy.
\newblock \emph{Journal of the Royal Statistical Society Series A: Statistics
  in Society}, 183\penalty0 (4):\penalty0 1615--1635, 2020.

\bibitem[Zhang et~al.(2025)Zhang, Geng, Li, and Ding]{zhang2025identifying}
Chao Zhang, Zhi Geng, Wei Li, and Peng Ding.
\newblock Identifying and bounding the probability of necessity for causes of
  effects with ordinal outcomes.
\newblock \emph{Biometrika}, 112\penalty0 (3):\penalty0 asaf049, 2025.

\bibitem[Zheng et~al.(2025)Zheng, D'Amour, and Franks]{zheng2025copula}
Jiajing Zheng, Alexander D'Amour, and Alexander Franks.
\newblock Copula-based sensitivity analysis for multi-treatment causal
  inference with unobserved confounding.
\newblock \emph{Journal of Machine Learning Research}, 26\penalty0
  (36):\penalty0 1--60, 2025.

\end{thebibliography}


\begin{thebibliography}{2}
\providecommand{\natexlab}[1]{#1}
\providecommand{\url}[1]{\texttt{#1}}
\expandafter\ifx\csname urlstyle\endcsname\relax
  \providecommand{\doi}[1]{doi: #1}\else
  \providecommand{\doi}{doi: \begingroup \urlstyle{rm}\Url}\fi

\bibitem[Chernozhukov et~al.(2018)Chernozhukov, Chetverikov, Demirer, Duflo,
  Hansen, Newey, and Robins]{chernozhukov2018double}
Victor Chernozhukov, Denis Chetverikov, Mert Demirer, Esther Duflo, Christian
  Hansen, Whitney Newey, and James Robins.
\newblock Double/debiased machine learning for treatment and structural
  parameters, 2018.

\bibitem[Yadlowsky et~al.(2022)Yadlowsky, Namkoong, Basu, Duchi, and
  Tian]{yadlowsky2022bounds}
Steve Yadlowsky, Hongseok Namkoong, Sanjay Basu, John Duchi, and Lu~Tian.
\newblock Bounds on the conditional and average treatment effect with
  unobserved confounding factors.
\newblock \emph{Annals of statistics}, 50\penalty0 (5):\penalty0 2587, 2022.

\end{thebibliography}
	\end{document}